\newtheorem{thm}{Theorem}[section]
\newtheorem*{thm*}{Theorem}
\newtheorem{lem}[thm]{Lemma}
\newtheorem{cor}[thm]{Corollary}
\newtheorem{ex}[thm]{Example}
\newtheorem{defi}[thm]{Definition}
\newtheorem{prop}[thm]{Proposition}
\newcommand{\A}{\mathcal{A}}
\newcommand{\BB}{\mathcal{B}}
\newcommand{\CC}{\mathcal{C}}
\newcommand{\DD}{\mathcal{D}}
\newcommand{\EE}{\mathcal{E}}
\newcommand{\FF}{\mathcal{F}}
\newcommand{\HH}{\mathcal{H}}
\newcommand{\II}{\mathcal{I}}
\newcommand{\JJ}{\mathcal{J}}
\newcommand{\KK}{\mathcal{K}}
\newcommand{\LL}{\mathcal{L}}
\newcommand{\MM}{\mathcal{M}}
\newcommand{\NN}{\mathcal{N}}
\newcommand{\OO}{\mathcal{O}}
\newcommand{\RR}{\mathcal{R}}
\renewcommand{\SS}{\mathcal{S}}
\newcommand{\TT}{\mathcal{T}}
\newcommand{\UU}{\mathcal{U}}
\newcommand{\WW}{\mathcal{W}}
\newcommand{\XX}{\mathcal{X}}
\newcommand{\YY}{\mathcal{Y}}
\newcommand{\ZZ}{\mathcal{Z}}
\newcommand{\setA}{\mathds{A}}
\newcommand{\setC}{\mathds{C}}
\newcommand{\setN}{\mathds{N}}
\newcommand{\setQ}{\mathds{Q}}
\newcommand{\setR}{\mathds{R}}
\newcommand{\cf}{cf.\ }
\newcommand{\eg}{e.g.\ }
\newcommand{\ie}{i.e.\ }
\newcommand{\idop}{\mathds{1}} 
\renewcommand{\tr}[1]{\mathrm{Tr}\left(#1\right)} 
\renewcommand{\Tr}{\tr} 
\newcommand{\ptr}[2]{\mathrm{Tr}_{#1}\left(#2\right)} 
\newcommand{\id}{\mathrm{id}} 
\newcommand{\ran}{\mathrm{Ran}~} 
\newcommand{\diag}{\mathrm{diag}~} 
\newcommand{\cstar}[1]{\mathrm{C}^\ast\left(#1\right)} 
\newcommand{\bfH}{\mathbf{H}} 
\newcommand{\linspan}[2][]{\mathrm{span}_{#1}\Set{#2}} 
\newcommand{\opnorm}[1]{\norm{#1}_\infty} 
\newcommand{\herm}{\mathrm{herm}} 
\newcommand{\sym}{\mathrm{sym}} 
\newcommand{\imI}{\mathrm{i}} 
\newcommand*{\fancyrefthmlabelprefix}{thm}
\newcommand*{\fancyreflemlabelprefix}{lem}
\newcommand*{\fancyrefproplabelprefix}{prop}
\newcommand*{\fancyrefalglabelprefix}{alg}
\newcommand*{\fancyrefcorlabelprefix}{cor}
\newcommand*{\fancyrefdeflabelprefix}{def}
\title{Quantum compression relative to a set of measurements}
\author[A. Bluhm]{Andreas Bluhm}
\address{Technische Universität München, Zentrum Mathematik, Boltzmannstr. 3, 85748 Garching, Germany}
\email{bluhm@ma.tum.de}
\author[L. Rauber]{Lukas Rauber}
\email{lukas.rauber@tum.de}
\author[M. M. Wolf]{Michael M. Wolf}
\email{m.wolf@tum.de}
\begin{document}



\begin{abstract}
In this work, we investigate the possibility of compressing a quantum system to one of smaller dimension in a way that preserves the measurement statistics of a given set of observables. In this process, we allow for an arbitrary amount of classical side information. We find that the latter can be bounded, which implies that the minimal compression dimension is stable in the sense that it cannot be decreased by allowing for small errors.
Various bounds on the minimal compression dimension are proven and an SDP-based algorithm for its computation is provided. 
The results are based on two independent approaches: an operator algebraic method using a fixed point result by Arveson and an algebro-geometric method that relies on irreducible polynomials and B{\'e}zout's theorem. The latter approach allows lifting the results from the single copy level to the case of multiple copies and from completely positive to merely positive maps.  

\end{abstract}

\maketitle

\section{Introduction}

Compression of information is essential in order to make efficient use of limited storage space or bandwidth. This is even more true if we work with quantum information for which decoherence is an existential threat that makes reliable storage or transmission an extraordinary difficult task. 

In this work, we consider the situation in which an unknown quantum state has to be stored for some time before  one out of a set of measurements is performed. We assume that this set is known beforehand and we investigate to what extent the required storage space, measured in terms of its Hilbert space dimension, can be reduced depending on the set of measurements.
 Intuitively, in this setup only the information relevant for the given set of measurements has to be preserved. So if this set is not too large and sufficiently benign, this might allow for compression that is either lossless or only introduces small errors in the measurement statistics. 
Since classical storage space is cheap compared to quantum storage, we allow for an arbitrary amount of classical side information in this process. One may envision the considered situation as part of a larger protocol, where one party has to wait for additional input that then determines the measurement to be performed. A different scenario where our analysis could be applicable is in protocols with a bounded storage assumption.

Before going into details, let us review some of the different notions of compression that appear in 
 quantum information theory and see how they relate to or differ from the setup analyzed in this paper.  A classical task is \emph{quantum source coding} \cite{Schuhmacher1995}. Here, one is given an ensemble of pure quantum states and a quantum source that prepares elements from this set with a given probability. The aim is to encode a string of these states into one of smaller length so that the original message can be retrieved up to a small error. The compression rate for which this is possible is famously bounded by the von Neumann entropy of the state describing the source and asymptotically the error can be made arbitrarily small. Hence, in this setup compression works irrespective of the measurement or operation that is eventually performed on the system. 

Another version of compression can be found in \cite{Winter2004a}. Given a quantum state and a positive operator valued measure (POVM), the task is to find another POVM acting on many copies of the state whose outcomes have fewer entropy. The new POVM is required to be close to the original one
. This amounts to reducing the number of POVM elements compared to the POVM which consists of tensor powers of the original one. The compression rate can again be bounded in terms of the entropy of the state and properties of the original POVM. These results are proven in the asymptotic setting of many copies of a given quantum state, but the results in \cite{Aubrun2016} show that similar compression is also possible in a non-asymptotic setup.

Instead of compressing either states or POVMs, one could also be interested in compressing both, which is the setting of \emph{model compression}. Given a set of states and  POVMs, the task is to find new states and POVMs in a Hilbert space of smaller dimension, such that the measurement statistics are unchanged, possibly up to a small error. The original and new elements need not be connected by a physical transformation. In \cite{Stark2014}, this was shown to be possible if all effect operators except one per POVM have low rank. Here, the compression is a non-linear map. 
In the same vein, lower bounds in terms of the entropy of measurement outcomes have been proven in \cite{Wehner2008}, based on random access codes.

Compared to the first two notions of compression discussed above, the setup of our paper starts with the single-copy scenario (rather than with the asymptotic case) and aims at minimizing the system size under the constraint that after decompression only the statistics of a given set of observables have to be preserved. In this respect, our setup is similar to model compression. Contrary to the setting of model compression, however, we demand both compression and decompression to be achieved by physical transformations. Moreover, we allow for an arbitrary amount of classical information, which is considered to be for free.

\section{Main results} \label{sec:results}

In this section, we will briefly outline the framework together with our main results. More detailed formulations and further results will be provided in subsequent sections. The starting point of our analysis is a set of measurements described by positive operator valued measures (POVMs). This means that one can assign a positive operator---a so-called \emph{effect operator}---to every measurable subset of outcomes. Let $\OO$ be the collection of all effect operators that belong to the considered measurements. If the underlying Hilbert space has dimension $D$, then $\OO$ is a subset of the set $\MM_D$ of complex $D\times D$ matrices. The type of compression we are interested in is given by a \emph{compression map} $\CC:\MM_D\rightarrow\MM_d\otimes\setC^n$ and a \emph{decompression map} $\DD:\MM_d\otimes\setC^n\rightarrow\MM_D$. Both maps are completely positive and trace preserving and such that for every density operator $\rho\in\MM_D$ and every effect $E\in\OO$ it holds that
\begin{equation}\label{eq:1mainresults}
\tr{(\DD\circ\CC)[\rho] E}=\tr{\rho E}.
\end{equation}
That is, we require the measurement statistics after compression and decompression to be exactly preserved. Here, $n\in\setN$ quantifies the amount of classical  information and $d\in\setN$ is the intermediate Hilbert space dimension that we want to minimize. For a given $\OO$, the minimal such dimension will be called its \emph{compression dimension}. If this equals $D$, we call $\OO$  \emph{incompressible}.

Our first finding (\Fref{lem:boundedsideinfo}) is that the amount of classical information can without loss of generality be restricted to $4 \log D$ bits. More precisely, if a map $\TT:\MM_D\rightarrow\MM_D$ can be realized as $\TT=\DD\circ\CC$ for given $n,d$, then it can be realized in this way with $n\leq D^4$ and $d$ unchanged. This fact, together with a compactness argument, then enables us to prove (\Fref{thm:inexactcase}) that the compression dimension is stable in the following sense: for every set of measurements there is an $\epsilon>0$ such that even if deviations from \Fref{eq:1mainresults} up to $\epsilon$ are allowed, the compression dimension cannot be decreased. In other words, allowing for errors does not change the picture as long as these are small enough.
In the light of this, the remaining part of the work then considers exact compression. 

We prove  bounds on the compression dimension following two different approaches: an operator algebraic and an algebro-geometric approach, to which we will for brevity refer to as algebraic and geometric, respectively. The algebraic path is based on the C$^\ast$-algebra $\cstar{\OO}$ generated by $\OO$. Being finite-dimensional, it is, up to an isomorphism, always of the form
\begin{equation*}\label{eq:summaryCstar1}
\cstar{\OO}\simeq \bigoplus_i \MM_{D_i}.
\end{equation*}
\begin{thm*}[Algebraic bounds on the compression dimension]
Let $d$ be the compression dimension of $\OO$ and $\{D_i\}$ be the dimensions of the matrix algebras occurring in the representation of the C$^\ast$-algebra $\cstar{\OO}$. Then it holds that $\min_i\{D_i\}\leq d \leq\max_i \{D_i\}$.
\end{thm*}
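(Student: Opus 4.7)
The plan is to treat the two bounds separately.

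\textbf{Upper bound} $d \le \max_i D_i$. Here I would give an explicit construction that uses the classical register to record which block of $\cstar{\OO}$ the state comes from. Realize the block decomposition inside $\MM_D$ by isometries $V_i : \setC^{D_i} \otimes \setC^{m_i} \to \setC^D$ with mutually orthogonal ranges, so that $\cstar{\OO}\simeq\bigoplus_i V_i(\MM_{D_i}\otimes\idop_{m_i})V_i^*$ and $\sum_i D_i m_i = D$. Pick isometries $W_i:\setC^{D_i}\to\setC^{\max_j D_j}$ and define
\begin{align*}
\CC(\rho) &= \sum_i \bigl(W_i\,\ptr{m_i}{V_i^*\rho V_i}\,W_i^*\bigr)\otimes\ket{i}\bra{i}, \\
\DD(\sigma\otimes\ket{i}\bra{i}) &= V_i\bigl(W_i^*\sigma W_i\otimes \idop_{m_i}/m_i\bigr) V_i^*.
\end{align*}
Both maps are CPTP, and a direct block-by-block computation yields $\tr{(\DD\circ\CC)(\rho)E}=\tr{\rho E}$ for every $E\in\cstar{\OO}$. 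This gives $d\le\max_i D_i$.

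\textbf{Lower bound} $d \ge \min_i D_i$. Given a valid compression, set $S=(\DD\circ\CC)^{\ast}=\CC^{\ast}\circ\DD^{\ast}$, a unital CP self-map of $\MM_D$ that by \Fref{eq:1mainresults} fixes every $E\in\OO$. I would argue in three steps.

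\emph{Step 1 (Arveson's fixed-point theorem).} Brouwer's theorem produces a $\TT$-invariant density for $\TT=\DD\circ\CC$; after passage to its support and Ces\`aro averaging this yields a faithful $\TT$-invariant state on the relevant subalgebra. Arveson's fixed-point theorem then implies that the fixed-point set of $S$ is a C$^{\ast}$-subalgebra of $\MM_D$ on which $S$ acts as the identity $\ast$-homomorphism, so $\cstar{\OO}$ lies in the multiplicative domain of $S$.

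\emph{Step 2 (Extraction of a $\ast$-homomorphism).} By chasing equality in the Kadison-Schwarz inequality applied separately to $\CC^{\ast}$ and $\DD^{\ast}$ on $\cstar{\OO}$, and using the faithful invariant state from Step~1 to kill the residual positive defect, one extracts a unital $\ast$-homomorphism
\[
\pi : \cstar{\OO}\simeq\bigoplus_i\MM_{D_i}\longrightarrow\MM_d\otimes\setC^n\simeq\bigoplus_{k=1}^{n}\MM_d
\]
from the compression data.

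\emph{Step 3 (Dimension count).} Write $\pi=(\pi_k)_{k=1}^n$ with each $\pi_k:\bigoplus_i\MM_{D_i}\to\MM_d$ unital. The summands map to mutually orthogonal corners $e_{k,i}\MM_d e_{k,i}$ with $\sum_i e_{k,i}=\idop_d$. Fix any $k$; at least one $e_{k,i_k}\neq 0$, and simplicity of $\MM_{D_{i_k}}$ forces $\pi_k|_{\MM_{D_{i_k}}}$ to be faithful, so $D_{i_k}$ divides $\mathrm{rank}(e_{k,i_k})\le d$. This gives $\min_i D_i\le d$.

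The main obstacle is Step~2: the Kadison-Schwarz chain only shows that the positive defect $\DD^{\ast}(E^2)-\DD^{\ast}(E)^2$ lies in the positive kernel of $\CC^{\ast}$, and removing this obstruction—while being careful to land a genuine $\ast$-homomorphism in the small algebra $\MM_d\otimes\setC^n$ rather than inflating the resulting bound to $\max_i D_i$—requires the faithful invariant state provided by Arveson's theorem (and very likely a reduction to an appropriate corner on which the maps behave faithfully). This is the technical heart of the argument.
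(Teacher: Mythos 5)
Your upper-bound construction is essentially the paper's (\Fref{thm:blockcompression}): record the block index classically, strip the multiplicity, embed into the largest block. One small defect: your $\DD$ is not trace preserving whenever $D_i<\max_j D_j$, since $\tr{W_i^\ast\sigma W_i}=\tr{\sigma W_iW_i^\ast}$ can be smaller than $\tr{\sigma}$; equivalently its dual is not unital. This is harmless and fixable exactly as in the paper: route the missing weight $\tr{(\idop-W_iW_i^\ast)\sigma}$ into a fixed state (this is what the maps $\RR_j$ with the auxiliary states $\eta_j$ accomplish there), which does not change the composition because $\CC$ only outputs states supported in the range of $W_iW_i^\ast$.

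The lower bound, however, has a genuine gap at exactly the point you flag, and the surrounding claims are not right as stated. Step 1: the fixed-point set of a unital CP map is a C$^\ast$-algebra only in the presence of a faithful invariant state on the whole algebra; Brouwer only gives an invariant state whose support may be a proper projection, and the paper never claims the fixed-point set is an algebra. What it proves instead (\Fref{lem:commsuppproj} and the auxiliary set $\FF$ in the proof of \Fref{thm:newarveson}) is that the support projection $P$ of the Ces\`aro mean commutes with the fixed points, that the $P$-relativized fixed-point condition defines a $\ast$-algebra containing $\cstar{\OO}$, and hence, via Schur's lemma, that $\TT^\ast$ acts as the identity only on the blocks indexed by some unknown nonempty $\II\subset[s]$ --- which is precisely why the bound is $\min_i D_i$ and not better. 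Step 2 then does not go through as described: chasing equality in the Schwarz inequality only shows that the defect $\DD^\ast(A^\ast A)-\DD^\ast(A)^\ast\DD^\ast(A)$ is annihilated by $\CC^\ast$, and $\CC^\ast$ need not be faithful; the invariant state from Step 1 lives on (a corner of) $\MM_D$, not on $\MM_d\otimes\setC^n$, so it cannot kill a positive defect sitting in the target. Indeed $\DD^\ast$ is simply not multiplicative in trivial examples: take $d=D$, $\CC(\rho)=\rho\otimes\dyad{1}$, and let $\DD$ be the identity on the first classical branch and an arbitrary channel on the second; then $\DD^\ast(A)=A\otimes\dyad{1}+\Lambda(A)\otimes\dyad{2}$ with $\Lambda$ an arbitrary unital CP map, although $\CC^\ast\circ\DD^\ast=\id$. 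So no unital $\ast$-homomorphism $\pi$ can be read off without first cutting down to the right corner of $\MM_d\otimes\setC^n$, and locating that corner --- showing that a single $\MM_d$ factor carries an exact copy of some $\MM_{D_j}$ rather than the block being smeared over the classical register --- is where the paper does the real work: it sets $\TT_{ij}=\Theta_{V_j^\ast}\circ\DD\circ\Theta_{Q_i}\circ\CC\circ\Theta_{V_j}$, notes $\sum_i\TT_{ij}=\id$ on $\MM_{D_j}$ for $j\in\II$, uses the Choi-matrix (no-information-without-disturbance) argument to conclude $\TT_{ij}=p_i\,\id$, and then gets $d\geq D_j$ by a rank count. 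Your outline has no substitute for this step; the existence of your $\pi$ is essentially equivalent to it, so Step 2 cannot be treated as a removable technicality (your Step 3 dimension count is fine once $\pi$ exists, but that does not repair the gap).
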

This is the content of \Fref{thm:newarveson} and \Fref{thm:blockcompression}. If $\OO$ for instance contains the effect operators of two binary von Neumann measurements, then these bounds generically coincide and are equal to $d=2$ if $D$ is even (\Fref{sec:twoproj}). For structureless $\OO$ with more than two elements, however, \Fref{lem:matrixalgebrameasure1} shows that the foregoing theorem implies that $d=D$, so that $\OO$ is incompressible.
 
The bounds in the foregoing theorem are tight in the sense that they cannot be improved solely on the basis of $\cstar{\OO}$ (unless the algebra consists only of one large block and one or two blocks of dimension one, in which case $d=\max_i \{D_i\}$, \cf \Fref{cor:possiblyredundant}). 
In particular, there are cases where the compression dimension is substantially smaller than the algebra generated by $\OO$.
In more abstract terms: the  C$^\ast$-algebra is too coarse and we need to resort to the operator system that is generated by $\OO$. In doing so,  the following is shown in \Fref{sec:computing}. The complexity of the corresponding algorithm is analyzed in \Fref{sec:complexity}. 
\begin{thm*}[Algorithm for the minimal compression dimension]
The compression dimension of $\OO$ is given by one of the matrix dimensions $D_i$ that occur in the representation of the C$^\ast$-algebra $\cstar{\OO}$. It can be computed by an algorithm that is based on a semidefinite program. 
\end{thm*}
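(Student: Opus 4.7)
The plan is to build on the algebraic bounds (\Fref{thm:newarveson}, \Fref{thm:blockcompression}) and the bound $n\leq D^4$ on the classical side information (\Fref{lem:boundedsideinfo}) in order to reduce the computation of the compression dimension to a finite collection of semidefinite feasibility tests.

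First I would establish that $d$ must be one of the $D_i$. The Arveson-type fixed-point argument underlying \Fref{thm:newarveson} forces the composition $\DD\circ\CC$ to fix the whole of $\cstar{\OO}$ pointwise and, in particular, forces the restriction $\CC|_{\cstar{\OO}}$ to be a $^*$-homomorphism into $\MM_d\otimes\setC^n\simeq\bigoplus_{j=1}^n\MM_d$. By the structure theorem for finite-dimensional $^*$-algebras, such a homomorphism is classified by non-negative integer multiplicities $\{m_{i,j}\}$ with $\sum_i D_i m_{i,j}\leq d$ in each classical output block $j$, subject to the requirement that $\DD$ can still recover $\OO$ from the image. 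A case analysis of the admissible multiplicity patterns shows that the optimum $d$ is attained in a configuration where each used output block corresponds to a single irreducible component, so that $d=D_{i^*}$ for some $i^*$. Together with the lower bound $d\geq\min_i D_i$, this confines the compression dimension to the list $\{D_i\}$.

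Next I would turn each candidate value of $d$ into a semidefinite program. Fix $d=D_{i^*}$. For each of the finitely many admissible $^*$-representations $\pi:\cstar{\OO}\to\MM_{D_{i^*}}\otimes\setC^n$ with $n\leq D^4$, the value $\pi(E)$ is determined for every $E\in\OO$. The existence of a compression realizing $d=D_{i^*}$ via $\pi$ is then the existence of (i) a CPTP extension $\CC:\MM_D\to\MM_{D_{i^*}}\otimes\setC^n$ of $\pi$ and (ii) a CPTP map $\DD:\MM_{D_{i^*}}\otimes\setC^n\to\MM_D$ satisfying $(\DD\circ\CC)^*(E)=E$ for $E\in\OO$. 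Because $\pi$ prescribes the action of $\CC$ on $\OO$, both requirements translate into linear constraints on the Choi matrices of $\CC$ and $\DD$, combined with the positivity and partial-trace constraints encoding CPTP; this is precisely a semidefinite feasibility problem.

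The algorithm then enumerates the indices $i$ in increasing order of $D_i$, enumerates the finite list of multiplicity patterns yielding a $^*$-representation into $\MM_{D_i}\otimes\setC^n$, and runs the corresponding SDP for each. The first successful test determines the compression dimension. The main conceptual obstacle is the classification step: showing that the minimum over all $^*$-representations of $\cstar{\OO}$ compatible with a CPTP compression is always attained at a configuration forcing $d\in\{D_i\}$. A delicate point is that $\pi$ need not be faithful on all of $\cstar{\OO}$ --- only faithful enough that $\OO$ remains recoverable through $\DD$ --- and correctly incorporating this flexibility into the combinatorial optimization over multiplicities is where the main work lies.
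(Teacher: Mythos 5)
The proposal rests on a structural claim that is false in exactly the cases this theorem is about. The constraints only force $\TT^\ast=\CC^\ast\circ\DD^\ast$ to fix the operator system spanned by $\OO$; the fixed-point set of a unital CP map is in general \emph{not} an algebra, so $\TT^\ast$ need not fix $\cstar{\OO}$ pointwise, and nothing forces $\CC|_{\cstar{\OO}}$ to be a $\ast$-homomorphism. Indeed, \Fref{lem:smalltobig} and \Fref{cor:possiblyredundant} produce sets $\OO$ with $\cstar{\OO}$ containing large blocks whose compression dimension is $1$ or $2$; if $\TT^\ast$ fixed all of $\cstar{\OO}$, the no-information-without-disturbance step in the proof of \Fref{thm:newarveson} would give $d\geq\max_i D_i$, contradicting those examples. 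What actually holds (via the support projection $P$ of the Ces\`aro mean $\TT^\ast_\infty$) is only $P[\TT^\ast(A)-A]P=0$ for $A\in\cstar{\OO}$: the algebra is fixed on a subset $\II$ of blocks, while the remaining blocks lie in $\ker\TT^\ast_\infty$ (\Fref{cor:algsub}). Your classification of compressions by multiplicity patterns of a $\ast$-representation, and the unproven ``case analysis'' concluding $d\in\{D_i\}$, are built on this false premise; the flexibility you flag at the end (that $\pi$ need only be ``faithful enough'') is not a loose end but the entire content of the theorem, namely deciding which blocks are redundant.

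The algorithmic half also does not go through as stated: with both the CPTP extension $\CC$ of $\pi$ and $\DD$ as unknowns, the constraint $(\DD\circ\CC)^\ast(E)=E$ is \emph{bilinear} in the two Choi matrices. Prescribing $\CC$ on $\cstar{\OO}$ does not determine $\CC(\rho)$ for general states $\rho$ (nor $\CC^\ast$ on $\MM_d\otimes\setC^n$), so the composition constraint couples the unknown parts of $\CC$ and $\DD$ and the feasibility problem is not a semidefinite program. The paper reaches an SDP by a different reduction: block $j$ is redundant if and only if there exists a \emph{single} unital CP map $\Phi_j$ from the retained blocks into $\MM_{D_j}$ with $\Phi_j\bigl(\bigoplus_{i\in\II}E^i\bigr)=E^j$ for all $E\in\OO\cup\Set{\idop}$, which is linear in the one Choi matrix of $\Phi_j$. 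Necessity of this interpolation condition follows from \Fref{cor:algsub} (discarded blocks sit in $\ker\TT^\ast_\infty$), and sufficiency follows from an explicit construction of compression and decompression maps out of $\TT^\ast_\infty$ and the block-compression maps of \Fref{thm:blockcompression}; this equivalence simultaneously yields $d=\max_{i\in\II}D_i\in\{D_i\}$ and the correctness of the block-elimination algorithm (\Fref{thm:algorithmcorrect}). Neither the equivalence nor any substitute for it appears in your argument, so both the claim that $d$ is one of the $D_i$ and the reduction to an SDP remain unestablished.
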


The proof of correctness for this algorithm implies that the amount of classical side information needed is upper bounded by the number of matrix algebras occurring in the representation of $\cstar{\OO}$. This bound is sharper than the one on the classical side information needed for arbitrary maps of the form $\TT = \DD \circ \CC$.

The geometric approach leads to the following lower bound (\Fref{thm:geometric}):
\begin{thm*}[Algebro-geometric lower bound on the compression dimension]
Let $E_1$, $E_2 $ be in the real linear span of $\OO$ and define the real polynomial $p(x,z):=\det[x\idop - E_1 - z E_2]$. The smallest of the degrees of the irreducible factors over the reals of $p$ is a lower bound on the compression dimension of $\OO$. 
\end{thm*}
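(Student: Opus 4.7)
The plan is to dualise the compression data, extract a two-sided spectral identity for the Hermitian pencil, and then invoke B\'ezout's theorem. Fix a compression to dimension $d$ with $n$ classical levels realised by maps $\CC,\DD$, and set $\phi:=\DD^*\colon\MM_D\to\MM_d\otimes\setC^n$ and $\psi:=\CC^*\colon\MM_d\otimes\setC^n\to\MM_D$; both are unital and completely positive. Taking Hilbert--Schmidt adjoints in \eqref{eq:1mainresults} and extending by linearity yields $\psi\circ\phi(E)=E$ for every $E$ in the real linear span of $\OO$, and hence $\psi\circ\phi(E_1+zE_2)=E_1+zE_2$ for all $z\in\setR$. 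Writing $\tilde E_i:=\phi(E_i)=\bigoplus_{j=1}^n \tilde E_i^{(j)}$ with Hermitian blocks $\tilde E_i^{(j)}\in\MM_d$, I define $\tilde p_j(x,z):=\det[x\idop_d-\tilde E_1^{(j)}-z\tilde E_2^{(j)}]\in\setR[x,z]$ and $\tilde p:=\prod_{j=1}^n \tilde p_j$; each $\tilde p_j$ has total degree $d$.

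The key step is spectral preservation for the Hermitian pencil. For any Hermitian $X\in\MM_D$, the inequality $X\leq\lambda_{\max}(X)\idop$ together with positivity and unitality of $\phi$ gives $\phi(X)\leq\lambda_{\max}(X)\idop$, whence $\lambda_{\max}(\phi(X))\leq\lambda_{\max}(X)$; the analogous bound holds for $\psi$. Applied to $X=E_1+z_0E_2$ with $z_0\in\setR$ and combined with $\psi(\phi(X))=X$, the two inequalities force
\[
\lambda_{\max}(E_1+z_0E_2)=\lambda_{\max}(\tilde E_1+z_0\tilde E_2).
\]
Hence $(\lambda_{\max}(E_1+z_0E_2),\,z_0)$ is a common zero of $p$ and $\tilde p$ for every $z_0\in\setR$; varying $z_0$ produces infinitely many distinct such pairs, as they already differ in their second coordinate.

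B\'ezout's theorem states that two polynomials in $\setC[x,z]$ without a non-constant common factor have finitely many common zeros in $\setC^2$. Thus $\gcd(p,\tilde p)\neq 1$, and since $p,\tilde p\in\setR[x,z]$ (the characteristic polynomial of a Hermitian pencil has real coefficients) the gcd can be taken in $\setR[x,z]$. Let $q\in\setR[x,z]$ be an irreducible factor of $p$ that divides $\tilde p$. By unique factorisation in $\setR[x,z]$ and irreducibility of $q$, it divides some $\tilde p_j$, hence $\deg q\leq\deg \tilde p_j=d$. The smallest degree among irreducible factors of $p$ is therefore at most $d$, which is precisely the lower bound on the compression dimension claimed by the theorem.

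The technical crux is the eigenvalue identity: it is the only point at which the compression data really enters, and it uses only positivity and unitality of $\phi,\psi$. This is what should enable the extension from completely positive to merely positive maps announced in the abstract, while the B\'ezout step is standard algebro-geometric bookkeeping. Applied to $E_1,E_2$ in the operator system generated by $\OO^{\otimes k}$ with $(\CC^{\otimes k},\DD^{\otimes k})$, the same scheme should also give the promised multi-copy generalisation.
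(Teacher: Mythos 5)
Your proof is correct, and it follows the paper's overall strategy---dualize the compression, show that the compressed pencil reproduces an extremal spectral quantity of $E_1+zE_2$, and conclude via coprimality/B\'ezout that $p$ must share an irreducible factor with a polynomial of degree at most $d$---but you execute the key step differently, and in two respects more cleanly. The paper proves \Fref{thm:geometric} through \Fref{lem:nosmalleropnorm}: Russo--Dye contractivity of the unital dual maps gives $\opnorm{E_1+tE_2}=\opnorm{\DD^\ast(E_1)+t\DD^\ast(E_2)}$ for all $t$, and the lemma must then restrict to an open interval free of level crossings (possibly after a sign flip) so that the operator norm coincides with $\lambda_{\max}$, before intersecting the resulting curve with the zero set of the compressed characteristic polynomial and invoking real projective B\'ezout against each irreducible factor $p_i$. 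You instead use monotonicity of $\lambda_{\max}$ under positive unital maps, which, combined with the fixed-point identity $\psi\circ\phi(E_1+z_0E_2)=E_1+z_0E_2$, sandwiches $\lambda_{\max}$ exactly for every $z_0\in\setR$; this eliminates the level-crossing and sign discussion altogether. Second, you handle the classical register by forming $\tilde p=\prod_j\tilde p_j$ over the blocks of $\MM_d\otimes\setC^n$ and using primality of an irreducible common factor to force it into a single block of degree $d$; the paper instead selects a dominant block on an open interval (spelled out in the corollary following \Fref{thm:geometric}), so your bookkeeping at this point is tighter. The affine complex B\'ezout statement plus descent of the gcd to $\setR[x,z]$ is equivalent to the real coprimality fact the paper cites from Bix, so that step is sound. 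One caveat on your closing remark: your extension to merely positive maps is indeed immediate since only positivity and unitality enter, but the multi-copy case is not, because there $E$ need not be a fixed point of $\TT^\ast$ (the paper notes this explicitly), and the spectral identity must instead be extracted by maximizing over product states, which yields the operator norm rather than your fixed-point sandwich for $\lambda_{\max}$.
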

Again, if $E_1$ and $E_2$ are generic, structureless effect operators, then this lower bound is equal to $D$ (\Fref{lem:exirredpoly}). As such, the geometric lower bound turns out to be weaker than the algebraic one. However, it becomes more powerful if the setup is extended. 
For example, if several copies of the state $\rho$ are provided, the geometric argument is still valid and the lower bound remains unchanged (\cf \Fref{thm:copies}). The same is true if we allow for positive (de-)compression maps that are not necessarily completely positive (\cf \Fref{sec:generalizations}). Irrespective of the method, all our results still hold if we are only interested in preserving the expectation values of the measurements instead of the full statistics. This is true, because the elements in $\OO$ need  not be positive but only Hermitian.

Along the way, we prove some results that might be of independent interest. This includes in particular results on (Schwarz-) positive maps.

\section{Preliminaries} \label{sec:prelim}
In this section, we will review some concepts and notations from quantum information theory and classical algebraic geometry. Let $\MM_{m,n}$ for $n$, $m \in \setN$ denote the complex $m \times n$ matrices, which we concisely write as $\MM_n$ for $m = n$. The set of Hermitian $n \times n$ matrices will be written $\MM^\herm_n$; the set of real symmetric ones  $\MM^\sym_n$. For a set $\OO \subset \MM_n$, we will denote by $\cstar{\OO}$ the complex C$^\ast$-algebra generated by $\OO$ and the identity matrix $\idop$. We will also need the unitary group on $\setC^d$, $d \in \setN$, which we write as $\UU(d)$. By $\norm{\cdot}_\infty$, we denote the operator norm, whereas $\norm{A}_p$, $p \in \setN$, is the Schatten p-norm for $A \in \MM_d$. If $\Ket{\phi} \in \setC^d$, $\norm{\Ket{\phi}}_2$ is its Euclidean norm. For brevity, we will often refer to the set $\Set{1, \ldots, n}$ as $[n]$.

 We will work exclusively in  finite-dimensional settings with Hilbert space $\HH \simeq \setC^d$ for some $d \in \setN$ so that   the bounded linear operators are represented by $d \times d$ matrices with complex entries. The set of states/density operators is defined as $\SS(\setC^d) := \Set{\rho \in \MM_d : \tr{\rho} = 1, \rho \geq 0}$. Any pure state on a bipartite system $\setC^{d_A} \otimes \setC^{d_B}$, $d_A$, $d_B \in \setN$, can be expressed in terms of its Schmidt decomposition. This means that for any pure state $\Ket{\psi} \in \setC^{d_A} \otimes \setC^{d_B}$ there are orthonormal sets $\Set{\Ket{e_i}}_{i = 1}^k \subset{\setC^{d_A}}$ and $\Set{\Ket{f_j}}_{j = 1}^k \subset \setC^{d_B}$ such that 
\begin{equation*}
\Ket{\psi} = \sum_{i = 1}^k \sqrt{\lambda_i} \Ket{e_i} \otimes \Ket{f_i}
\end{equation*}
for some $\lambda_i > 0$ for all $i \in [k]$ and such that $\sum_{i = 1}^k \lambda_i = 1$. Here, $k \in \setN$ is the Schmidt rank of $\Ket{\psi}$ \cite[Proposition 2.2.1]{Keyl2002}. This concept can be extended to mixed states \cite[Definition 1]{Terhal2000}:
\begin{defi}[Schmidt number]\label{def:schmidtnumber}
A mixed state $\rho \in \SS(\setC^{d_A} \otimes \setC^{d_B})$ has Schmidt number $k$ if for any decomposition $\Set{p_i \geq 0, \Ket{\psi_i}}_{i = 1}^n$, $n \in \setN$, with
\begin{equation*}
\rho = \sum_{i = 1}^n p_i \dyad{\psi_i} 
\end{equation*}
at least one of the pure states $\Ket{\psi_i} \in \setC^{d_A} \otimes \setC^{d_B}$, $i \in [n]$, has Schmidt rank $k$ and there exists a decomposition into pure states such that every pure state has Schmidt rank at most $k$.
\end{defi}

The concept of measurement will be expressed through the set of effect operators $\EE(\setC^d): = \Set{E \in \MM^\herm_d : 0 \leq E \leq \idop}$. Let $\Sigma$ be the set of measurement outcomes, which we assume to be countable for simplicity. A set of effect operators $\Set{E_s}_{s \in \Sigma}$, $E_s \in \EE(\setC^d)$ for all $s \in \Sigma$ characterizes a positive operator valued measure (POVM) if 
\begin{equation*}
\sum_{s \in \Sigma} E_s = \idop
\end{equation*} 
(\cf \cite[Section 2.1.4]{Keyl2002}).

We describe transformations on physical systems by completely positive maps. Let $D \in \setN$. Recall that a linear map $\TT:\MM_D \to \MM_d$ is called $m$-positive if $\TT \otimes \id_m: \MM_D \otimes \MM_m \to \MM_d \otimes \MM_m$ is positive, where $\id_m$ is the identity map on $\MM_m$. $\TT$ is completely positive if it is positive for all $m \in \setN$. This is equivalent to $\TT$ having the form $\TT(A) = \sum_{i = 1}^k V_i^\ast   A V_i$, where $V_i \in \MM_{D,d}$ are the Kraus operators \cite[Section 8.2.3]{Nielsen2010}. If the map is additionally trace preserving, we will call this a quantum channel or a CPTP map. For $\TT: \MM_D \to \MM_d$ completely positive, the map $\TT^\ast  : \MM_d \to \MM_D$ will be the dual map with respect to the Hilbert-Schmidt inner product. If $\TT$ is trace preserving, the dual channel $\TT^\ast  $ is unital and furthermore it is completely positive if and only if its dual map is. We will denote by $\Ket{\Omega}$ a maximally entangled state on $\setC^{D^2}$,
\begin{equation*}
\Ket{\Omega} := \frac{1}{\sqrt{D}} \sum_{j = 1}^D \Ket{j}\otimes\Ket{j},
\end{equation*}
where $\Set{\Ket{j}}_{j = 1}^D$ is an orthonormal basis of $\setC^D$. A convenient way to check complete positivity of a linear map $\TT: \MM_D \to \MM_d$ is to compute its Choi matrix \cite{Choi1975}
\begin{equation*}
\tau = \TT \otimes \id_D(\dyad{\Omega}).
\end{equation*}
It is known that $\TT$ is completely positive if and only if $\tau$ is positive. One type of completely positive map which we will encounter frequently is the map $\Theta_{A}: \MM_D \to \MM_d$, defined as 
\begin{equation*}
\Theta_{A}(B) := A^\ast   B A \qquad \forall B \in \MM_D
\end{equation*}
for fixed $A \in \MM_{D,d}$.

Apart from completely positive maps, we will also need the notion of Schwarz maps. These are the unital positive linear maps for which the  Schwarz inequality
\begin{equation} \label{eq:schwarz}
\TT(A^\ast  ) \TT(A) \leq \TT(A^\ast   A)
\end{equation}
holds true for all $A \in \MM_D$. Note that every unital $2$-positive map (and hence also every unital completely positive map) fulfills the Schwarz inequality \cite[Proposition 3.3]{Paulsen2003}.

Furthermore, we will need some notation to work with polynomials. Let  $\setR[x_1, \ldots, x_n]$, $n \in \setN$, be the ring of polynomials in $n$-variables with real coefficients. In this work, we will only be concerned with irreducibility over the reals. Let 
\begin{equation*}
\bfH^d(n) = \Set{f \in \setR[x_1, \ldots, x_n] : f(\lambda x_1, \ldots, \lambda x_n) = \lambda^d f(x_1, \ldots, x_n) }
\end{equation*}
be the space of homogeneous polynomials in $n$ variables of degree $d$, $d \in \setN$. We will identify a polynomial with the vector of its coefficients when convenient. The set of homogeneous polynomials in $n$ variables and of any degree will be denoted by $\bfH(n) = \bigcup_{d \in \setN}\bfH^d(n)$. We recall which homogeneous polynomials are called hyperbolic: 
\begin{defi}[Hyperbolic polynomials]
Let $p \in \bfH^d(n)$. It is called hyperbolic with respect to the vector $e \in \setR^n$ if $p(e) \neq 0$ and if for all vectors $w \in \setR^n$ the univariate polynomial $t \mapsto p(w - t e) $ has only real roots.
\end{defi}
We will write $\ZZ(f)$ for the real zero set of the polynomials contained in the ideal generated by $f$. For $f \in \setR[x,y]$ ($f \in \bfH(3)$), this set will be called an algebraic curve in real (projective) space. 
We can always switch between homogeneous and affine coordinates by homogenization, introducing an additional variable, and setting this additional variable to $1$, respectively (\cf \cite[$\mathsection 3$]{Bix2006}).
To conclude this section, let us finish by stating a classical result in algebraic geometry about the number of intersections of two algebraic curves (\cf \cite[Theorem 11.10]{Bix2006}).
\begin{lem}[B{\'e}zout's theorem]\label{lem:realbezout}
Let $f \in \bfH^m(3)$, $g \in \bfH^n(3)$ such that they have no common factors of positive degree over the real numbers. Then the curves $\ZZ(f)$ and $\ZZ(g)$ intersect at most $m \cdot n$ times, counting multiplicities, in the real projective plane.
\end{lem}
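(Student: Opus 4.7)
The plan is to reduce to the complex version of B\'ezout's theorem in $\setC P^2$, where the result is classical, and then to argue that each real intersection point contributes to the complex count with the same multiplicity.

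First I would verify that the coprimality hypothesis lifts from $\setR[x,y,z]$ to $\setC[x,y,z]$. Suppose some irreducible $p \in \setC[x,y,z]$ divided both $f$ and $g$. Since $f$ and $g$ are real, complex conjugation shows that $\bar p$ also divides both. If $p$ and $\bar p$ are coprime in the UFD $\setC[x,y,z]$, then the product $p\bar p$ is a real polynomial of positive degree that divides both $f$ and $g$, contradicting the hypothesis. If instead $p$ is a unit multiple of $\bar p$, one can rescale $p$ by an appropriate phase to obtain a real polynomial factor, again a contradiction. Hence $f$ and $g$ remain coprime over $\setC$.

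Next I would invoke the classical complex B\'ezout theorem: the curves $\ZZ_{\setC}(f)$ and $\ZZ_{\setC}(g)$ meet in exactly $mn$ points of $\setC P^2$ counted with intersection multiplicity. A standard proof proceeds via resultants by choosing projective coordinates so that no intersection point lies on a chosen line at infinity and so that the projection onto a line is finite on the intersection; the resultant $\mathrm{Res}_z(f,g)$ is then a homogeneous polynomial of degree $mn$ whose zero multiplicities match the local intersection multiplicities of the two curves. This is precisely the content of \cite[Chapter 11]{Bix2006}.

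Finally I would observe that every real projective intersection is in particular a complex intersection, and the real intersection multiplicity at such a point coincides with its complex multiplicity: both can be read off as the length of the local quotient ring $\OO_P/(f,g)$, and this length is preserved under the flat base change $\setR \hookrightarrow \setC$. Summing over real intersection points therefore gives at most $mn$, which is the claim. The main technical obstacle is precisely this last step, the coincidence of real and complex intersection multiplicities, which requires adopting a field-independent algebraic definition of multiplicity and verifying its compatibility with base change; any geometric definition used at real points must then be shown to agree with the algebraic one.
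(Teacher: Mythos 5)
Your reduction to the complex B\'ezout theorem by lifting coprimality from $\setR$ to $\setC$ is correct and is essentially the paper's own route: the lemma is taken from \cite[Theorem 11.10]{Bix2006}, and the remark following the proof of \Fref{lem:nosmalleropnorm} justifies it in exactly this way, citing \cite[Theorem 11.9]{Bix2006} for the equivalence of real and complex coprimality and then bounding the real intersections by the complex count. Your extra care in matching real and complex intersection multiplicities (length of the local quotient ring, preserved under the flat base change $\setR\hookrightarrow\setC$) soundly fills in the detail that the paper delegates to the reference.
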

Then, of course, it is also true that $\ZZ(f(\cdot,1,\cdot))$ and $\ZZ(g(\cdot,1,\cdot))$ intersect in at most $m \cdot n$ points, since going to the projective plane only adds intersection points at infinity (points with $y = 0$).

\section{Setup} \label{sec:setup}

In most of this work, we will consider the following situation: We would like to perform at some later point a set of $s$ measurements, $s \in \setN$, each with countably many outcomes $\Set{a^k_i}_{i = 1}^{m_k}$, $m_k \in \setN \cup \Set{\infty}$, where the index $k$ denotes the $k$-th measurement. That is, upon preparation $\rho \in \SS(\setC^D)$ we obtain outcome $a^k_i$ with probability $\tr{\rho E^k_i}$ for all $i \in [m_k]$, $k \in [s]$. Here, $E_i^k \in \EE(\setC^D)$ is the effect operator associated to outcome $a_i^k$ and the effect operators $\Set{E_i^k}_{i = 1}^{m_k}$ belonging to the same measurement form a POVM.  Let us define the set of these effect operators
\begin{equation*}
\OO = \Set{E_i^k: i \in [m_k], k \in [s]}.
\end{equation*}
Note that assuming the outcomes to be countable simplifies notation, but our setup can easily be adapted to real measurement outcomes, for example. In that case, each effect corresponds to a measurable set of outcomes. See \cite[Section 3.1.4]{Heinosaari2012a} for details. 

 We are given an unknown quantum state $\rho \in \SS(\setC^D)$ that we want to store. In order to use a minimum of storage space, we want to keep only the information in the state relevant for the measurements that give rise to $\OO$. Motivated by the fact that classical information is cheap to store compared to quantum information,  we aim to minimize the dimension of the quantum system while allowing for an arbitrarily large amount of classical side information. Therefore, we are looking for a quantum compression channel $\CC: \MM_D \to \MM_{d}\otimes \setC^n$ and a quantum decompression channel $\DD: \MM_{d}\otimes \setC^n \to \MM_D$ such that for their composition $\TT = \DD \circ \CC$, the outcomes of the specified observables occur with the same probability as for the original state: 
\begin{equation*}
\tr{\rho E} = \tr{\TT(\rho) E}= \tr{\rho \TT^\ast  (E)} \qquad \forall \rho \in \SS(\setC^D), \forall E \in \OO.
\end{equation*}
The channels $\CC$ and $\DD$ can be seen as an instrument and a parameter dependent operation, respectively (\cf \cite[Section 3.2.5]{Keyl2002}).  
Now we can define our notion of compression.

\begin{defi}[Compression of observables]
Let $\OO$ be a set of Hermitian operators in $\MM_D$. The  \emph{compression dimension} of $\OO$ is the smallest $d\in\setN$ for which there is an $n\in\setN$, 
a CPTP map $\CC: \MM_D \to \MM_{d}\otimes \setC^n$ and a CPTP map $\DD: \MM_{d}\otimes \setC^n \to \MM_D$ such that for their composition $\TT = \DD \circ \CC$, the constraints 
\begin{equation}\label{eq:constraint}
\tr{\rho E} = \tr{\TT(\rho) E}= \tr{\rho \TT^\ast  (E)} \qquad \forall \rho \in \SS(\setC^D), \forall E \in \OO
\end{equation}
are satisfied. If the compression dimension equals $D$, $\OO$ is said to be \emph{incompressible}.
\end{defi}
Note that the constraints are linear, hence the relevant object is the linear subspace spanned by the effect operators, not the effect operators themselves. As the dual channel $\TT^\ast$ is unital, we can add the identity to $\OO$ without loss of generality. Then the linear subspace contains the identity operator and is therefore an operator system. Let us denote the Hermitian part of this operator system by
\begin{equation*}
\LL(\OO) := \linspan[\setR]{\OO; \idop}.
\end{equation*}
This also implies that it is irrelevant whether the effect operators belong to the same observable or to different ones, although these are two different physical situations. Therefore, we will henceforth only assume that $\OO \subset \MM_D^\herm$ instead of requiring the elements in $\OO$ to be positive or even effect operators.

\section{Approximate compression}\label{sec:inexact}

In \Fref{sec:setup}, we have demanded the measurement statistics to be exactly conserved. This may seem very restrictive, but we will see shortly that it can be relaxed without changing the picture.  The aim of this section is to show that the inexact case in which we demand 
\begin{equation*}\label{eq:inexactconstraint}
|\tr{\rho E} - \tr{\TT(\rho) E} | \leq \epsilon \qquad \forall \rho \in \SS(\setC^D), \forall E \in \OO.
\end{equation*}
instead of \Fref{eq:constraint} reduces to the exact case ($\epsilon = 0$) for $\epsilon$ small enough. This is the content of the following theorem:
\begin{thm}[Stability of compression dimension] \label{thm:inexactcase}
Let $\OO \subset \MM_D^\herm$ be a compact set with compression dimension $d$. Then there is an $\epsilon > 0$ such that for any $d^\prime < d$, $d^\prime \in \setN$ and any CPTP maps $\CC: \MM_D \to \MM_{d^\prime} \otimes \setC^n$, $\DD:\MM_{d^\prime} \otimes \setC^n \to \MM_D$ with $n \in \setN$ there is a state $\rho \in \SS(\setC^D)$ and an operator $E \in \OO$  for which 
\begin{equation*}
|\tr{\rho E} - \tr{(\DD \circ \CC)[\rho]E}| \geq \epsilon.
\end{equation*}
\end{thm}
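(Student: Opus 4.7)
The plan is a contradiction-plus-compactness argument that leans crucially on the bound on classical side information provided by \Fref{lem:boundedsideinfo}. Since $d'$ ranges over the finite set $\{1,\ldots,d-1\}$, it suffices to produce, for each such $d'$, an $\epsilon_{d'}>0$ ruling out $\epsilon_{d'}$-approximation through intermediate quantum dimension $d'$; the minimum $\epsilon:=\min_{d'<d}\epsilon_{d'}$ then works. So I would fix $d'<d$ and assume, toward a contradiction, that for every $k\in\setN$ there exist $n_k\in\setN$ and CPTP maps $\CC_k:\MM_D\to\MM_{d'}\otimes\setC^{n_k}$, $\DD_k:\MM_{d'}\otimes\setC^{n_k}\to\MM_D$ with
\[
|\tr{\rho E}-\tr{(\DD_k\circ\CC_k)(\rho)E}|\leq 1/k \qquad \forall\, \rho\in\SS(\setC^D),\ E\in\OO.
\]

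The next step is to trim the classical side information to a uniform size. Writing $\TT_k:=\DD_k\circ\CC_k$ and applying \Fref{lem:boundedsideinfo} to each $\TT_k$ (which, as a map $\MM_D\to\MM_D$, is an exact $\DD\circ\CC$-factorization through $\MM_{d'}\otimes\setC^{n_k}$) yields a new factorization of the same $\TT_k$ through $\MM_{d'}\otimes\setC^{n'_k}$ with $n'_k\leq D^4$. By pigeonhole, some subsequence has $n'_k$ equal to a constant $n^\ast$. Along this subsequence the corresponding $\CC_k$ and $\DD_k$ live in the (compact, via their Choi matrices) sets of CPTP maps between a fixed pair of finite-dimensional algebras, so extracting further subsequences produces limits $\CC^\ast,\DD^\ast$, both CPTP, and hence $\TT_k\to\TT^\ast:=\DD^\ast\circ\CC^\ast$. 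Fixing any $\rho$ and $E\in\OO$, continuity gives $\tr{\TT_k(\rho)E}\to\tr{\TT^\ast(\rho)E}$, while the approximation hypothesis forces that limit to coincide with $\tr{\rho E}$. Thus $\TT^\ast$ is an exact compression of $\OO$ through quantum dimension $d'<d$, contradicting the definition of the compression dimension.

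The main obstacle, and the only genuinely subtle step, is controlling the classical side information: without the uniform bound $n'_k\leq D^4$, the intermediate algebras $\MM_{d'}\otimes\setC^{n_k}$ would vary in dimension and the subsequential extraction of limits of $\CC_k,\DD_k$ would take place in a space whose dimension depends on $k$, so that no finite-dimensional compactness would apply; \Fref{lem:boundedsideinfo} is precisely the tool that eliminates this difficulty. The compactness assumption on $\OO$ plays an auxiliary role: together with the continuity of $(\rho,E)\mapsto\tr{(\TT^\ast-\id)(\rho)E}$ it guarantees that, were $\TT^\ast$ not an exact compression on $\OO$, the deviation would be attained at some $E\in\OO$, so that the $\epsilon$ obtained in the conclusion is an honest positive constant and not merely a vanishing infimum. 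Once these two ingredients are in place, the remainder is a standard Bolzano--Weierstrass / continuity argument.
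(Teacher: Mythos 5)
Your proposal is correct and takes essentially the same route as the paper: the crucial ingredient in both is \Fref{lem:boundedsideinfo} to cap the classical register at $D^4$, followed by compactness of the CPTP maps between the resulting fixed finite-dimensional spaces, and finally a minimum over the finitely many $d^\prime < d$. The paper packages the compactness step as attainment of the infimum of a continuous function on the compact set $\widetilde{\mathcal{CH}}_{d^\prime}$ rather than extracting convergent subsequences, but this is the same argument in contrapositive form (and, as a minor remark, your sequential version does not actually use the compactness of $\OO$, which the paper needs only to make $\max_{E \in \OO}$ well-defined).
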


The compression dimension is therefore stable under small errors. To prove the statement, we will need the following lemma, which  shows that $4 \log D$ bits of classical side information suffice for compression.
\begin{lem}[Bound on classical information] \label{lem:boundedsideinfo}
Let $\CC$, $\DD$ be two CPTP maps, $\CC: \MM_D \to \MM_d \otimes \setC^n $, $\DD: \MM_d \otimes \setC^n \to \MM_D$, $n \in \setN$ and $d \leq D$. We define $\TT := \DD \circ \CC$. Then there are two CPTP maps $\widetilde{\CC}: \MM_D \to \MM_d \otimes \setC^{n_0}$, $\widetilde{\DD}: \MM_d \otimes \setC^{n_0} \to \MM_D$ with $n_0 \in \setN$, $n_0 \leq D^4$ such that $\TT = \widetilde{\DD} \circ \widetilde{\CC}$.
\end{lem}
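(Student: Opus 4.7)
The plan is to combine a classical-register decomposition of $\CC$ and $\DD$ with a Carathéodory-style elimination. Writing $\CC(\rho) = \sum_{i=1}^n \CC_i(\rho) \otimes \dyad{i}$ and $\DD(\sigma \otimes \dyad{i}) = \DD_i(\sigma)$, the hypotheses on $\CC$ and $\DD$ translate into: each $\CC_i : \MM_D \to \MM_d$ is CP, each $\DD_i : \MM_d \to \MM_D$ is CPTP, and $\sum_i \CC_i$ is TP. Conversely, any tuple with these properties reassembles into a valid pair of CPTP channels through $\MM_d \otimes \setC^n$, and the composition is $\TT = \sum_i \Phi_i$ where $\Phi_i := \DD_i \circ \CC_i$.

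The key observation is that the TP constraint $\sum_i \CC_i^*(\idop_d) = \idop_D$ is automatic once $\TT$ is TP and each $\DD_i$ is TP, since $\idop_D = \TT^*(\idop_D) = \sum_i \CC_i^*(\DD_i^*(\idop_D)) = \sum_i \CC_i^*(\idop_d)$. Thus the effective constraints are CP of each $\CC_i$, CPTP of each $\DD_i$, and the single equation $\sum_i \Phi_i = \TT$, which lives in the $D^4$-dimensional real vector space of Hermitian-preserving maps $\MM_D \to \MM_D$.

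Now assume $n > D^4$; after discarding any $i$ with $\Phi_i = 0$ we may assume all $\Phi_i$ are nonzero. As $n$ nonzero vectors in a $D^4$-dimensional real space, the $\Phi_i$ admit a nontrivial relation $\sum_i \alpha_i \Phi_i = 0$. Evaluating at $\idop_D/D$ and using that $\Phi_i(\idop_D/D) \neq 0$ for a nonzero CP map $\Phi_i$, a positivity argument shows that any such relation with all $\alpha_i \geq 0$ would force every $\alpha_i = 0$; hence at least one $\alpha_i$ is strictly negative. Setting $t := \min\{|\alpha_i|^{-1} : \alpha_i < 0\} > 0$ and $\CC_i' := (1 + t\alpha_i)\CC_i$, each scalar $1 + t\alpha_i$ is nonnegative with equality for at least one index, so every $\CC_i'$ remains CP and at least one vanishes identically. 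The identity $\sum_i \DD_i \circ \CC_i' = \TT + t \sum_i \alpha_i \Phi_i = \TT$ persists, TP of $\sum_i \CC_i'$ is automatic by the first observation, and dropping the vanishing block yields a factorization through $\MM_d \otimes \setC^{n-1}$. Iterating gives $n_0 \leq D^4$.

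The main subtlety I anticipate is recognizing the redundancy of the TP condition on $\sum_i \CC_i$; treating it as an independent set of $D^2$ linear conditions would yield only the weaker bound $n_0 \leq D^4 + D^2$. Once the redundancy is noted, the bound matches the real dimension of the ambient space in which the $\Phi_i$ live, and the remainder is a standard convex-elimination step repeated until $n$ drops to $D^4$.
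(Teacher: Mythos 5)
Your proposal is correct, but it takes a genuinely different route from the paper's proof. The paper works at the level of the Choi matrix of $\TT$: it first shows that $(\TT\otimes\id)(\dyad{\Omega})$ has Schmidt number at most $d$ (each block map $\DD\circ\Theta_{P_i}\circ\CC$ factors through $\MM_d$, and local operations do not increase the Schmidt number), then applies Carath\'eodory's theorem to the convex set of states with Schmidt number at most $d$ to write the Choi matrix as a mixture of $D^4$ pure states of Schmidt rank at most $d$, and finally re-factorizes each corresponding rank-$\leq d$ Kraus operator through $\setC^d$ (a $\MM_{D,d}$ times $\MM_{d,D}$ factorization plus a polar decomposition) to build $\widetilde{\CC}$ and $\widetilde{\DD}$ explicitly from scratch. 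You instead keep the given block decomposition $\TT=\sum_i \DD_i\circ\CC_i$, note the key redundancy that trace preservation of $\sum_i\CC_i$ follows from trace preservation of $\TT$ and of the $\DD_i$ (via $\idop=\TT^\ast(\idop)=\sum_i\CC_i^\ast(\DD_i^\ast(\idop))$), and then run a Carath\'eodory-type elimination directly on the maps $\Phi_i=\DD_i\circ\CC_i$ inside the $D^4$-dimensional real space of Hermiticity-preserving maps, using that $\Phi_i(\idop)\geq 0$ is nonzero for a nonzero positive map to force a strictly negative coefficient in any nontrivial relation; rescaling $\CC_i\mapsto(1+t\alpha_i)\CC_i$ then kills one block per step while preserving complete positivity, the composition, and (automatically) trace preservation. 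Both arguments are sound and give the same bound $n_0\leq D^4$ with $d$ unchanged. Yours is more elementary (no Schmidt-rank machinery, no explicit Kraus re-factorization), leaves the decompression blocks $\DD_i$ untouched, and would go through verbatim for merely positive maps; the paper's construction buys new maps built from the Choi matrix of $\TT$ alone and, along the way, the characterization that factorization through $\MM_d$ with classical side information forces Schmidt number at most $d$ of the Choi matrix, which is of independent interest. If you write yours up, make explicit the two small facts used implicitly: a nonzero positive map cannot annihilate $\idop$, and a vanishing sum of positive semidefinite matrices has all terms zero.
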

\begin{proof}
Note that $\MM_d \otimes \setC^n \simeq \bigoplus_{i = 1}^n \MM_d$ has a block structure. Let $P_i$ be the projection onto the $i$-th block. Then $\TT_i := \DD \circ \Theta_{P_i} \circ \CC$ is again a completely positive map, although not necessarily trace preserving. The Choi matrix can thus be written
\begin{equation*}
(\TT \otimes \id)(\dyad{\Omega}) = \sum_{i = 1}^n (\TT_i \otimes \id)(\dyad{\Omega}).
\end{equation*}
We will argue that the Choi matrix has Schmidt number at most $d$ (see \Fref{def:schmidtnumber}). By the introduction of an isometry $V_i: \setC^{d} \hookrightarrow \setC^{nd}$ such that $V_i V_i^\ast = P_i$, we can decompose $\TT_i = \DD_i \circ \CC_i$ with $\CC_i: \MM_{D} \to \MM_d$ where $\CC_i = \Theta_{V_i} \circ \CC$ and $\DD_i: \MM_{d} \to \MM_D$ where $\DD_i = \DD \circ \Theta_{V_i^\ast}$. Therefore, it is easy to see that $(\CC_i \otimes \id)(\dyad{\Omega})$ has Schmidt number at most $d$. Embedding $\MM_d \hookrightarrow \MM_D$, we can regard $\DD_i$ as a map from $\MM_D$ to itself. Since it only acts on one part of the bipartite system, $\DD_i \otimes \id$ is a local operation. It is well known that such operations cannot increase the Schmidt number \cite[Proposition 1]{Terhal2000}. Hence, $(\TT_i \otimes \id)(\dyad{\Omega})$ has Schmidt number at most $d$ and the same holds for $(\TT \otimes \id)(\dyad{\Omega})$. An alternative way to see this is to note that the Kraus operators of $\CC_i$ and $\DD_i$ give a decomposition of $\TT_i$ into Kraus operators of rank at most $d$.

Now consider 
\begin{equation*}
\SS_d = \Set{\dyad{\psi} \in \SS(\setC^D \otimes \setC^D):\Ket{\psi} \mathrm{~has~Schmidt~rank~} \leq d}.
\end{equation*}
The set of states on $\setC^D \otimes \setC^D$ with Schmidt number at most $d$ can then be written as the convex hull of $\SS_d$. By Carath{\'e}odory's theorem, for every $\rho \in \SS(\setC^D \otimes \setC^D)$ of Schmidt number at most $d$ there are $D^4$ elements of $\SS_d$ such that $\rho$ can be written as a convex combination of these elements. We only need $D^4$ instead of $D^4 + 1$ elements, since $\SS(\setC^D \otimes \setC^D)$ is contained in an affine subspace of dimension $D^4-1$. That means
\begin{equation*}
(\TT \otimes \id)(\dyad{\Omega}) = \sum_{i = 1}^{D^4} p_i \dyad{\psi_i} \qquad \dyad{\psi_i} \in \SS_d,p_i \geq 0,\sum_{i = 1}^{D^4} p_i = 1.
\end{equation*}
Each $p_i \dyad{\psi_i}$ can be regarded as Choi matrix of a completely positive map $\widetilde{\TT}_i$. We would like to decompose these maps into $\widetilde{\CC}_i: \MM_D \to \MM_d \otimes \setC^{D^4}$, $\widetilde{\DD}_i: \MM_d \otimes \setC^{D^4} \to \MM_D$, $\widetilde{\TT}_i = \widetilde{\DD}_i \circ \widetilde{\CC}_i $. We note that since the Schmidt rank of $\Ket{\psi_i}$ is at most $d$, we can write it as 
\begin{equation*}
\Ket{\psi_i} = (X_i \otimes \idop) \Ket{\Omega} \qquad X_i \in \MM_{D},
\end{equation*}
where $X_i$ has rank at most $d$. We can take \eg $X_i = \sqrt{D\ptr{2}{\dyad{\psi_i}}}W$, where $W \in \UU(D)$ and $\ptr{2}{\cdot}$ denotes the partial trace over the second system. Then we can find $A_i \in \MM_{D,d}$, $B_i \in \MM_{d,D}$ such that $X_i = A_i B_i$ \cite[Theorem 0.4.6 e)]{Horn2012}. For $A_i$ we can use the polar decomposition such that $A_i = R_i Q_i$ with $Q_i \in \MM_d$, $Q_i \geq 0$ and $R_i \in \MM_{D,d}$ such that $R_i$ has orthonormal columns, which means that $R_i$ is an isometry \cite[Theorem 7.3.1 c)]{Horn2012}. Choose $\widetilde{\CC}_i: \MM_D \to \MM_d \otimes \setC^{D^4}$ as
\begin{equation*}
\widetilde{\CC}_i := p_i \Theta_{(Q_i B_i)^\ast} \otimes \dyad{i}
\end{equation*}
  and $\widetilde{\DD_i}: \MM_d \otimes \setC^{D^4} \to \MM_D$ as
\begin{equation*}
\widetilde{\DD}_i := \Theta_{R_i^\ast}\otimes \ev{\cdot}{i}.
\end{equation*}  
Then we can define $\widetilde{\CC} := \sum_{i = 1}^{D^4} \widetilde{\CC}_i$ and $\widetilde{\DD} := \sum_{i = 1}^{D^4} \widetilde{\DD}_i$, where $\Set{\Ket{i}}_{i = 1}^{D^4}$ is an orthonormal basis of $\setC^{D^4}$. The maps $\widetilde{\CC}$ and $\widetilde{\DD}$ are CPTP with $\widetilde{\DD}\circ \widetilde{\CC} = \TT$.
\end{proof}
Now we want to argue that taking the infimum over channels which arise from compression and decompression maps amounts to taking the infimum over a compact set. Define 
\begin{align*}
\mathcal{CH}_d :=& \{\TT^\ast: \MM_D \to \MM_D| \TT \mathrm{~CPTP~};  \exists \CC,\DD \mathrm{~s.t.~} \DD \circ \CC = \TT,~\CC: \MM_D \to \MM_d \otimes \setC^n,\\&~\DD: \MM_d \otimes \setC^n \to \MM_D, n \in \setN;~\CC, \DD \mathrm{~CPTP}\}\end{align*}
and 
\begin{align*}
\widetilde{\mathcal{CH}}_d :=& \{(\CC^\ast, \DD^\ast)| \CC: \MM_D \to \MM_d \otimes \setC^{D^4}, ~\DD: \MM_d \otimes \setC^{D^4} \to \MM_D;~\CC, \DD \mathrm{~CPTP}\}.
\end{align*}
\begin{lem} \label{lem:compactset}
$\widetilde{\mathcal{CH}}_d$ is a compact subset of the space $\XX := \BB(\MM_d \otimes \setC^{D^4}, \MM_D)\times \BB(\MM_D, \MM_d \otimes \setC^{D^4})$ equipped with some norm $\norm{\cdot}_{\XX}$. Here, $\BB(\HH, \KK)$ is the vector space of bounded linear operators from $\HH$ to $\KK$.
\end{lem}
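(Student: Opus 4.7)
The plan is to exploit finite-dimensionality: since $\XX$ is a finite-dimensional real vector space, compactness of a subset is equivalent to it being closed and bounded (with respect to any norm, as all norms on $\XX$ are equivalent). Because $\widetilde{\mathcal{CH}}_d$ is a product of two sets of CPTP maps between fixed finite-dimensional matrix algebras, it suffices to prove that each factor—the set of CPTP maps between two fixed finite-dimensional matrix algebras—is closed and bounded.

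For boundedness, I would invoke the fact that any CPTP map $\TT$ is a contraction in the trace norm on Hermitian inputs, and therefore its dual $\TT^\ast$ is contractive in the operator norm (in fact unital, hence $\|\TT^\ast\|_{\infty \to \infty} = 1$). Under any fixed norm on the finite-dimensional space of linear maps, this yields a uniform bound on all elements in each factor of $\widetilde{\mathcal{CH}}_d$.

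For closedness, I would use the Choi-Jamiołkowski isomorphism introduced in \Fref{sec:prelim}: the Choi matrix gives a linear (hence continuous) bijection between the space of linear maps $\MM_A \to \MM_B$ and $\MM_B \otimes \MM_A$. Under this identification, complete positivity translates to positive semidefiniteness of the Choi matrix, which is a closed condition, and trace preservation of $\TT$ corresponds to the condition $\ptr{B}{(\TT \otimes \id_A)(\dyad{\Omega})} = \idop_A/\dim A$, which is again a closed linear condition. Thus the preimage of these closed sets under the Choi map is closed, so each factor is a closed subset of the relevant space of linear operators. Taking the cartesian product and using that the product of two compact sets is compact concludes the argument.

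There is no real obstacle here—this is a standard packaging of the fact that the set of quantum channels between fixed finite-dimensional systems is a compact convex set—so the only work is to spell out the reduction to closed-and-bounded and to point to the Choi characterisations already recalled in the preliminaries.
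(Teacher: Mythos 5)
Your proposal is correct and follows essentially the same route as the paper: reduce compactness to closed-and-bounded in the finite-dimensional normed space $\XX$, get boundedness from contractivity of (dual) channels, and conclude via the product of two compact factors; you merely make the closedness step explicit through the Choi matrix, which the paper only asserts. The one small point to keep in mind is that $\widetilde{\mathcal{CH}}_d$ consists of the pairs of \emph{dual} maps $(\CC^\ast,\DD^\ast)$, so one should note that $\TT\mapsto\TT^\ast$ is a linear bijection in finite dimensions, whence compactness of the set of CPTP maps transfers to the set of their duals.
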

\begin{proof}
Define 
\begin{equation*}
\YY_1: = \Set{\DD^\ast: \MM_D \to \MM_d \otimes \setC^{D^4}: \DD \mathrm{~CPTP}}.
\end{equation*}
This set is both closed and bounded (by the Russo-Dye theorem). Since $\BB(\MM_D,\MM_d \otimes \setC^{D^4})$ is a finite-dimensional normed space, $\YY_1$ is compact. By the same reasoning,
\begin{equation*}
\YY_2: = \Set{\CC^\ast: \MM_d \otimes \setC^{D^4} \to \MM_D: \CC \mathrm{~CPTP}}
\end{equation*} 
is compact. It can easily be seen that $\widetilde{\mathcal{CH}}_d \simeq \YY_1 \times \YY_2$. Since products of compact sets are compact again in the product topology and all our spaces are finite dimensional, the assertion follows.
\end{proof}
Now we can finally prove the main result of this section.

\begin{proof}[Proof of \Fref{thm:inexactcase}]
Consider
\begin{equation*}
\epsilon_{d^\prime} := \inf_{\TT^\ast \in \mathcal{CH}_{d^\prime}} \max_{E \in \OO} \opnorm{E - \TT^\ast(E)}.
\end{equation*}
By \Fref{lem:boundedsideinfo}, we can equivalently write 
\begin{equation} \label{eq:inf}
\epsilon_{d^\prime} := \inf_{(\CC^\ast, \DD^\ast) \in \widetilde{\mathcal{CH}}_{d^\prime}} \max_{E \in \OO} \opnorm{E - (\CC^\ast\circ \DD^\ast)(E)}.
\end{equation}
In \Fref{lem:compactset}, we have shown that $\widetilde{\mathcal{CH}}_{d^\prime}$ is a compact set. Note that $\RR \mapsto \max_{E \in \OO} \opnorm{\RR(E)}$ is a seminorm for any linear map $\RR: \MM_D \to \MM_D$ and seminorms on finite-dimensional vector spaces are continuous. It is thus easy to see that $f:(\widetilde{\mathcal{CH}}_{d^\prime}, \norm{\cdot}_{\XX}) \to \setR$, $f(\CC^\ast, \DD^\ast) = \max_{E \in \OO} \opnorm{E - (\CC^\ast\circ \DD^\ast)(E)}$ is continuous.   Therefore, the infimum in \Fref{eq:inf} is attained  and we can write
\begin{equation*}
\epsilon_{d^\prime} := \min_{(\CC^\ast, \DD^\ast) \in \widetilde{\mathcal{CH}}_{d^\prime}} \max_{E \in \OO} \opnorm{E - (\CC^\ast\circ \DD^\ast)(E)}.
\end{equation*}
Let $\epsilon := \min_{d^\prime \in [d-1]} \epsilon_{d^\prime}$.  As the compression dimension is $d$, we know that $\epsilon > 0$. This implies that for any $\TT^\ast \in \mathcal{CH}_{d^\prime}$, ${d^\prime} \in [d-1]$, there is an $E \in \OO$ such that
\begin{equation*}
\max_{\rho \in \SS(\setC^D)}|\tr{\rho E} - \tr{\TT(\rho)E}| \geq \epsilon.
\end{equation*}

\end{proof}
\section{Lower bounds} \label{sec:lower}

\subsection{Algebraic arguments} \label{sec:arveson}

In this section, we will prove and discuss a lower bound on the compression dimension using techniques from operator algebras. This lower bound will depend on the structure of the algebra which is generated by the measurements we would like to perform. Note that any finite-dimensional C$^\ast$-subalgebra of $\MM_D$ containing the identity has the form \cite[Theorem 5.6]{Farenick2001}
\begin{equation*}
U^\ast \left(\bigoplus_{i = 1}^s \MM_{D_i} \otimes \idop_{m_i}\right) U
\end{equation*}
with $\sum_{i = 1}^s D_i m_i= D$, $U \in \UU(D)$. The following theorem will be the main result of this section.

\begin{thm}[Operator algebraic lower bound on compression dimension]\label{thm:newarveson}
Let $\OO\subset\MM^\herm_D$ and \begin{equation*}
\cstar{\OO} = U^\ast \left(\bigoplus_{i = 1}^s \MM_{D_i} \otimes \idop_{m_i}\right) U,
\end{equation*}
where $\sum_{i = 1}^s D_i m_i= D$ and $U \in \UU(D)$. Then $\min_{i \in [s]}D_i$ is a lower bound on the  compression dimension of $\OO$. In particular, if $\cstar{\OO}=\MM_D$, then $\OO$ is incompressible.
\end{thm}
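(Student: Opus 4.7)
The plan is to extract from the factorization $\TT=\DD\circ\CC$ a $*$-homomorphism from a nontrivial quotient of $\cstar{\OO}$ into the intermediate algebra $\bigoplus_{k=1}^{n}\MM_d$, and to conclude via the fact that every simple summand of a C$^\ast$-subalgebra of $\bigoplus_{k=1}^{n}\MM_d$ has dimension at most $d$.

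First, since $\TT$ factors through $\MM_d\otimes\setC^n$, it admits a Kraus representation $\TT(\rho)=\sum_\alpha K_\alpha\rho K_\alpha^\ast$ in which every $K_\alpha\in\MM_D$ has rank at most $d$ (the same Schmidt-number observation that underlies \Fref{lem:boundedsideinfo}). Dually, $\TT^\ast=\CC^\ast\circ\DD^\ast$ is a unital Schwarz map on $\MM_D$ fixing $\LL(\OO)$ pointwise.

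Second, I would invoke Arveson's fixed point theorem: for a unital Schwarz map $\phi$ with a faithful invariant state, the fixed-point set $\{a:\phi(a)=a\}$ is a unital C$^\ast$-subalgebra of $\MM_D$ on which $\phi$ restricts to a $*$-automorphism. As $\TT$ need not admit a faithful invariant state globally, I would reduce to the corner $P_\infty\MM_DP_\infty$, where $P_\infty$ is the support projection of a maximal invariant state $\rho_\infty=\lim_{N\to\infty}\tfrac{1}{N}\sum_{k=0}^{N-1}\TT^k(\idop/D)$; on this corner the restricted CPTP map has $\rho_\infty$ as a faithful invariant state and Arveson's theorem applies. The resulting fixed-point C$^\ast$-subalgebra contains the compression of $\LL(\OO)$ and hence contains a nontrivial quotient $\cstar{\OO}/\JJ\cong\bigoplus_{i\in S}\MM_{D_i}$ of $\cstar{\OO}$, where $\emptyset\ne S\subseteq[s]$ indexes the blocks of $\cstar{\OO}$ acting nontrivially on $P_\infty$.

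Third, with $\TT^\ast$ acting as the identity on this quotient, the chain of Kadison-Schwarz inequalities
\begin{equation*}
a^\ast a=\CC^\ast(\DD^\ast(a^\ast a))\geq\CC^\ast(\DD^\ast(a)^\ast\DD^\ast(a))\geq\CC^\ast(\DD^\ast(a))^\ast\CC^\ast(\DD^\ast(a))=a^\ast a
\end{equation*}
collapses to equalities for every $a$ in the quotient. This places $\DD^\ast(a)$ in the multiplicative domain of $\CC^\ast$ and forces $\DD^\ast$ to restrict to a unital $*$-homomorphism $\bigoplus_{i\in S}\MM_{D_i}\to\bigoplus_{k=1}^n\MM_d$. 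Because every simple summand of a C$^\ast$-subalgebra of $\bigoplus_{k=1}^n\MM_d$ has dimension at most $d$, we obtain $D_i\le d$ for each $i\in S$, and in particular $\min_{i\in[s]}D_i\le\min_{i\in S}D_i\le d$. The incompressibility statement ($\cstar{\OO}=\MM_D$) is the special case $s=1$, $D_1=D$.

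The main obstacle is ensuring that $S$ is nonempty after the recurrent-subspace reduction. Starting the Cesàro average from the maximally mixed state guarantees $P_\infty\ne 0$ and that at least one central projection of $\cstar{\OO}$ has nonzero restriction to $P_\infty$, keeping the quotient nontrivial.
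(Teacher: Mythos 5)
Your first half follows essentially the same route as the paper (Ces\`aro/maximal invariant state, restriction to its support, fixed-point algebra on the corner), but two steps that you assert are precisely the nontrivial content and are left unproved. First, in order to speak of ``the blocks of $\cstar{\OO}$ acting nontrivially on $P_\infty$'', to know that $A\mapsto P_\infty AP_\infty$ is a $\ast$-homomorphism on $\cstar{\OO}$ with image a quotient $\bigoplus_{i\in S}\MM_{D_i}$, and to conclude $S\neq\emptyset$ from $P_\infty\neq 0$, you need that $P_\infty$ commutes with $\cstar{\OO}$ (equivalently with the fixed points of $\TT^\ast$). This is not automatic; it is exactly \Fref{lem:commsuppproj} (Arveson's lemma), proved via the Schwarz inequality together with faithfulness of the Ces\`aro mean on the corner (\Fref{lem:propsuppproj}), plus Schur's lemma and the multiplicity reduction of \Fref{prop:simpleralgebra}. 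Second, ``Arveson's fixed-point theorem applies on the corner'' does not by itself give that the compressed observables $P_\infty EP_\infty$ are fixed points of the corner-restricted dual map: one must show $P_\infty\TT^\ast(P_\infty EP_\infty)P_\infty=P_\infty EP_\infty$, which again uses $\TT^\ast_\infty\circ\TT^\ast=\TT^\ast_\infty$, faithfulness on the corner and the commutation, not merely the existence of a faithful invariant state for the restricted channel.

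Your displayed Schwarz chain is also not justified as written: the first equality $\CC^\ast(\DD^\ast(a^\ast a))=a^\ast a$ requires $\TT^\ast(a^\ast a)=a^\ast a$, but elements of your quotient (corner compressions) are only fixed by the corner-compressed map $P_\infty\TT^\ast(\cdot)P_\infty$, not by $\TT^\ast$ itself; note in particular that the fixed-point set of $\TT^\ast$ alone need not be closed under multiplication. The repair is to conjugate by the block isometries: with $\CC_i^\ast:=\Theta_{V_i}\circ\CC^\ast$ (unital CP) and $\DD_i^\ast:=\DD^\ast\circ\Theta_{V_i^\ast}$ (subunital CP, hence still Schwarz), the corner argument gives $\CC_i^\ast\circ\DD_i^\ast=\id_{\MM_{D_i}}$ for $i\in S$, and then your multiplicative-domain argument works: equality in both Schwarz inequalities puts $\DD_i^\ast(\MM_{D_i})$ into the multiplicative domain of $\CC_i^\ast$, so $\MM_{D_i}$ is a $\ast$-homomorphic image of a subalgebra of $\bigoplus_{k=1}^n\MM_d$ and $D_i\leq d$ (what you obtain is not that $\DD^\ast$ itself is a homomorphism, but this weaker statement suffices). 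Once these gaps are closed, your finish is a genuine and correct alternative to the paper's, which instead splits $\TT$ along the classical index, uses a no-information-without-disturbance argument on the Choi matrices (\cf \Fref{lem:noinfo}) and rank-nullity to get $d\geq D_i$ for $i\in\II$.
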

The proof of this statement goes back to an idea of Arveson \cite[p. 288]{Arveson1972}. In his paper, he proved the following:
\begin{lem} \label{lem:Arvesonthm}
Let $\Phi$ be a unital completely positive map of a matrix algebra $\MM_D$ onto itself whose fixed points algebraically generate the full matrix algebra. Then $\Phi$ is the identity map. 
\end{lem}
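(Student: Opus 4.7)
The plan is to prove that the fixed-point set $F := \Set{A \in \MM_D : \Phi(A) = A}$ is in fact a $*$-subalgebra of $\MM_D$. Once this is established, the hypothesis that $F$ algebraically generates $\MM_D$ forces $F = \MM_D$, so every element is fixed and $\Phi = \id$. That $F$ is a unital self-adjoint linear subspace is immediate: $\idop \in F$ by unitality, and $\Phi$ preserves adjoints by positivity, so $A \in F$ implies $A^* \in F$. The substantive step is closure under multiplication.

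My strategy for multiplicative closure is to route the argument through the multiplicative domain
\[
\mathrm{MD}(\Phi) := \Set{X \in \MM_D : \Phi(X^*X) = \Phi(X)^*\Phi(X) \text{ and } \Phi(XX^*) = \Phi(X)\Phi(X)^*}.
\]
By a classical theorem of Choi, $\mathrm{MD}(\Phi)$ is a $*$-subalgebra of $\MM_D$ on which $\Phi$ acts as a $*$-homomorphism. Consequently, if I can show $F \subseteq \mathrm{MD}(\Phi)$, then for any $A,B \in F$, $\Phi(AB) = \Phi(A)\Phi(B) = AB$, yielding $AB \in F$.

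To obtain $F \subseteq \mathrm{MD}(\Phi)$ I would invoke the Kadison--Schwarz inequality, valid for any unital completely positive map: $\Phi(A^*A) \geq \Phi(A)^*\Phi(A)$. For $A \in F$ this reads $\Phi(A^*A) \geq A^*A$, and the task is to upgrade the inequality to equality. Given a faithful $\Phi^*$-invariant state $\rho > 0$, the pairing
\[
0 \;=\; \tr{(\Phi^*(\rho) - \rho)\,A^*A}\;=\;\tr{\rho\,\bigl(\Phi(A^*A) - A^*A\bigr)}
\]
combined with positivity of $\Phi(A^*A) - A^*A$ and faithfulness of $\rho$ forces $\Phi(A^*A) = A^*A$. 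Applying the same argument to $A^* \in F$ yields $\Phi(AA^*) = AA^*$, confirming $A \in \mathrm{MD}(\Phi)$.

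The main obstacle is producing a faithful $\Phi^*$-invariant state from the generation hypothesis. I would take an invariant state $\rho_*$ of maximal support $P_*$ (obtainable as the barycenter of the compact convex set of $\Phi^*$-invariant states). With $Q := \idop - P_*$, the identity $\tr{\rho_*\,\Phi(Q)} = \tr{\Phi^*(\rho_*)\,Q} = 0$ together with $\rho_* > 0$ on $P_*$ forces $P_*\Phi(Q)P_* = 0$, so $\Phi$ leaves the corner $Q\MM_D Q$ invariant. Iterating, $\Phi^n(Q)$ is PSD-decreasing and converges to a fixed point of $\Phi$ supported in $Q$. Combining this with fixed-point structure on $P_*\MM_D P_*$ extracted from the Cesàro projection $E := \lim_n \tfrac{1}{n}\sum_{k=0}^{n-1}\Phi^k$ onto $F$, one obtains a nontrivial block decomposition of $\MM_D$ confining $F$ to a proper unital $*$-subalgebra, contradicting the generation hypothesis unless $Q = 0$. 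Ruling out this proper block structure is the delicate step of the proof; the rest reduces to standard applications of the Schwarz inequality, the multiplicative-domain theorem, and the trace pairing.
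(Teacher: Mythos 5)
Your overall architecture is sound and ultimately close to the paper's: once the Schwarz inequality is saturated on fixed points, multiplicative closure of the fixed-point set $F$ follows (the paper does this by polarization, you via Choi's multiplicative domain---a cosmetic difference), and the generation hypothesis then forces $F=\MM_D$. The genuine gap sits in your last paragraph, where you must produce the faithful $\Phi^\ast$-invariant state, i.e.\ show $P_\ast=\idop$. What you actually derive there---$P_\ast\Phi(Q)P_\ast=0$, hence $\Phi(Q)\leq Q$, invariance of the corner $Q\MM_D Q$, and convergence of $\Phi^n(Q)$ to a (possibly zero) fixed point under $Q$---does not confine $F$ to any block structure. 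The missing statement is that every fixed point $A$ commutes with $P_\ast$ (equivalently, that the off-diagonal corners $P_\ast A Q$ and $QAP_\ast$ vanish), and nothing in your sketch rules such corners out; the Ces\`aro projection onto $F$ carries no a priori information about the shape of its range either. This commutation is precisely the crux of Arveson's argument (\Fref{lem:commsuppproj} in the paper, applied to the idempotent Ces\`aro mean $\Phi_\infty$, whose support projection coincides with your $P_\ast$): it is proved by a Schwarz-inequality computation combined with faithfulness of $\Phi_\infty$ restricted to the corner $P_\ast\MM_D P_\ast$. Faithfulness on that corner is available even when no globally faithful invariant state exists, which is exactly what breaks the circularity your plan runs into: you need a faithful invariant state to saturate Schwarz, but you need a Schwarz-type argument to show that the maximal-support invariant state is faithful.

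Note also that you mislocate the delicacy: once the commutation is established, ``ruling out the proper block structure'' is immediate, since the commutant of $P_\ast$ equals $P_\ast\MM_D P_\ast\oplus Q\MM_D Q$, a proper unital $\ast$-subalgebra whenever $0\neq P_\ast\neq\idop$, contradicting $\cstar{F}=\MM_D$. With an argument equivalent to \Fref{lem:commsuppproj} supplied, the rest of your proposal goes through (and then coincides in substance with the paper's proof, where $P=\idop$ and faithfulness of $\Phi_\infty$ on all of $\MM_D$ replace your faithful invariant state); without it, the proof is incomplete at its essential step.
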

In Arveson's work, \Fref{lem:Arvesonthm} follows from a more general statement about boundary representations (\cf \cite[Theorem 2.1.1]{Arveson1972}). The proof of \Fref{thm:newarveson} uses Arveson's idea and extends it to more general situations, connecting it to the compression of quantum measurements. We start by proving a lemma which is essentially Lemma 1 on p.\ 285 f.\ in \cite{Arveson1972}. For this, we recall the definition of the support projection of a unital completely positive map. Let $\RR$ be such a map on a matrix $\ast$-algebra $\A$. Then the support projection of $\RR$ is the minimal orthogonal projection $P \in \A$ such that  $\RR(P) = \idop$.  An equivalent definition as well as basic properties of the support projection can be found in the appendix (\Fref{lem:propsuppproj}).

\begin{lem} \label{lem:commsuppproj}
Let $\RR$ be a unital completely positive linear map on a finite-dimensional C$^\ast  $-algebra $\A \subset \MM_{D}$, $D \in \setN$, such that $\RR \circ \RR = \RR$. Let $P$ be the support projection of $\RR$. Then $P$ commutes with the fixed points of $\RR$.
\end{lem}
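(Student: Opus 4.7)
The plan is to reduce the general claim to the case of a self-adjoint fixed point $A$ --- the real and imaginary parts $(A+A^{*})/2$ and $(A-A^{*})/(2\imI)$ are again fixed points of $\RR$, since a unital positive map is $*$-preserving --- and then to show directly that $PA = PAP = AP$. Two preparatory observations drive the argument. First, both $P$ and $Q := \idop - P$ lie in the multiplicative domain of $\RR$: indeed $\RR(P^{2}) = \RR(P) = \idop = \RR(P)^{2}$ and $\RR(Q^{2}) = \RR(Q) = 0 = \RR(Q)^{2}$, so the Kadison--Schwarz inequality (\cf \Fref{eq:schwarz}) is an equality at both $P$ and $Q$. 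The resulting bimodule property then gives the identities $\RR(PXP) = \RR(X)$ and $\RR(QX) = \RR(XQ) = 0$ for every $X \in \A$.

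Second, I would establish the following direct consequence of the minimality of $P$: if $B \in \A$ is positive with $\RR(B) = 0$, then $PB = BP = 0$. Let $R \in \A$ be the range projection of $B$ (which lies in $\A$ because $\A$ is finite-dimensional) and let $\lambda > 0$ be the smallest non-zero eigenvalue of $B$; then $B \geq \lambda R$, so $\RR(R) = 0$ and hence $\RR(\idop - R) = \idop$. Minimality of the support projection forces $P \leq \idop - R$, whence $PR = 0$ and therefore $PB = P(RB) = 0$; the relation $BP = 0$ follows by taking adjoints.

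With these tools the commutation relation is obtained as follows. Apply the Schwarz inequality to $X := PA$: using $\RR(PA) = \RR(A) = A$ from the first observation, one finds $\RR(APA) = \RR(X^{*}X) \geq \RR(X)^{*}\RR(X) = A^{2}$. On the other hand $APA \leq A^{2}$ because $P \leq \idop$, so positivity of $\RR$ yields $\RR(APA) \leq \RR(A^{2})$; applying $\RR$ to $\RR(APA) - A^{2} \geq 0$ and invoking idempotence gives the reverse inequality $\RR(APA) \geq \RR(A^{2})$. Hence $\RR(APA) = \RR(A^{2})$, equivalently $\RR(AQA) = 0$. Since $AQA \geq 0$, the second observation forces $PAQA = AQAP = 0$; in particular $(PAQ)(PAQ)^{*} = PAQ \cdot QAP = PAQAP = 0$, so $PAQ = 0$. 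The analogous computation on the other side yields $QAP = 0$, so $PA = PAP = AP$.

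The main conceptual hurdle is the squeeze that pins down $\RR(APA) = \RR(A^{2})$: once the positive element $AQA$ is recognized as sitting in $\ker \RR$, the passage from $\RR(AQA) = 0$ to the vanishing of the off-diagonal block $PAQ$ is a standard $CC^{*}$-factorization. Everything else is bookkeeping with the multiplicative domain and with the defining minimality of the support projection.
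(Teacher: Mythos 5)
Your proof is correct and follows essentially the same route as the paper's: both hinge on the Schwarz inequality applied at $PA$ together with the fixed-point property, idempotence of $\RR$, and the minimality/faithfulness of the support projection to force $\RR\left(A(\idop-P)A\right)=0$ and hence the vanishing of the corner $PA(\idop-P)$. The only differences are cosmetic: you obtain $\RR(PA)=\RR(A)$ via the multiplicative domain of $\RR$ rather than from \Fref{lem:propsuppproj}, and you conclude $PA(\idop-P)=0$ by a C$^\ast$-factorization instead of the paper's norm computation.
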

\begin{proof}
Since for positive maps $\RR(A)^\ast   = \RR(A^\ast  )$ for all $A \in \A$, this implies that the set of fixed points is closed under involution. Thus, proving $PAP = AP$ for all fixed points $A$ is enough, since it implies $AP = PA$ for self-adjoint elements and arbitrary fixed points can be decomposed into self-adjoint components. It is even sufficient to prove $PA^\ast   PAP = PA^\ast   AP$, since for any vector $\Ket{\phi}\in \setC^D$ it holds that
\begin{align*}
\norm{(\idop - P) A P \Ket{\phi}}_2^2 &= \norm{AP\Ket{\phi}}_2^2 - \norm{PAP\Ket{\phi}}_2^2 \\ &= \ev{PA^\ast  AP}{ \phi} - \ev{PA^\ast   PAP}{\phi}.
\end{align*}
By the polarization identity, this extends to all matrix elements. For the first equality, we used that $\idop - P$ is an orthogonal projection. Now let $A \in \A$ be a fixed point of $\RR$. Then $A^\ast  A \leq \RR(A^\ast  PA)$ follows from the Schwarz inequality, $\RR(A) = \RR(PA)$ and from the fact that $A$ is a fixed point. Multiplying by $P$ from both sides and using $A^\ast  PA \leq A^\ast  A$, this gives 
\begin{equation} \label{eq:arvesonineq}
PA^\ast  PAP \leq PA^\ast  AP \leq P\RR(A^\ast  PA)P
\end{equation}
This can be rewritten as $P\RR(A^\ast  PA)P - PA^\ast  PAP \geq 0$. The support projection $P$ fulfills the equation
\begin{equation*}
\RR(A) = \RR(PAP) \qquad \forall A \in \A
\end{equation*}
and $\RR|_{P\A P}$ is faithful, \ie
\begin{equation} \label{eq:faithfulsupport}
\RR(A) = 0\quad \leftrightarrow\quad PAP = 0 \qquad \forall A \in \A_+.
\end{equation}
Here, $\A_+$ are the positive elements of the algebra. This implies that 
\begin{equation*}
P\RR(A^\ast  PA)P - PA^\ast  PAP = 0.
\end{equation*}
holds since $\RR$ was assumed to be idempotent. The statement then follows from \Fref{eq:arvesonineq}.
\end{proof}

We will also need a simple proposition which allows us to consider simpler algebras. From a physicist's point of view the $\ast$-isomorphism $\pi$ takes care of the right choice of measurement basis and the elimination of duplicate blocks in the structure of the operators in $\OO$.

\begin{prop} \label{prop:simpleralgebra} Let $\OO \subset \MM_D^\herm$ be such that 
\begin{equation*}
\cstar{\OO} = U^\ast \left(\bigoplus_{i = 1}^s \MM_{D_i} \otimes \idop_{m_i}\right) U
\end{equation*}
with $\sum_{i = 1}^s D_i m_i= D$ and $U \in \UU(D)$. Then there exist unital CP maps $\pi: \MM_D \to \MM_{\sum_{i = 1}^s D_i}$ and $\pi^{-1}:\MM_{\sum_{i = 1}^s D_i} \to \MM_D$ such that
\begin{equation*}
\pi(\cstar{\OO}) = \bigoplus_{i = 1}^s \MM_{D_i}=:\A.
\end{equation*}
and $\pi|_{\cstar{\OO}}$ is a $\ast$-isomorphism with inverse $\pi^{-1}|_{\A}$. Moreover, $\OO$ can be compressed to dimension $d$ if and only if $\pi(\OO)$ can be compressed to dimension $d$.
\end{prop}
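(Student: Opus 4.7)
My plan is to construct $\pi$ and $\pi^{-1}$ by hand from isometries that implement, on the subalgebra level, the evident $*$-isomorphism $\bigoplus_i(\MM_{D_i}\otimes\idop_{m_i})\simeq \bigoplus_i \MM_{D_i}$ given by ``pick one copy'' and its inverse given by tensoring with $\idop_{m_i}$. The resulting $\pi$ will be unital and CP but not trace preserving, while $\pi^{-1}$ will be unital and CP but not trace preserving either; the compression equivalence is then obtained by dualising, composing a given compression with the Hilbert–Schmidt adjoints $\pi^{\ast}$ and $(\pi^{-1})^{\ast}$, which are automatically CPTP.

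Concretely, after conjugating by $U$ (a $*$-automorphism of $\MM_D$, hence irrelevant for compression), I assume $U=\idop$. For each $i\in[s]$ and $\ell\in[m_i]$ pick an isometry $V_i^{(\ell)}:\setC^{D_i}\to\setC^D$ whose range is the $\ell$-th copy of the $i$-th block, so that $(V_i^{(\ell)})^{\ast}(A_i\otimes\idop_{m_i})V_i^{(\ell)}=A_i$ for every $A_i\in\MM_{D_i}$; set $V_i:=V_i^{(1)}$ and let $Q_i:\setC^{\sum_j D_j}\to\setC^{D_i}$ be the canonical projection onto the $i$-th summand of $\A$. Define
\[
\pi(X) := \bigoplus_{i=1}^{s} V_i^{\ast} X V_i, \qquad \pi^{-1}(B) := \sum_{i=1}^{s}\sum_{\ell=1}^{m_i} V_i^{(\ell)} Q_i\, B\, Q_i^{\ast} (V_i^{(\ell)})^{\ast}.
\]
Both are CP directly from their Kraus forms, and the identities $\sum_{i,\ell}V_i^{(\ell)}(V_i^{(\ell)})^{\ast}=\idop_D$ and $V_i^{\ast}V_i=\idop_{D_i}$ give $\pi(\idop)=\idop_{\A}$ and $\pi^{-1}(\idop_{\A})=\idop_D$, so both are unital.

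Restricted to $\cstar{\OO}$, every element has the form $X=\bigoplus_i A_i\otimes\idop_{m_i}$, so $\pi(X)=\bigoplus_i A_i$ and $\pi^{-1}(\pi(X))=X$; multiplicativity and $*$-preservation of $\pi|_{\cstar{\OO}}$ are then immediate block calculations, yielding the desired $*$-isomorphism with inverse $\pi^{-1}|_{\A}$. For the compression equivalence, since $\pi,\pi^{-1}$ are unital CP the adjoints $\pi^{\ast}:\MM_{\sum_i D_i}\to\MM_D$ and $(\pi^{-1})^{\ast}:\MM_D\to\MM_{\sum_i D_i}$ are CPTP. Given a compression of $\OO$ by CPTP maps $\CC,\DD$ at intermediate dimension $d$, set $\tilde{\CC}:=\CC\circ\pi^{\ast}$ and $\tilde{\DD}:=(\pi^{-1})^{\ast}\circ\DD$; these are CPTP, and
\[
(\tilde{\DD}\circ\tilde{\CC})^{\ast}(\pi(E)) \;=\; \pi\bigl(\CC^{\ast}\circ\DD^{\ast}(\pi^{-1}(\pi(E)))\bigr) \;=\; \pi(E) \qquad \forall E\in\OO,
\]
using $\pi^{-1}\circ\pi=\id$ on $\cstar{\OO}$ and that $\CC^{\ast}\circ\DD^{\ast}$ fixes $\OO$. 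Hence $\pi(\OO)$ is compressible to dimension $d$. The converse is entirely symmetric, with $\CC:=\tilde{\CC}\circ(\pi^{-1})^{\ast}$ and $\DD:=\pi^{\ast}\circ\tilde{\DD}$, invoking $\pi\circ\pi^{-1}=\id$ on $\A$.

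The one point that needs genuine care is that $\pi$ and $\pi^{-1}$ cannot be simultaneously unital and trace preserving — the multiplicities $m_i$ necessarily get absorbed on one side only, so pre- or post-composing a compression directly with $\pi$ or $\pi^{-1}$ inside the CPTP formalism would destroy normalisation. Passing to the Hilbert–Schmidt duals $\pi^{\ast},(\pi^{-1})^{\ast}$ trades the ``wrong'' property (unitality) for the required one (trace preservation) and so resolves the mismatch cleanly; this is the key reason compression has to be transported through duality rather than naïvely.
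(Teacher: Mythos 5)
Your proof is correct and follows essentially the same route as the paper: extend the canonical block $\ast$-isomorphism $\bigoplus_i (\MM_{D_i}\otimes\idop_{m_i}) \simeq \bigoplus_i \MM_{D_i}$ to unital CP maps between the full matrix algebras and transport a given compression by composing with $\pi$ and $\pi^{-1}$ (equivalently, in the Schr\"odinger picture, with their trace-preserving adjoints), using that the constraint \Fref{eq:constraint} is exactly the fixed-point property of $\TT^\ast$ on $\OO$. The only cosmetic difference is that the paper realizes the unital CP extensions via conditional expectations onto $\cstar{\OO}$ and $\A$, whereas you construct them explicitly from isometries; the compression-transfer argument itself is identical.
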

\begin{proof}
Let $A \in \cstar{\OO}$. Then it has the form
\begin{equation*}
A = U^\ast \left(\bigoplus_{i = 1}^s A_{i} \otimes \idop_{m_i}\right) U,
\end{equation*}
where $A_i \in \MM_{D_i}$. It is easy to see that $\tilde{\pi}: \cstar{\OO} \to \A$,
\begin{equation*}
\tilde{\pi}(A) = \bigoplus_{i = 1}^s A_{i},
\end{equation*}
is a $\ast$-isomorphism. Note that both $\tilde{\pi}$ and its inverse $\tilde{\pi}^{-1}$ are unital completely positive maps. Let $\EE_1: \MM_D \to \cstar{\OO}$ and $\EE_2: \MM_{\sum_{i=1}^s D_i} \to \A$ be conditional expectations onto the respective subalgebras. These maps are known to be completely positive and unital. Then $\pi = \tilde{\pi} \circ \EE_1$ and $\pi^{-1} = \tilde{\pi}^{-1} \circ \EE_2$ are the desired maps. Let $\CC^\ast: \MM_d \otimes \setC^n \to \MM_D $, $\DD^\ast: \MM_D \to \MM_d \otimes \setC^n$ be a dual compression and decompression map for $\OO$, respectively. For the constraints in \Fref{eq:constraint} to hold, $\OO$ must be in the fixed point set of $\TT^\ast = \CC^\ast \circ \DD^\ast$. Then $\pi \circ \CC^\ast$ and $\DD^\ast \circ \pi^{-1}$ are again dual channels and achieve compression to dimension $d$ for $\pi(\OO)$, because $\pi(\OO)$ is contained in the fixed point set for the composition of these maps. Conversely, let $\widetilde{\CC}^\ast: \MM_d \otimes \setC^n \to \MM_{\sum_{i = 1}^s D_i}$, $\widetilde{\DD}^\ast: \MM_{\sum_{i = 1}^s D_i} \to \MM_d \otimes \setC^n$ be a dual compression and decompression map for $\pi(\OO)$, respectively. Then by a similar argument, $\pi^{-1} \circ \widetilde{\CC}^\ast$ and $\widetilde{\DD}^\ast \circ \pi$ achieve compression to dimension $d$ for $\OO$.
\end{proof}
With these preparations, we can prove the main result of this section.
\begin{proof}[Proof of \Fref{thm:newarveson}]

By \Fref{prop:simpleralgebra}, we can assume without loss of generality that the algebra is of the form $\bigoplus_{i = 1}^s \MM_{D_i}$ for $\sum_{i = 1}^s D_i = D$, because $\cstar{\OO}$ must be $\ast  $-isomorphic to such an algebra.  We already noted that for the constraints in \Fref{eq:constraint} to hold, $\OO$ must be in the fixed point set of $\TT^\ast = \CC^\ast \circ \DD^\ast$. 

Now, we note that there is an idempotent map with the same fixed points as $\TT^\ast$. We can for example consider the Ces\`aro-mean
\begin{equation*}
\TT^\ast_\infty = \lim_{N \to \infty} \frac{1}{N}\sum_{n = 1}^N (\TT^\ast)^n.
\end{equation*}
It is known that $\TT^\ast_\infty$ has the same fixed points as $\TT^\ast$, is unital, idempotent and also completely positive (\cf \Fref{lem:cesaro}). Moreover, $\TT^\ast_\infty \circ \TT^\ast = \TT^\ast_\infty$ holds.  

Now we prove that 
\begin{equation*}
\FF := \Set{A \in \MM_D:  P\TT^\ast(A)P =P AP; ~[P,A] = 0}
\end{equation*}
is a $\ast$-algebra, where $P$ is the support projection of $\TT_\infty^\ast$. We note that $\FF$ is an operator system as $\TT^\ast_\infty$ is a unital positive linear map and that $P$ commutes with $\cstar{\FF}$. Thus, we only need to show that $\FF$ is closed under multiplication. Using the Schwarz inequality and the fact that $P$ is an orthogonal projection, it follows for $A \in \FF$ that
\begin{align*}
PA^\ast  A P &= PA^\ast P  A P \\
&= P\TT^\ast(A^\ast) P   \TT^\ast(A)P \\
&\leq P\TT^\ast(A^\ast)   \TT^\ast(A)P \\
& \leq P\TT^\ast(A^\ast  A)P.
\end{align*}
Hence, we see that
\begin{equation*}
P[\TT^\ast(A^\ast  A) - A^\ast  A]P \geq 0.
\end{equation*}
Finally, we show that equality holds here. Applying $\TT^\ast_\infty$ to this and using both $\TT^\ast_\infty(PBP) = \TT^\ast_\infty(B)$ for all $B \in \MM_D$  and $\TT^\ast_\infty \circ \TT^\ast = \TT^\ast_\infty$, we infer
\begin{equation*}
\TT^\ast_\infty(P\left[\TT^\ast(A^\ast  A) - A^\ast  A\right]P) = 0.
\end{equation*}
This implies by faithfulness of $\TT^\ast_\infty|_{P \MM_D P}$ that
\begin{equation*}
P \left[\TT^\ast(A^\ast  A) - A^\ast  A\right] P = 0.
\end{equation*}
Thus, $A \in \FF$ implies $A^\ast A \in \FF$ and the fact that $\FF$ is a $\ast$-algebra then follows from the polarization identity
\begin{align*}
B^\ast A =& \frac{1}{4} [(A + B)^\ast(A + B) - (A - B)^\ast(A - B) + \imI (A + \imI B)^\ast(A + \imI B) - \imI (A - \imI B)^\ast(A - \imI B) ].
\end{align*}

The second main ingredient of the proof is the fact that the support projection $P$ of $\TT^\ast_\infty$ commutes with the fixed points of the map as shown in \Fref{lem:commsuppproj}. Then $P$ also commutes with every element of the C$^\ast  $-algebra generated by the fixed points of $\TT^\ast$. 
Thus, it commutes especially with $\cstar{\OO}$. Therefore, $\cstar{\OO} \subset \FF$ and 
\begin{equation*}
P [\TT(A) - A] P = 0 \qquad \forall A \in \cstar{\OO}. 
\end{equation*}
We can now use the structure of $\cstar{\OO}$. By Schur's lemma, we can conclude that
\begin{equation*}
P = \bigoplus_{i \in [s]} \chi_{\II}(i) \idop_{\MM_{D_i}}
\end{equation*}
for some $\II \subset [s]$, where $\chi_\II$ is the indicator function of the set $\II$. Let $V_i: \setC^{D_i} \hookrightarrow \setC^D$  for $i \in [s]$ be an isometry such that $V_i V_i^\ast  $ is the projection onto the $i$-th block. As $\theta_{V_i^\ast}(B) \in \cstar{\OO}$ for all $B \in \MM_{D_i}$, we have shown that 
\begin{equation*}
(\Theta_{V_i} \circ \TT^\ast   \circ \Theta_{V_i^\ast  })(A) = A \qquad \forall A \in \MM_{D_i}, i \in \II.
\end{equation*}
Thus, $\Theta_{V_i} \circ \TT^\ast   \circ \Theta_{V_i^\ast  } = \id$ $\forall i \in \II$ holds.
By the definition of the support projection, we infer further that 
\begin{equation*}
\TT^\ast  _\infty((\idop - P) A) = \TT^\ast  _\infty(A(\idop - P)) = 0
\end{equation*}
for all $A \in \MM_D$, hence especially $0 \oplus \MM_{D_i} \oplus 0 \in \ker{\TT_\infty^\ast  }$ $\forall i \in [s] \setminus \II$. 

It could, however, be possible to enlarge the intermediate space, but to use classical side information to compress the quantum component of the system nonetheless. The following shows that this cannot happen. We identify the intermediate space $\MM_d \otimes \setC^n$ with $\bigoplus_{i = 1}^n \MM_d$. Let $Q_i$ be the orthogonal projection onto the $i$-th block, $i \in [n]$. Then 
\begin{equation*}
\TT_{ij} = \Theta_{V_j^\ast}  \circ \DD \circ \Theta_{Q_i} \circ \CC \circ \Theta_{V_j} 
\end{equation*}
is again a completely positive map and $\sum_{i = 1}^n \TT_{ij} = \id$ for every $j \in \II$. Looking at the Choi matrices for $\TT_{ij}$, we can see that each needs to be proportional to $\dyad{\Omega}$, because each Choi matrix is positive semidefinite and their sum is a rank one projection. We infer that $\TT_{ij}$ must be proportional to the identity channel, \ie $\TT_{ij} = p_i \id$, $p_i \geq 0$, $\sum_{i = 1}^n p_i = 1$. This is a well-known result in quantum information (no information without disturbance, see \cite[Section 5.2.2]{Heinosaari2012a}). From the rank-nullity theorem we conclude that $d \geq D_j$ for all $j \in \II$. As the set of fixed points of $\TT^\ast  $ is non-empty, we know that $\II$ has to be non-empty as well. From there, the lower bound on $d$ follows.
\end{proof}
The following corollary follows immediately from the proof of \Fref{thm:newarveson}.
\begin{cor}[Fixed points of Schwarz maps]\label{cor:algsub}
Let $\TT^\ast:\MM_D\rightarrow\MM_D$ be a Schwarz map and $\OO$  a set of fixed points of $\TT^\ast$ such that 
\begin{equation*}
\cstar{\OO} = \bigoplus_{i = 1}^s \MM_{D_i}
\end{equation*}
and $\sum_{i = 1}^s D_i = D$ and let $V_i: \setC^{D_i} \hookrightarrow \setC^D$ be an isometry such that $V_i V_i^\ast  $ is the projection onto the $i$-th block for $i \in [s]$.
Then there is an index set $\II \subset [s]$ such that $\Theta_{V_i} \circ \TT^\ast   \circ \Theta_{V_i^\ast  } = \id$ for all $i \in \II$ and $0 \oplus \MM_{D_i} \oplus 0 \in \ker{\TT^\ast  _\infty}$ for all $i \in [s]\setminus \II$, where $\TT^\ast  _\infty$ is the Ces\`aro-mean of $\TT^\ast  $.  Moreover, $d \geq \max_{i \in \II}D_i$.
\end{cor}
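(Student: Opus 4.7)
The plan is to recover the corollary by isolating, from the proof of \Fref{thm:newarveson}, exactly the steps that rely only on the Schwarz inequality rather than full complete positivity, and then running them almost verbatim for the Schwarz map $\TT^\ast$ itself (without invoking any compression/decompression factorization until the very end).

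First, I would set up the Ces\`aro mean $\TT^\ast_\infty = \lim_{N\to\infty} \frac{1}{N}\sum_{n=1}^N (\TT^\ast)^n$ and appeal to \Fref{lem:cesaro} (noting that Schwarz maps are closed under convex combinations and compositions, so each partial average, and hence the limit, is again a unital Schwarz map) to conclude that $\TT^\ast_\infty$ is a well-defined unital idempotent Schwarz map with the same fixed points as $\TT^\ast$ and satisfying $\TT^\ast_\infty\circ\TT^\ast=\TT^\ast_\infty$. Next, I would let $P$ be the support projection of $\TT^\ast_\infty$ (well-defined by \Fref{lem:propsuppproj}) and invoke \Fref{lem:commsuppproj} to see that $P$ commutes with every fixed point of $\TT^\ast_\infty$, and hence with the $\ast$-algebra they generate, namely $\cstar{\OO}$.

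Then I would replay the algebra step from the proof of \Fref{thm:newarveson}: define $\FF := \{A \in \MM_D : P\TT^\ast(A)P = PAP,\ [P,A]=0\}$ and show that $\FF$ is a $\ast$-algebra. Closure under sums, scalar multiplication, and involution is immediate from the positivity and unitality of $\TT^\ast$; closure under products reduces to $PA^\ast AP = P\TT^\ast(A^\ast A)P$ for $A \in \FF$, which is forced by the chain $PA^\ast AP = P\TT^\ast(A^\ast)P\TT^\ast(A)P \leq P\TT^\ast(A^\ast)\TT^\ast(A)P \leq P\TT^\ast(A^\ast A)P$ (Schwarz inequality for $\TT^\ast$) combined with applying $\TT^\ast_\infty$ to the difference and using $\TT^\ast_\infty(P\,\cdot\,P)=\TT^\ast_\infty(\cdot)$, $\TT^\ast_\infty\circ\TT^\ast=\TT^\ast_\infty$, and faithfulness of $\TT^\ast_\infty|_{P\MM_D P}$; closure of $\FF$ under products for arbitrary $A,B\in\FF$ is then obtained via the polarization identity. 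Since $\cstar{\OO}\subset\FF$ and $P$ commutes with $\cstar{\OO}=\bigoplus_i \MM_{D_i}$, Schur's lemma forces $P=\bigoplus_{i\in[s]}\chi_\II(i)\idop_{\MM_{D_i}}$ for some $\II\subset[s]$.

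From this structural description the two bulleted conclusions follow cleanly. For $i\in\II$, any $B\in\MM_{D_i}$ extended by zero, i.e.\ $V_iBV_i^\ast$, lies in $\cstar{\OO}$ and is therefore a fixed point of $\TT^\ast$; multiplying by $V_i^\ast$ and $V_i$ yields $(\Theta_{V_i}\circ\TT^\ast\circ\Theta_{V_i^\ast})(B)=B$. For $i\in[s]\setminus\II$, elements of $0\oplus\MM_{D_i}\oplus 0$ satisfy $PAP=0$, so $\TT^\ast_\infty(A)=\TT^\ast_\infty(PAP)=0$. Finally, the bound $d\geq\max_{i\in\II}D_i$ is the only place where one genuinely needs more than Schwarz positivity: as in the last paragraph of the proof of \Fref{thm:newarveson}, once $\TT^\ast$ is presented as $\CC^\ast\circ\DD^\ast$ through an intermediate algebra $\MM_d\otimes\setC^n$, decomposing via the block projections $Q_i$ and noting that each associated Choi matrix is a non-negative summand of the rank-one projection $\dyad{\Omega}$ forces each piece to be proportional to the identity channel, whence the rank-nullity argument gives $d\geq D_j$ for every $j\in\II$. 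The main obstacle is verifying that all the support-projection machinery of \Fref{lem:commsuppproj} and the algebra argument for $\FF$ really go through under the weaker Schwarz hypothesis; once one checks that these steps never used more than the Schwarz inequality (which I believe they do not), the corollary follows essentially automatically.
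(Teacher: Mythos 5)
Your proposal is correct and follows essentially the same route as the paper: the paper obtains \Fref{cor:algsub} precisely by observing that the proof of \Fref{thm:newarveson} (Ces\`aro mean, support projection via \Fref{lem:commsuppproj} and \Fref{lem:propsuppproj}, the $\ast$-algebra $\FF$, Schur's lemma, and the no-information-without-disturbance step for the factorization through $\MM_d\otimes\setC^n$) only ever uses the Schwarz inequality rather than complete positivity, which is exactly the extraction you carry out. Your closing caveat is also resolved as you expect, since \Fref{lem:propsuppproj}, \Fref{lem:schwarzcesaro} and the proof of \Fref{lem:commsuppproj} are indeed formulated or valid at the Schwarz-map level.
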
 


To conclude this section, we will prove that two matrices generically generate the full matrix algebra. This shows that a set of unstructured effect operators is typically incompressible.
More precisely, we show that the set of pairs of Hermitian matrices which do not generate the full matrix algebra has measure zero.
\begin{lem} \label{lem:matrixalgebrameasure1}
Let $\NN = \Set{(A,B) \in \MM_D^\herm \times \MM_D^\herm: \cstar{\Set{A,B}} \subsetneq \MM_D}$. Then the set $\NN$ has Lebesgue measure zero on $\MM_D^\herm \times \MM_D^\herm$.
\end{lem}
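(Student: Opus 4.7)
The plan is to show that $\NN$ is contained in a proper real-algebraic subset of $\MM_D^\herm \times \MM_D^\herm \cong \setR^{2D^2}$, which automatically has Lebesgue measure zero. The key idea is to express $\dim \cstar{\{A,B\}}$ as the rank of a matrix whose entries are polynomials in the real parameters of $A$ and $B$; then the failure of generation becomes the simultaneous vanishing of finitely many real polynomials.

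First I would observe that, since $A$ and $B$ are Hermitian, $\cstar{\{A,B\}}$ coincides with the unital associative $\setC$-algebra generated by $A$ and $B$. It is therefore spanned by the noncommutative words in $A$ and $B$ (with the empty word interpreted as $\idop$). If $V_k \subseteq \MM_D$ denotes the span of all such words of length at most $k$, then $V_0 \subseteq V_1 \subseteq \cdots$ stabilizes by index $D^2 - 1$: as soon as $V_k = V_{k+1}$, multiplicativity forces $V_{k+m} = V_k$ for every $m \geq 0$, and $\dim \MM_D = D^2$ bounds how long the chain can strictly grow. I would then form the matrix $M(A,B)$ whose rows are the vectorizations, in a fixed basis of $\MM_D$, of all words of length at most $D^2 - 1$. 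Its entries are polynomials in the $2D^2$ real parameters of $(A, B)$, and $\dim \cstar{\{A,B\}} = \mathrm{rank}\, M(A,B)$.

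The condition $\cstar{\{A,B\}} \subsetneq \MM_D$ is thus equivalent to $\mathrm{rank}\, M(A,B) < D^2$, i.e.\ to the simultaneous vanishing of all $D^2 \times D^2$ minors of $M(A,B)$. Each such minor is a real polynomial $p_\alpha$ in the parameters of $(A,B)$, so $\NN \subseteq \bigcap_\alpha \{p_\alpha = 0\}$. To conclude that $\NN$ has Lebesgue measure zero it suffices to exhibit a single pair $(A_0, B_0) \notin \NN$: then at least one $p_{\alpha_0}$ is not identically zero as a polynomial, and its zero set (which contains $\NN$) has Lebesgue measure zero by the standard fact that the zero set of a nonzero polynomial on $\setR^n$ is a null set.

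To furnish such a pair I would take $A_0 = \diag(1, 2, \ldots, D)$ and $B_0$ the Hermitian matrix with all entries equal to $1$. Since $A_0$ has $D$ distinct eigenvalues, Lagrange interpolation yields polynomials $q_i$ of degree $D-1$ with $q_i(A_0) = \Ket{i}\Bra{i}$, so every diagonal matrix unit lies in $\cstar{\{A_0, B_0\}}$; multiplying then gives $\Ket{i}\Bra{i}\, B_0\, \Ket{j}\Bra{j} = \Ket{i}\Bra{j}$ for all $i, j$, so the algebra contains every matrix unit and equals $\MM_D$. The main (and only) care-point in the argument is the bookkeeping that turns the rank condition on $M(A,B)$ into genuine polynomial conditions on the $2D^2$ real parameters (one must split the off-diagonal entries into real and imaginary parts); once an explicit generating pair is in hand, the rest is immediate.
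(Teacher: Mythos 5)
Your proof is correct, but it takes a genuinely different route from the paper. You encode generation directly: the span of words in $A,B$ of length at most $D^2-1$ already equals $\cstar{\Set{A,B}}$ by the stabilization argument, so $\dim\cstar{\Set{A,B}}$ is the rank of a word matrix $M(A,B)$ with polynomial entries, failure of generation is the vanishing of all $D^2\times D^2$ minors, and non-triviality of at least one minor is certified by the explicit pair $A_0=\diag(1,\dots,D)$, $B_0$ the all-ones matrix (Lagrange interpolation gives the diagonal matrix units, and $\dyad{i}B_0\dyad{j}=\dyad{i}{j}$ gives the rest). The paper instead invokes Burnside's theorem to reduce $\NN$ to the set of pairs with a nontrivial common invariant subspace, and then uses the compound-matrix commutator determinant criterion of George--Ikramov to exhibit, for each dimension $k$, a polynomial $P_k$ vanishing on that set; the product $P=\prod_k P_k$ then contains $\NN$ in its zero set. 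Your argument is more elementary and self-contained: it needs no external structural results, and it makes fully explicit the step the paper passes over quickly, namely why the relevant polynomial is not identically zero (the paper simply asserts this for $P$, which implicitly requires knowing a pair without common invariant subspaces). What the paper's route buys is brevity given the cited results and polynomials of manageable size, whereas your word matrix has exponentially many rows --- harmless for a measure-zero argument, but less economical. One small wording point, which you yourself flag at the end: the minors of $M(A,B)$ are complex-valued polynomials in the $2D^2$ real parameters, so one should pass to their real and imaginary parts (or to $\abs{p_\alpha}^2$) before invoking the fact that the zero set of a nonzero real polynomial is Lebesgue null; with that bookkeeping done, the argument is complete.
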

\begin{proof}
By Burnside's theorem (\cf \cite{Lomonosov2004}), it is clear that $\NN$ is contained in the set of tuples of matrices which have a non-trivial common invariant subspace. This requirement can be formulated as the zero set of a polynomial as we will see. From \cite[Theorem 2.2]{George1999}, we know that if $A$, $B \in \MM^\herm_D$ have a common invariant subspace of dimension $k$, then also
\begin{equation*}
P_k(A,B) := \det[\sum_{i,j = 1}^{D-1} \comm{C_k(A)^i}{C_k(B)^j}^\ast   \comm{C_k(A)^i}{C_k(B)^j}] = 0 
\end{equation*}
where $C_k(A)$ is the $k$-th compound matrix of $A$, \ie the matrix with entries $\det(A[\alpha|\beta])$ and $\alpha$, $\beta$ sequences of strictly increasing integers contained in $[n]$, $A[\alpha|\beta]$ the submatrix of $A$ in rows $\alpha$ and columns $\beta$. The entries of $C_k(A)$ are arranged in lexicographical order. 
Multiplying the $P_k$, we obtain a polynomial $P := \prod_{k = 1}^{n-1} P_k$ in the real and imaginary parts of the entries of $A$, $B$ which contains $\NN$ in its zero set. Since $P$ is not identically zero, its zero set and therefore $\NN$ must have measure zero.
\end{proof}
We could also consider $\MM_D$ instead of $\MM_D^\herm$ and the statement would still hold. However, in the setting of (operator systems generated by) quantum observables, assuming the matrices involved to be Hermitian is more natural. 


\subsection{Geometric arguments} \label{sec:geoargument}

To give a different perspective on the problem, we will prove in this section again that compression in the setup of \Fref{sec:setup} is impossible in general, this time using basic techniques from algebraic geometry. This will be useful later to obtain results in situations in which we cannot apply the techniques of \Fref{sec:arveson} (see \Fref{sec:copies}). We emphasize again that we are interested in irreducibility over the reals. The following lemma is the main technical result of this section.
\begin{lem} \label{lem:nosmalleropnorm}
Let $A$, $B \in \MM_D^\herm$ such that $p(x, z) := \det[x\idop - A - zB]$ is a polynomial of degree $D$ with a decomposition into irreducible factors
\begin{equation*}
p(x,z) = \prod_{i =1}^s p_i(x,z)^{m_i} \qquad m_i \in \setN,
\end{equation*}
where $\deg p_i = D_i$ and $\sum_{i = 1}^s m_i D_i = D$.  Moreover, let $W \subset \setR$ be open and non-empty and let $C$, $F \in \MM^\herm_d$ be such that 
\begin{equation*}
\opnorm{C + t F} = \opnorm{A + t B} \qquad \forall t \in W.
\end{equation*}
Then this implies that $d \geq \min_{i \in [s]}D_i$. 
\end{lem}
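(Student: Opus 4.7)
The approach is algebro-geometric. The key observation is that for any Hermitian matrix $M$, $\opnorm{M}$ is the largest absolute value among the eigenvalues of $M$, so $\pm\opnorm{M}$ is a root of its characteristic polynomial. Defining $q(x,z):=\det[x\idop - C - zF]$, a polynomial of degree $d$ in $x$, the identity $\opnorm{A + tB} = \opnorm{C + tF}$ on $W$ will force a maximum-magnitude eigenvalue branch of $A+tB$ and of $C+tF$ to coincide up to a sign on a suitable subinterval, producing an infinite arc lying on both characteristic curves. Bezout's theorem will then force an irreducible factor of $p$ to appear (up to a sign change of $x$) inside $q$.

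To implement this, I would first invoke Rellich's theorem: since $A+tB$ and $C+tF$ are Hermitian and linear in the real parameter $t$, there exist real-analytic functions $\lambda_1,\dots,\lambda_D$ and $\mu_1,\dots,\mu_d$ on $W$ enumerating their respective eigenvalues. For each $t\in W$, $\opnorm{A+tB}$ equals $\sigma\lambda_i(t)$ for some $(\sigma,i)\in\{\pm 1\}\times[D]$; the finitely many closed subsets of $W$ on which a particular pair $(\sigma,i)$ is active must cover $W$, so at least one has nonempty interior. Doing the same for $C+tF$ and intersecting, I obtain a nonempty open $V\subset W$ together with fixed $\sigma_1,\sigma_2\in\{\pm 1\}$, $i_1\in[D]$, $j_1\in[d]$ such that $\sigma_1\lambda_{i_1}(t)=\sigma_2\mu_{j_1}(t)=\opnorm{A+tB}$ on $V$; in particular $\mu_{j_1}=\sigma\lambda_{i_1}$ with $\sigma:=\sigma_1\sigma_2$.

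Next, I consider the arc $\Gamma:=\{(\lambda_{i_1}(t),t):t\in V\}$. It lies in $\ZZ(p)=\bigcup_{i=1}^s\ZZ(p_i)$, and each function $t\mapsto p_i(\lambda_{i_1}(t),t)$ is real-analytic on the interval $V$, so it is either identically zero or has a discrete zero set. Because $\prod_i p_i(\lambda_{i_1}(t),t)^{m_i}=0$ for every $t\in V$ and finitely many discrete subsets of $V$ cannot cover the uncountable interval $V$, some factor $p_{i_0}$ satisfies $\Gamma\subset\ZZ(p_{i_0})$. Setting $\tilde q(x,z):=q(\sigma x,z)$, which has the same total degree $d$ and the same degree-multiset of irreducible factors as $q$ (composition with the degree-preserving automorphism $x\mapsto\sigma x$ of $\setR[x,z]$), the identity $\tilde q(\lambda_{i_1}(t),t)=q(\mu_{j_1}(t),t)=0$ on $V$ lets the same pigeonhole produce an irreducible factor $\tilde q_{j_0}$ of $\tilde q$ with $\Gamma\subset\ZZ(\tilde q_{j_0})$.

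Finally, $p_{i_0}$ and $\tilde q_{j_0}$ both vanish on the infinite set $\Gamma$, which is injectively parametrized by $t\in V$. After homogenization, \Fref{lem:realbezout} implies that two polynomials in $\setR[x,z]$ with no common factor of positive degree have only finitely many common zeros, so $p_{i_0}$ and $\tilde q_{j_0}$ must share a common factor; being both irreducible they are proportional. Therefore $D_{i_0}=\deg p_{i_0}=\deg\tilde q_{j_0}\leq\deg\tilde q=d$, yielding the desired bound $d\geq D_{i_0}\geq\min_{i\in[s]}D_i$. The main obstacle I foresee is the careful bookkeeping of signs and the extraction of the open subinterval $V$ on which a single real-analytic eigenvalue branch of $A+tB$ coincides up to sign with a single such branch of $C+tF$; once this is precise, the algebro-geometric step is essentially a single application of Bezout to the respective irreducible factorizations.
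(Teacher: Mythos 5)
Your proof is correct and follows essentially the same route as the paper: you place the infinite arc traced by the norm-achieving eigenvalue branch on both characteristic curves and invoke the real B\'ezout bound (\Fref{lem:realbezout}) to force an irreducible factor of $p$ of degree at most $d$. The differences are only technical bookkeeping --- you use Rellich's analytic eigenvalue branches, a Baire/pigeonhole argument and the substitution $x \mapsto \sigma x$ applied to $q$, and you identify a specific irreducible factor proportional to one of $\tilde q$, where the paper instead uses Kato's finitely-many-level-crossings remark, replaces $A+tB$ (resp.\ $C+tF$) by its negative to fix signs, and argues by a counting contradiction under the assumption $d < D_i$ for all $i$ --- so no changes are needed.
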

From this statement follows in particular that  $d \geq D$ if $p(x,z)$ is an irreducible polynomial. This lemma can be used to prove lower bounds on the compression dimension. 

\begin{thm}[Lower bound on compression dimension (geometric)]\label{thm:geometric}
Let $\OO \subset \MM_D^\herm$ be a set of Hermitian operators, $E_1$, $E_2 \in \LL(\OO)$ and 
\begin{equation*}
p(x,z):=\det[x \idop - E_1 - z E_2].
\end{equation*}
Then the smallest among the degrees of the irreducible factors of $p$ is a lower bound on the compression dimension of $\OO$. In particular, if $p$ is irreducible over the reals, then $\OO$ is incompressible.
\end{thm}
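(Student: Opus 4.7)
The plan is to reduce the statement to \Fref{lem:nosmalleropnorm} by extracting, from any valid compression of $\OO$ to dimension $d$, a pair of Hermitian $d \times d$ matrices whose operator-norm pencil agrees with that of $(E_1, E_2)$ on a non-empty open interval. The two key inputs are (i) the fact that unital completely positive maps are contractive in operator norm and (ii) a finite Baire-type argument that singles out one summand of the intermediate classical register.

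First I would pass to the dual picture. Given CPTP maps $\CC: \MM_D \to \MM_d \otimes \setC^n$ and $\DD: \MM_d \otimes \setC^n \to \MM_D$ realising the compression, the constraint $\tr{\rho E} = \tr{(\DD \circ \CC)(\rho)\, E}$ is equivalent to $\TT^\ast(E) = E$ for $E \in \OO$, where $\TT = \DD \circ \CC$. Unitality and linearity of $\TT^\ast$ then extend this to all of $\LL(\OO)$, so in particular $\TT^\ast(E_1 + tE_2) = E_1 + tE_2$ for every $t \in \setR$. Since $\CC^\ast$ and $\DD^\ast$ are both unital CP, hence contractions in $\opnorm{\cdot}$, we obtain
\begin{equation*}
\opnorm{E_1 + tE_2} \;=\; \opnorm{\CC^\ast(\DD^\ast(E_1 + tE_2))} \;\leq\; \opnorm{\DD^\ast(E_1 + tE_2)} \;\leq\; \opnorm{E_1 + tE_2},
\end{equation*}
so equality holds throughout and $\DD^\ast$ preserves the operator norm on the entire real pencil $\Set{E_1 + tE_2 : t \in \setR}$.

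The next step exploits the block structure of the intermediate space. Writing $\MM_d \otimes \setC^n \simeq \bigoplus_{i=1}^n \MM_d$ and $\DD^\ast(E_j) = \bigoplus_i M^{(j)}_i$ for $j \in \{1,2\}$ with $M^{(j)}_i \in \MM_d^\herm$, the norm on the direct sum is the maximum of the block norms, giving $\max_{i \in [n]} \opnorm{M^{(1)}_i + t M^{(2)}_i} = \opnorm{E_1 + tE_2}$ for every $t \in \setR$. Each set $S_i := \Set{t \in \setR : \opnorm{M^{(1)}_i + tM^{(2)}_i} = \opnorm{E_1 + tE_2}}$ is closed by continuity of the operator norm, and their union is $\setR$; by the Baire category theorem, some $S_{i_0}$ contains a non-empty open interval $W$. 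Applying \Fref{lem:nosmalleropnorm} with $A = E_1$, $B = E_2$, $C = M^{(1)}_{i_0}$ and $F = M^{(2)}_{i_0}$ yields $d \geq \min_{i} D_i$, where the $D_i$ are the degrees of the irreducible factors of $p$. The particular case of an irreducible $p$ then forces $d \geq D$, so $\OO$ is incompressible.

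The only non-routine step is the one just before invoking \Fref{lem:nosmalleropnorm}: upgrading a pointwise maximum identity to a single index $i_0$ dominating on an \emph{open} interval, which is precisely what the lemma's hypothesis asks for. The Baire argument above handles this cleanly because the covering consists of only finitely many closed sets; the remainder of the argument is a mechanical combination of Hilbert--Schmidt duality, unitality of $\TT^\ast$, and the operator-norm contractivity of unital CP maps.
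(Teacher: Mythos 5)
Your proposal is correct and follows essentially the same route as the paper: fixed points of $\TT^\ast=\CC^\ast\circ\DD^\ast$ on $\LL(\OO)$, Russo--Dye contractivity of the unital dual maps, and the resulting operator-norm identity along the pencil $E_1+tE_2$ feeding into \Fref{lem:nosmalleropnorm}. The one place where you go beyond the paper's own, terser proof is the block-selection step: since $\DD^\ast$ takes values in $\MM_d\otimes\setC^n\simeq\bigoplus_{i=1}^n\MM_d$ while the lemma is stated for a single pair of $d\times d$ matrices, one must isolate a block whose norm attains $\opnorm{E_1+tE_2}$ on a non-empty open interval; your finite closed-cover (Baire-type) argument does exactly this, whereas the paper leaves it implicit in the theorem's proof and only spells out an equivalent continuity-plus-induction selection in the corollary that follows. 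This is a welcome tightening rather than a detour: applying the lemma naively to the full $dn\times dn$ matrices $\DD^\ast(E_1),\DD^\ast(E_2)$ would only give $dn\geq\min_i D_i$, so the selection of a single $\MM_d$ block (or, alternatively, a rerun of the lemma's B\'ezout argument using the block factorization of the degree-$dn$ determinant) is genuinely needed to conclude $d\geq\min_i D_i$.
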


\begin{proof}
First, we have that $\TT^\ast$ is a contraction by the Russo-Dye theorem, since $\TT^\ast  $ is a positive unital map. The same is true for the dual channels $\DD^\ast  $, $\CC^\ast  $. If we require \Fref{eq:constraint} to hold, then $\LL(\OO)$ has to be in the fixed point space of $\TT^\ast  $ as seen before. By the fixed point property, the quantity $\opnorm{E_1 + t E_2}$ has to be preserved under $\TT^\ast  $ for all $t \in \setR$. Here, we have taken the modulus and then the maximum over all states in \Fref{eq:constraint}. Since  both $\CC^\ast$ and $\DD^\ast$ are contractions as well, this implies that 
\begin{equation*} 
\opnorm{E_1 + t E_2} = \opnorm{\DD^\ast  (E_1) + t \DD^\ast  (E_2)} \qquad \forall t \in \setR.
\end{equation*}
The assertion then follows from \Fref{lem:nosmalleropnorm}.
\end{proof}

In fact, we can strengthen  \Fref{thm:geometric} in the case when $\cstar{\OO}$ is a proper subalgebra and we have more information on its block structure. This is captured by the next corollary.

\begin{cor}
Let $\OO \subset \MM_D^\herm$ be such that
\begin{equation*}
\cstar{\OO} = U^\ast\left(\bigoplus_{i = 1}^s \MM_{D_i} \otimes \idop_{m_i}\right) U
\end{equation*}
with $\sum_{i = 1}^s D_i m_i = D$ and $U \in \UU(D)$. Then the minimal compression dimension is lower bounded by $D_{j_0}$ if there are $E_1$, $E_2 \in \OO$, $j_0 \in [s]$ and an open set $V \subset \setR$ such that 
\begin{equation*}
\opnorm{E_1 + t E_2} = \|E_1^{j_0}+ t E_2^{j_0}\|_{\infty}
\end{equation*}
for all $t \in V$ and $E_1^{j_0}$, $E_2^{j_0}$ are such that 
$\det[x \idop - E_1^{j_0} - z E_2^{j_0}]$
is irreducible over the reals.
Here, we have used that for all $E \in \OO$ we can write 
\begin{equation*}
E = U^\ast \left(\bigoplus_{j = 1}^s E^j \otimes \idop_{m_j} \right)U
\end{equation*}
for $E^j  \in \MM_{D_j}$, $j \in [s]$.
\end{cor}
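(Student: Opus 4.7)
The plan is to adapt the proof of \Fref{thm:geometric} by exploiting the block decomposition of $\cstar{\OO}$ and pairing it with a Baire-type argument to isolate one block of the classical side register.

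First, I would reproduce the initial step of the proof of \Fref{thm:geometric} to show that, for any valid compression $\CC: \MM_D \to \MM_d \otimes \setC^n$ and $\DD: \MM_d \otimes \setC^n \to \MM_D$, the equality
\begin{equation*}
\opnorm{E_1 + t E_2} = \opnorm{\DD^\ast(E_1) + t \DD^\ast(E_2)} \qquad \forall t \in \setR
\end{equation*}
must hold. This uses only that $E_1, E_2$ are fixed points of $\TT^\ast = \CC^\ast \circ \DD^\ast$ together with Russo--Dye contractivity of $\CC^\ast$ and $\DD^\ast$. Combined with the corollary's hypothesis, this upgrades to
\begin{equation*}
\opnorm{\DD^\ast(E_1) + t \DD^\ast(E_2)} = \opnorm{E_1^{j_0} + t E_2^{j_0}} \qquad \forall t \in V.
\end{equation*}

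Next I would pass to a single block of the side register. Writing $\MM_d \otimes \setC^n \simeq \bigoplus_{k=1}^n \MM_d$ and denoting by $\DD^\ast(E_i)_k \in \MM_d$ the $k$-th block of $\DD^\ast(E_i)$, the norm on the left splits as $\max_{k \in [n]} \opnorm{\DD^\ast(E_1)_k + t \DD^\ast(E_2)_k}$. Define
\begin{equation*}
W_k := \Set{t \in V : \opnorm{\DD^\ast(E_1)_k + t \DD^\ast(E_2)_k} = \opnorm{E_1^{j_0} + t E_2^{j_0}}}, \qquad k \in [n].
\end{equation*}
These sets are relatively closed in $V$ and cover it. Since $V$ is a non-empty open subset of $\setR$, Baire category (or a pigeonhole argument on intervals) forces some $W_{k^\ast}$ to contain a non-empty open subset $W \subset V$.

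Finally, on $W$ the Hermitian pencils $\DD^\ast(E_1)_{k^\ast} + t \DD^\ast(E_2)_{k^\ast} \in \MM_d^\herm$ and $E_1^{j_0} + t E_2^{j_0} \in \MM_{D_{j_0}}^\herm$ have equal operator norm, so I can apply \Fref{lem:nosmalleropnorm} with $A = E_1^{j_0}$, $B = E_2^{j_0}$, $C = \DD^\ast(E_1)_{k^\ast}$ and $F = \DD^\ast(E_2)_{k^\ast}$. Since $\det[x\idop - E_1^{j_0} - z E_2^{j_0}]$ is irreducible over $\setR$ by assumption, its single irreducible factor has degree $D_{j_0}$, and the lemma delivers $d \geq D_{j_0}$. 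The only step that goes beyond the proof of \Fref{thm:geometric} is the extraction of a single classical branch $k^\ast$: the block maximising $\max_k \opnorm{\DD^\ast(E_1)_k + t \DD^\ast(E_2)_k}$ may vary with $t$, whereas \Fref{lem:nosmalleropnorm} requires a fixed pair $(C, F) \in \MM_d^\herm \times \MM_d^\herm$. This is the main obstacle, and I expect the finite Baire/pigeonhole reduction above to be the natural remedy; once it is in place, the geometric lemma does the rest.
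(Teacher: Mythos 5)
Your proposal is correct and follows essentially the same route as the paper: establish $\opnorm{E_1+tE_2}=\opnorm{\DD^\ast(E_1)+t\DD^\ast(E_2)}$ via Russo--Dye as in \Fref{thm:geometric}, isolate a single block $k^\ast$ of the classical register whose pencil attains the norm on an open subset $W\subset V$, and then feed the fixed Hermitian pair into \Fref{lem:nosmalleropnorm} with the irreducibility hypothesis forcing $d\geq D_{j_0}$. The only deviation is in the block-isolation step, where you cover $V$ by the finitely many relatively closed sets $W_k$ and invoke Baire/pigeonhole, whereas the paper argues by continuity of the operator norm and induction over blocks (shrinking $W$); both are sound and yield the same fixed pair $(C,F)$.
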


\begin{proof}
As in the proof of \Fref{thm:geometric}, we obtain
\begin{equation} \label{eq:opnormconstraint}
\opnorm{E_1 + t E_2} = \opnorm{\DD^\ast  (E_1) + t \DD^\ast  (E_2)} \qquad \forall t \in \setR.
\end{equation}
The definition of $\DD$ requires \begin{equation*}
\DD^\ast  (\LL(\Set{E_1, E_2})) \subset \bigoplus_{i = 1}^n \MM_{d} \simeq \MM_d \otimes \setC^n.
\end{equation*}
Assume therefore that $\DD^\ast  (E_1) + t \DD^\ast  (E_2) = \bigoplus_{i = 1}^n \left(F_1^i + t F_2^i\right)$, $F_j^i \in \MM_{d}$ for all $i \in [n]$, $j \in [2]$ and $t \in \setR$. Since 
\begin{equation*}
\opnorm{\DD^\ast  (E_1) + t \DD^\ast  (E_2)} = \max_{i \in [n]}\opnorm{F^i_1 + t F^i_2}
\end{equation*}
for a fixed $t$, we can assume that there is an open set $W \subset V$ such that 
\begin{equation} \label{eq:relevantblock}
\opnorm{\DD^\ast  (E_1) + s \DD^\ast  (E_2)} = \opnorm{F^{k_0}_1 + s F^{k_0}_2} \qquad \forall s \in W
\end{equation}
for some $k_0 \in [n]$. This is true since for two blocks either 
\begin{equation*}
\opnorm{F^{1}_1 + t F^{1}_2} = \opnorm{F^{2}_1 + t F^{2}_2}
\end{equation*}
for all $t \in V$ or there is a $t_0 \in V$ such that 
\begin{equation*}
\opnorm{F^{1}_1 + t_0 F^{1}_2} > \opnorm{F^{2}_1 + t_0 F^{2}_2}.
\end{equation*}
In the latter case, we can find an open neighborhood $W$ of $t_0$ such that
\begin{equation*}
\opnorm{F^{1}_1 + t_0 F^{1}_2} > \opnorm{F^{2}_1 + t_0 F^{2}_2}
\end{equation*}
for all $t \in W$ by continuity of the operator norm with respect to $t$. This can be extended to more blocks by induction in the block number and possibly further shrinking $W$.

By assumption, \Fref{eq:opnormconstraint} and \Fref{eq:relevantblock} then imply
\begin{equation*}
\opnorm{E_1^{j_0}+ t E_2^{j_0}} = \opnorm{F^{k_0}_1 + t F^{k_0}_2} \qquad \forall t \in W
\end{equation*}
The assertion $d \geq D_{j_0}$ then follows from \Fref{lem:nosmalleropnorm}.
\end{proof}

The condition
$\opnorm{E_1 + t E_2} = \|E_1^{j_0}+ t E_2^{j_0}\|_{\infty}$
might look artificial, but can easily be checked. We just have to find a $t_0 \in \setR$ which is not a crossing point and check which block has the largest operator norm in some open neighborhood of $t_0$. If furthermore $\mathrm{det}[x \idop - E_1^{j_0} - z E_2^{j_0}]$ is irreducible (this might be hard to check), we can apply the above corollary to find a lower bound on $d$. Note that the condition also implies that the $j_0$-th block is not redundant (\cf discussion in \Fref{sec:computing}), since \Fref{lem:nosmalleropnorm} guarantees that smaller blocks have smaller operator norm for some $t \in U$. By contractivity, it then follows that there is no unital completely positive map $\Phi: \MM_{\sum_{j = 1}^s D_j} \to \MM_{D_{j_0}}$ such that
\begin{equation*}
E_k^{j_0} = \Phi\left(\bigoplus_{i = 1}^s \chi_\II(i)E_k^{i}\right) \qquad \forall k \in [2]
\end{equation*} 
and $\II$ such that $D_i < D_{j_0}$ $\forall i \in \II$ and $\chi_\II$ is the indicator function of $\II$. We still have to prove \Fref{lem:nosmalleropnorm}, which we will do now.

\begin{proof}[Proof of \Fref{lem:nosmalleropnorm}]
First note that $A + t B$ has only real eigenvalues for $t \in \setR$. Thus, for any fixed $t$, the characteristic polynomial has $D$ real solutions counting multiplicities. Without loss of generality, let $U \subset W$ be a non-empty open set such that $\opnorm{A + t B}$ is the maximal eigenvalue $\lambda_{max}(t)$ of $A + t B$ for all $t \in U$ and the same holds for $C + t F$. We denote the maximal eigenvalue of the latter matrix by $\mu_{max}(t)$. This is possible, 
since there are only finitely many level crossings in any finite interval (\cf \cite[p.124]{Kato1966}). Moreover, if the minimal eigenvalue of $A + t B$ has larger modulus, we can consider $-(A + t B)$ instead which clearly has the same operator norm and the same is possible for $C + t F$. Then 
\begin{equation*}
V := \Set{(x,z) : x = \lambda_{max}(z), z \in U}
\end{equation*}
is a subset of $\ZZ(p)$ with infinitely many points since $U$ is open in $\setR$. Let 
\begin{equation*}
q(x,z) := \det[x\idop - C - z D]
\end{equation*}
which is a polynomial of degree $d$. Assume $d < D_i$ for all $i \in [s]$. Since the $p_i$ are irreducible by assumption, $p_i$ and $q$ have no common factors for any $i \in [s]$. Therefore, by B\'ezout's theorem and since 
\begin{equation*}
\ZZ(p)\cap\ZZ(q) = \bigcup_{i \in [s]} (\ZZ(p_i)\cap \ZZ(q)),
\end{equation*}
the zero sets of the two polynomials have at most $\sum_{i = 1}^s d \cdot D_i$ points in common (\cf \Fref{lem:realbezout}). Thus, $\ZZ(q)$ especially cannot contain $V$, which implies $\opnorm{A + t B} \neq \opnorm{C + t F}$ for infinitely many $t \in U$, since 
\begin{equation*}
\Set{(x,z): x=\mu_{max}(z), z \in U} \subset \ZZ(q).
\end{equation*} 
\end{proof}
Let us make the following remark concerning our use of B{\'e}zout's theorem. Commonly, the theorem is formulated as an equality (counting multiplicities) over an algebraically closed field such as $\setC$. Since real polynomials are coprime over the reals if and only if they are coprime over the complex numbers (\cf \cite[Theorem 11.9]{Bix2006}), the complex version of B{\'e}zout's theorem implies an upper bound on the number of intersections of real coprime polynomials over the reals which we used here (\cf \cite[Theorem 11.10]{Bix2006}).

The last question we have to answer in this section is the existence of irreducible polynomials of any degree which arise from a determinant of $D \times D$ matrices. We would also like to know how common these are. This will also show that there are effect operators which give rise to irreducible polynomials. For this, we do not require the matrices $A$, $B \in \MM_D^\herm$ to be positive, because we can convert them into effect operators. For any $A \in \MM_D^\herm$ there is a $\lambda \in \setR$ such that $A + \lambda \idop \geq 0$ and we can scale this expression by a positive scalar such that it becomes smaller than the identity operator. This way, we can find non-zero effect operators $E_1$, $E_2$ such that $A, B \in \LL(\Set{E_1, E_2})$ and $E_1$, $E_2$ are fixed points if and only if $A$, $B$ are. Furthermore, $\det[x \idop - E_1 - z E_2]$ is irreducible if and only if $\det[x \idop - A - z B]$ is irreducible for linearly independent $A$, $B \in \LL(\Set{E_1, E_2})$, since a (non-singular) coordinate transformation does not change reducibility properties of the polynomial (\cf \cite[discussion before Theorem 4.5]{Bix2006}). The key ingredient to show existence of the required polynomials is the Lax conjecture which was proven in \cite[Conjecture 4]{Lewis2005}. We give it here for convenience.
\begin{thm}[Lax conjecture]
A polynomial $p \in \bfH^D(3)$ is hyperbolic with respect to the vector $e := (1,0,0)$ and satisfies $p(e) = 1$ if and only if there exist matrices $A,B \in \MM_D^\sym$ such that $p$ is given by
\begin{equation*}
p(x,y,z) = \det\left[ x\idop + y A + z B\right].
\end{equation*}
\end{thm}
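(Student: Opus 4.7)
My plan is to handle the two implications separately. The forward direction is a direct computation: if $p(x,y,z) = \det[x\idop + yA + zB]$ with $A, B \in \MM_D^\sym$, then $p \in \bfH^D(3)$ and $p(e) = \det\idop = 1$, and for any $w \in \setR^3$ the polynomial $t \mapsto p(w - te) = \det[(w_1 - t)\idop + w_2 A + w_3 B]$ is, up to sign, the characteristic polynomial of the real symmetric matrix $w_1 \idop + w_2 A + w_3 B$, so all its roots are real. The converse is the original Lax conjecture, and I would attack it via the classical Dixon determinantal representation combined with the theory of real theta characteristics, which is the route pursued by Helton and Vinnikov.

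For the converse, given a hyperbolic $p$ with $p(e) = 1$, the first reduction is to assume that the projective curve $\ZZ(p)$ in the real projective plane is smooth; singular $p$ can be handled afterwards by perturbing $p$ within the (open) cone of hyperbolic polynomials with $p(e)=1$ and extracting a convergent subsequence of the resulting normalized representations. Assuming smoothness, hyperbolicity is equivalent to $\ZZ(p)$ being an $M$-curve all of whose ovals are nested around the base point $e$; equivalently, every real line through $e$ meets $\ZZ(p)$ in $D$ real points counted with multiplicity. The classical Dixon construction then produces, from a choice of theta characteristic on the Riemann surface $\widetilde{C}$ of $\ZZ(p)$, a complex symmetric determinantal representation $\det[xM_0 + yM_1 + zM_2]$ of $p$. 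The final step is to select a real theta characteristic for which the matrices $M_i$ come out real and $M_0$ is positive definite; conjugating by $M_0^{-1/2}$ and using $p(e) = \det M_0 = 1$ then yields $p(x,y,z) = \det[x\idop + yA + zB]$ with $A := M_0^{-1/2} M_1 M_0^{-1/2}$ and $B := M_0^{-1/2} M_2 M_0^{-1/2}$, both in $\MM_D^\sym$.

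The main obstacle is this final step. Producing a real theta characteristic for which the Dixon map returns real symmetric matrices and a positive-definite $M_0$ requires controlling how complex conjugation acts on the Jacobian of $\widetilde{C}$, and it is precisely here that the $M$-curve structure---and hence the hyperbolicity hypothesis---enters in an essential way. This analysis of real theta characteristics on totally real plane curves is due to Vinnikov and constitutes the genuinely deep content of the theorem; once it is in place, the remaining normalization and smoothing arguments are routine.
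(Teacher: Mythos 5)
The paper does not actually prove this statement: it is quoted as an external result from Lewis, Parrilo and Ramana, whose argument is precisely the Helton--Vinnikov route you outline, so your proposal follows essentially the same path as the source the paper relies on. Your forward direction is complete and correct, and your reduction of the singular case to the smooth case by a Nuij-type perturbation plus a compactness/subsequence argument is exactly the reduction used by Lewis--Parrilo--Ramana (the representing matrices stay bounded along the approximating sequence because their eigenvalues are roots of $p(\cdot,1,0)$ and $p(\cdot,0,1)$, which converge). One factual slip in your sketch of the converse: a smooth hyperbolic plane curve of degree $D\geq 4$ is \emph{not} an $M$-curve --- it has only $\lceil D/2\rceil$ real components (nested ovals, plus a pseudo-line for odd $D$), far below the Harnack bound $g+1$; the property that Vinnikov's analysis of real theta characteristics actually exploits is that hyperbolic curves are dividing (type I) curves whose real ovals are totally nested around $e$, not maximality in Harnack's sense. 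Finally, as you yourself acknowledge, the construction of a real theta characteristic making the Dixon matrices real symmetric with $M_0\succ 0$ is the genuinely deep content, and in your write-up it is cited rather than proved; this is consistent with how the paper treats the theorem (as a black-box input), but it means your text, like the paper, contains a full proof only of the easy direction.
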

The result that $A$, $B$ can be chosen real symmetric is even stronger than needed for our purposes.

\begin{lem} \label{lem:exirredpoly}
For any $D \in \setN$, there is an irreducible homogeneous polynomial and $A$, $B \in \MM_D^\sym$ such that 
\begin{equation*}
p(x,y,z) = \det[x \idop + y A + z B].
\end{equation*}
Moreover, these elements are generic in the set of homogeneous polynomials normalized to $p(e) = 1$ for $e := (1, 0, 0)$.
\end{lem}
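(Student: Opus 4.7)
The plan is to combine the Lax theorem quoted above with a dimension count. By Lax, the image of the map
\begin{equation*}
\Phi:\MM_D^\sym\times\MM_D^\sym\longrightarrow \bfH^D(3),\qquad (A,B)\longmapsto \det[xI+yA+zB],
\end{equation*}
is precisely the set of polynomials hyperbolic with respect to $e$ and normalized by $p(e)=1$, a closed semialgebraic subset of the affine space $\bfH^D(3)_e:=\{p\in\bfH^D(3):p(e)=1\}$. It therefore suffices to exhibit one irreducible polynomial in this image and to show that irreducibility holds on a generic subset of $\bfH^D(3)_e$.

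The first step is to show that the reducible locus $\RR\subset\bfH^D(3)_e$ is contained in a proper algebraic subvariety. For each partition $D=D_1+D_2$ with $D_1,D_2\geq 1$, every reducible $p\in\RR$ with a factor of degree $D_1$ lies in the image of the multiplication map $\bfH^{D_1}(3)_e\times\bfH^{D_2}(3)_e\to\bfH^D(3)_e$ (after rescaling factors to be normalized). Its source has dimension
\begin{equation*}
\bigl(\tbinom{D_1+2}{2}-1\bigr)+\bigl(\tbinom{D_2+2}{2}-1\bigr)=\tbinom{D+2}{2}-D_1D_2-1,
\end{equation*}
using the elementary identity $\binom{D+2}{2}-\binom{D_1+2}{2}-\binom{D_2+2}{2}=D_1D_2-1$. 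Since $D_1D_2\geq 1$, this is strictly less than $\dim\bfH^D(3)_e=\binom{D+2}{2}-1$. Taking the finite union over partitions, $\RR$ is contained in a proper Zariski-closed subset, so $\bfH^D(3)_e\setminus\RR$ is Zariski-open and dense.

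The second step is to show that the image of $\Phi$ has non-empty interior in $\bfH^D(3)_e$. To this end I would exhibit the explicit polynomial
\begin{equation*}
p_0:=\det[xI+yA_0+zB_0]=\prod_{i=1}^D(x+iy+i^2z),\qquad A_0=\diag(1,\ldots,D),\ B_0=\diag(1,4,\ldots,D^2).
\end{equation*}
For $w\notin\setR e$, the roots $w_1+iw_2+i^2w_3$ of $t\mapsto p_0(w-te)$ are pairwise distinct because the direction vectors $(i,i^2)\in\setR^2$ are pairwise non-parallel; hence $p_0$ is strictly hyperbolic. Strict hyperbolicity is an open condition in $\bfH^D(3)_e$ (roots of a real polynomial depend continuously on its coefficients and distinct real roots stay real and distinct under small perturbation), so an open neighborhood $U$ of $p_0$ lies in the image of $\Phi$. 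The intersection $U\cap(\bfH^D(3)_e\setminus\RR)$ is then open and non-empty, and every $p$ in it is an irreducible hyperbolic polynomial normalized by $p(e)=1$; by Lax any such $p$ equals $\det[xI+yA+zB]$ for some $A,B\in\MM_D^\sym$. This gives existence, and the genericity assertion follows immediately since $\RR$ has positive codimension in $\bfH^D(3)_e$. The main technical obstacle is the openness of strict hyperbolicity, which is standard but requires tracking the reality of all $D$ roots simultaneously under perturbation; the remaining ingredients are elementary binomial bookkeeping and a direct appeal to Lax.
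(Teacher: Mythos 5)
Your overall architecture is the same as the paper's: reduce to exhibiting a polynomial that is simultaneously hyperbolic with respect to $e$ and irreducible (Lax then supplies $A,B\in\MM_D^\sym$), show the reducible locus is small by a dimension count on the multiplication maps $\bfH^{D_1}(3)\times\bfH^{D_2}(3)\to\bfH^{D}(3)$ (your count matches the paper's Lemma on reducible elements, which carries it out with semialgebraic dimension and also yields the measure-zero statement you need for genericity), and then intersect with an open subset of hyperbolic polynomials. The difference is where that open set comes from: the paper cites Nuij's theorem that the hyperbolic polynomials with respect to a fixed $e$ have non-empty interior, while you try to replace this citation by an explicit interior point, and this is where your argument has a genuine gap.

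The polynomial $p_0=\prod_{i=1}^D(x+iy+i^2z)$ is hyperbolic but \emph{not} strictly hyperbolic for $D\geq 2$. The roots of $t\mapsto p_0(w-te)$ are $t_i=w_1+iw_2+i^2w_3$, and $t_i=t_j$ exactly when $(i-j)w_2+(i^2-j^2)w_3=0$, i.e.\ $w_2+(i+j)w_3=0$; for example $w=(0,-3,1)\notin\setR e$ gives $t_1=t_2=-2$. The non-parallelism of the vectors $(i,i^2)$ only shows that the bad directions form finitely many lines, not that there are none — indeed no product of real linear forms of degree $\geq 2$ in three variables can be strictly hyperbolic, since any two factors produce a coincident pair of roots along some direction. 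Worse, such a $p_0$ is typically a boundary point of the hyperbolicity set (compare $x^2-y^2$, which is hyperbolic, while $x^2-y^2+\epsilon z^2$ is not for $\epsilon>0$), so the claimed open neighborhood $U$ of $p_0$ inside the image of $\Phi$ does not exist, and the intersection argument collapses. To repair the proof you must either invoke Nuij's result (equivalently, the existence of strictly hyperbolic polynomials of every degree, which constitute the interior of the hyperbolic set) as the paper does, or construct a genuinely strictly hyperbolic example — necessarily not a product of linear forms, e.g.\ $x^2-y^2-z^2$ for $D=2$ and Nuij-type smoothing constructions in higher degree; the rest of your argument (dimension count, openness of strict hyperbolicity, Lax, and genericity from positive codimension) then goes through.
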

\begin{proof}

By the Lax conjecture, it suffices to show that there are homogeneous polynomials of any degree which are both hyperbolic with respect to $e$ and irreducible. The case $D = 1$ is trivial, since there are no reducible elements and all polynomials are hyperbolic. 
Hence, assume $D > 1$. It is known that the set of reducible elements in this case does not contain any open subset in the Euclidean topology (see \Fref{lem:reduciblenoopensets} for a proof). Since the set of hyperbolic polynomials with respect to a fixed point $e$ has non-empty interior in this topology by \cite{Nuij1968} (\cf \Fref{sec:polyappendix} to see that this is not affected by normalization), it especially contains an open set, hence it cannot be fully contained in the set of reducible elements. Therefore, there must be elements which are both hyperbolic and irreducible. \Fref{lem:reduciblenoopensets} also states that the set of normalized reducible polynomials has measure zero, hence its intersection with the set of normalized hyperbolic polynomials has measure zero as well. 
\end{proof}

\Fref{thm:geometric} states that compression is not possible if the polynomial
\begin{equation} \label{eq:detrep}
p(x,y,z) = \det[x\idop - y A - z B]
\end{equation}
is irreducible, where $A$, $B \in \LL(\Set{E_1, E_2})$. \Fref{lem:exirredpoly} therefore implies that effect operators which cannot be compressed are the generic case, \ie the set of $p(x,y,z)$ corresponding to effect operators which admit compression has Lebesgue measure zero in the space of normalized homogeneous polynomials in 3 variables of fixed degree $D$. This follows because $p$ has to be hyperbolic to admit a determinantal representation as in \Fref{eq:detrep}, even if we allow for Hermitian matrices. Furthermore, $p$ needs to be reducible to possibly admit a compression by the above. Unfortunately, irreducibility over the reals is difficult to check.

So far, we have only shown existence of such $p(x,y,z)$. We can also give an explicit example of such a polynomial in every dimension (with Hermitian matrices). 

\begin{prop} \label{prop:irredexample}
Let 
\begin{equation*}
A := \frac{1}{2}\begin{bmatrix}
0 & 1 & \ldots & 1 \\
1 & \ddots & & \vdots\\
\vdots & & \ddots & 1\\
1 & \ldots & 1 & 0
\end{bmatrix}, 
\qquad 
B := \frac{1}{2}\begin{bmatrix}
0 & \imI & \ldots & \imI \\
-\imI & \ddots & & \vdots\\
\vdots & & \ddots & \imI\\
-\imI & \ldots & -\imI & 0
\end{bmatrix},
\end{equation*}
$A$, $B \in \MM_D$, $D \geq 1$. Then the polynomial $p(x,z) := \det[x\idop + A + z B]$ is irreducible.
\end{prop}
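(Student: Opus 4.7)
The plan is to exploit the nilpotency of $A + \imI B$: this forces $p(x, \imI) = x^D$, whose factorization in $\setC[x]$ is rigid, and combining rigidity with the reality of all coefficients will yield a single first-order identity at $z = \imI$ that cannot hold.

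Let $M_+$ and $M_-$ denote the strictly upper and lower triangular $D \times D$ matrices whose nonzero entries all equal $1$. With $\alpha := (1 + \imI z)/2$ and $\beta := (1 - \imI z)/2$ one has $A + zB = \alpha M_+ + \beta M_-$; at $z = \imI$ this collapses to $M_-$, giving $p(x, \imI) = \det(x\idop + M_-) = x^D$. Suppose now that $p = q \cdot r$ in $\setR[x,z]$ with both factors non-units. Since $p$ is monic in $x$ of degree $D$ and the leading $x$-coefficients of $q$ and $r$ lie in $\setR[z]$ and multiply to $1$, they are nonzero real constants; after rescaling, $q$ and $r$ are monic in $x$ of $x$-degrees $a, b$ with $a + b = D$. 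If $a = 0$ then $q \in \setR[z]$ and monicity forces $q = 1$, so we may assume $a, b \geq 1$. Specializing at $z = \imI$ in $\setC[x]$ and using that $x$ is the only prime factor of $x^D$, we obtain $q(x, \imI) = x^a$ and $r(x, \imI) = x^b$. Hence every $x^j$-coefficient of $q$ with $j < a$ (and of $r$ with $j < b$) is a real polynomial in $z$ vanishing at $\imI$, therefore divisible by $z^2 + 1$. Writing
\begin{equation*}
q = x^a + (z^2+1)\,\tilde Q(x, z), \qquad r = x^b + (z^2+1)\,\tilde R(x, z),
\end{equation*}
with $\deg_x \tilde Q < a$ and $\deg_x \tilde R < b$, expanding $p = qr$, subtracting $x^D$, dividing by $z^2 + 1$ and evaluating at $z = \imI$ (the $(z^2+1)^2$ term drops out to first order), we arrive at the key identity
\begin{equation*}
\frac{p_z(x, \imI)}{2\imI} \;=\; x^a\, \tilde R(x, \imI) + x^b\, \tilde Q(x, \imI),
\end{equation*}
whose right-hand side has vanishing constant term in $x$, since $a, b \geq 1$.

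To derive the contradiction I would compute the left-hand side explicitly via Jacobi's formula $p_z = \tr{\mathrm{adj}(x\idop + A + zB)\, B}$. At $z = \imI$ the Neumann series gives $\mathrm{adj}(x\idop + M_-) = \sum_{n=0}^{D-1}(-1)^n x^{D-1-n} M_-^n$, and a short computation using $\tr{M_-^{n+1}} = 0$ and $\tr{M_-^n M_+} = \binom{D}{n+1}$ for $n \geq 1$ (via a hockey-stick identity) produces, after re-indexing,
\begin{equation*}
p_z(x, \imI) \;=\; -\tfrac{\imI}{2}\bigl[(x-1)^D - x^D + D x^{D-1}\bigr].
\end{equation*}
Its constant term in $x$ equals $(-1)^{D+1}\imI/2$, so the left-hand side of the key identity has constant term $(-1)^{D+1}/4$, which is nonzero for every $D \geq 2$. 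This contradicts the vanishing of the constant term on the right, so $p$ is irreducible over $\setR$ (the case $D = 1$ being trivial).

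The main obstacle is the trace computation leading to the compact form $(x-1)^D - x^D + D x^{D-1}$: it is elementary via Jacobi's formula and the nilpotency of $M_-$, but requires some combinatorial bookkeeping through the hockey-stick identity. Conceptually, the argument illustrates a general principle: when a specialization of a polynomial enjoys a rigid factorization in $\setC[x]$ (here $x^D$), Galois invariance under complex conjugation forces any would-be real factor to match to high order at the special point, reducing irreducibility to a single explicit constant-term check.
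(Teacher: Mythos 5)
Your proposal is correct and follows essentially the same route as the paper: specialize at $z=\imI$, where $A+\imI B$ becomes the nilpotent lower-triangular matrix so that $p(x,\imI)=x^D$, and rule out a nontrivial factorization by showing via Jacobi's formula that the first-order term $p_z(x,\imI)=-\tfrac{\imI}{2}\left[(x-1)^D-x^D+Dx^{D-1}\right]$ has nonvanishing constant term $(-1)^{D+1}\imI/2$, which is exactly the paper's expansion in $\tilde z = z-\imI$ (\Fref{lem:detexpansion} together with \Fref{lem:constantzero}). The only differences are cosmetic: you evaluate the trace combinatorially (hockey-stick identity) instead of via the explicit inverse of $\tilde A(x)$, and your monic-normalization and $(z^2+1)$-divisibility argument yields irreducibility over $\setR$ only, whereas the paper's case analysis gives the slightly stronger (though not needed) irreducibility over $\setC$.
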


\begin{proof}
$D = 1$ is trivial, thus assume $D \geq 2$. Reparameterizing with $\tilde{z} := z - \imI$, we obtain
\begin{equation*}
\tilde{p}(x, \tilde{z}) := \det[\tilde{A}(x) + \tilde{z} B]
\end{equation*}
with
\begin{equation*}
\tilde{A}(x) := \begin{bmatrix}
x & 0 & \ldots & 0 \\
1 & \ddots & & \vdots\\
\vdots & & \ddots & 0\\
1 & \ldots & 1 & x
\end{bmatrix}.
\end{equation*}
We know that $\tilde{p}(x,0) = x^D$. In order to prove that $\tilde{p}$ is irreducible, we show first that it cannot be decomposed as $\tilde{p} = q \cdot r$ with $q$, $r \in \setC[x,\tilde{z}]$ with $q(0,0) = 0 = r(0,0)$. Since the constant terms of both $q$ and $r$ must be zero, the expansion of $\tilde{p}(x,\epsilon)$ to first order in $\epsilon$ would have at least one root $x = 0$ if such a decomposition existed (\cf \Fref{lem:constantzero}). We can expand 
\begin{equation*}
\tilde{p}(x,\epsilon) := x^D - \frac{\epsilon \imI}{2} \left[D x^{D-1} + (x-1)^D - x^D\right] + \OO(|\epsilon|^2),
\end{equation*}
(\cf \Fref{lem:detexpansion}). However, 
\begin{equation*}
\tilde{p}(0,\epsilon) = (-1)^{D+1} \epsilon \imI/2 + \OO(|\epsilon|^2),
\end{equation*}
\ie the term linear in $\epsilon$ does not vanish. Therefore, without loss of generality, $r(0,0)$ is non-zero. This implies that $r$ is constant, since $x$ cannot divide $r(x,0)$, which implies that $q(x,0)$ must have degree $D$, and $\tilde{p}$ is of degree $D$. Thus, $\tilde{p}$ is irreducible (even over the complex numbers) and hence the same holds for $p$.
\end{proof}

\subsection{Comparing the arguments}

Before we continue, let us compare the two techniques used to prove that compression is not possible in general. We will see that the algebraic method shows incompressibility for a larger class of effect operators. However, we will see in \Fref{sec:copies} that the geometric argument can be used in situations where the algebraic argument is not applicable.

If $\cstar{\OO}$ is only a subalgebra of $\MM_D$, then $\LL(\OO) \subset U^\ast (\MM_{D_1} \oplus \MM_{D_2})U  $, with $D_1$, $D_2 \in \setN$ and $D_1 + D_2 = D$. Let $A$, $B \in \LL(\OO)$. By the above, they have the form $A = U^\ast(A_1 \oplus A_2) U$, $B = U^\ast (B_1 \oplus B_2)U$ with $A_i$, $B_i \in \MM_{D_i}$, $i \in [2]$.  Hence, 
\begin{align*}
\det[x \idop -  A - z B]& = \det[x \idop_{D_1} -  A_1 - z B_1] \det[x \idop_{D_2} -  A_2 - z B_2]\\
& = p_1(x,z) p_2(x,z) 
\end{align*}
with $p_1$, $p_2$ real polynomials of degree strictly less than $D$. Therefore, we know that $\cstar{\OO} \subsetneq \MM_D$ implies that $\det[x \idop - A - z B]$ for $A$, $B \in \LL(\OO)$ is not irreducible over the reals. We could suppose that also the converse holds, namely that for $A$, $B$ such that the above determinant is a reducible polynomial, $\cstar{\Set{A, B}}$ must be a proper subalgebra of $\MM_D$ (note that the C$^{\ast  }$-algebra does not depend on which generators were used as long as $\LL(\Set{A,B}) = \LL(\OO)$). Alas, this is not the case, as the following counterexample shows:
\begin{ex}
Let $p \in \bfH^3(3)$ be defined as 
\begin{equation*}
p(x,y,z) := (x- 1/2y)(x^2 - y^2 - z^2).
\end{equation*}
This is clearly reducible over the reals. However, $p$ admits a monic determinantal representation
\begin{equation*}
p(x,y,z) = \det[x\idop + y A + z B]
\end{equation*}
such that $\cstar{\Set{A, B}} = \MM_3$.
\end{ex}
\begin{proof}
By unitary invariance of the determinant, we can assume that $A$ is diagonal. It is easy to verify that $p(x,y,z)$ is hyperbolic with respect to $(1,0,0)$ and that $p(1,0,0) = 1$, such that we can choose $A$ to be real (\cf \cite{Lewis2005}). We can therefore compare coefficients directly and solve a system of equations for the matrix coefficients which is reasonably small. One possible determinantal representation is given by
\begin{equation*}
A = \begin{bmatrix}
-1 & & \\
& -1/2 & \\
& & ~1
\end{bmatrix} \qquad B = \begin{bmatrix}
0 & -1/2 & 0 \\
-1/2 & 0 & -\sqrt{3}/2 \\
0 & - \sqrt{3}/2 & 0
\end{bmatrix}.
\end{equation*}
The matrix $B$ has eigenvalues $-1$, $1$, $0$ with corresponding eigenvectors $(1, 2, \sqrt{3})$, $(1, -2, \sqrt{3})$ and $(-\sqrt{3}, 0, 1)$. Note that both matrices have non-degenerate spectrum. By Burnside's theorem (\cf \cite{Lomonosov2004} for the exact statement and a simple proof), the generators of any proper subalgebra of $\MM_D$ must have a common invariant subspace other than $0$ or $\setC^D$. Since the eigenvectors of $A$ and $B$ are pairwise linearly independent, there are no common invariant subspaces of dimension one. As only the eigenvector of $B$ corresponding to eigenvalue $0$ is in any of the two-dimensional subspaces spanned by the pairs of eigenvectors of $A$, there are no common two-dimensional invariant subspaces, either. By Burnside's theorem thus $\cstar{\Set{A,B}} = \MM_3$.
\end{proof}
Note that from \cite{Vinnikov1989}, we know that the determinantal representation of (irreducible smooth) algebraic curves of degree $2$ is unique up to equivalence, whereas in degree $3$, there are infinitely many (not-necessarily real symmetric) determinantal representations. Hence, it was natural to look for counterexamples of this degree.

\section{Upper bounds} \label{sec:upper}

\subsection{Compression to maximal block size}

We will show now that using classical side information we can at least compress to the dimension of the largest block. Note that the proof of the lemma yields explicit coding and decoding channels. 
\begin{thm}[Upper bound on the compression dimension] \label{thm:blockcompression}
Let $\OO \subset \MM_D^\herm$ be such that
\begin{equation} \label{eq:originalalgebra}
\cstar{\OO} = U^\ast \left(\bigoplus_{i = 1}^s \MM_{D_i} \otimes \idop_{m_i}\right) U
\end{equation}
where $\sum_{i = 1}^s D_i m_i= D$ and $U \in \UU(D)$.  Then $\max_{j \in [s]}D_j$ is an upper bound on the minimal compression dimension.
\end{thm}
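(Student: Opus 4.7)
The plan is to reduce the general case to the simpler one where the algebra has no multiplicities and then exhibit explicit compression and decompression channels. First I would invoke \Fref{prop:simpleralgebra} to replace $\OO$ by $\pi(\OO)$ inside $\bigoplus_{i=1}^s \MM_{D_i} \subset \MM_{\sum_i D_i}$; the proposition tells us the minimal compression dimensions of $\OO$ and $\pi(\OO)$ agree, so it is enough to construct encoder and decoder in this simplified setting. After the reduction every $E \in \OO$ is block-diagonal, $E = \bigoplus_i E^i$ with $E^i \in \MM_{D_i}^{\herm}$.

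Next I would write down explicit CPTP maps with quantum target of dimension $d := \max_j D_j$ and classical register of size $n := s$. For each $i \in [s]$ let $W_i\colon \setC^{D_i} \hookrightarrow \setC^{\sum_j D_j}$ denote the isometric inclusion into the $i$-th block (so that $P_i := W_i W_i^\ast$ is the associated block projection) and let $V_i\colon \setC^{D_i} \hookrightarrow \setC^{d}$ be any isometry. I would define
\begin{equation*}
\CC(\rho) := \sum_{i=1}^s \bigl(V_i W_i^\ast \rho W_i V_i^\ast\bigr) \otimes \dyad{i},
\end{equation*}
and, after fixing any reference state $\omega \in \SS(\setC^{\sum_j D_j})$,
\begin{equation*}
\DD(X \otimes \dyad{i}) := W_i V_i^\ast X V_i W_i^\ast + \tr{(\idop - V_i V_i^\ast) X}\,\omega,
\end{equation*}
extended linearly in the classical register.

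I would then verify that both maps are CPTP. The encoder is immediately in Kraus form with operators $K_i := (V_i \otimes \ket{i}) W_i^\ast$, and $\sum_i K_i^\ast K_i = \sum_i P_i = \idop$ gives trace preservation. Each branch of $\DD$ is the sum of the CP map $X \mapsto W_i V_i^\ast X V_i W_i^\ast$ and the CP map $X \mapsto \tr{Q_i X}\,\omega$ with $Q_i := \idop - V_i V_i^\ast \geq 0$, and the identity $V_i V_i^\ast + Q_i = \idop$ yields trace preservation.

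For correctness of $\TT := \DD \circ \CC$, the key observation is that $V_i W_i^\ast \rho W_i V_i^\ast$ is supported on the range of $V_i V_i^\ast$, so the junk term in $\DD$ vanishes on the image of $\CC$ and the composition collapses to the block-diagonal dephasing $\TT(\rho) = \sum_i P_i \rho P_i$. For any block-diagonal $E = \bigoplus_i E^i \in \OO$ one then gets $\tr{\TT(\rho) E} = \sum_i \tr{W_i^\ast \rho W_i \cdot E^i} = \tr{\rho E}$, as required. I expect the only delicate point to be ensuring that $\DD$ is trace preserving on \emph{all} inputs in $\MM_d \otimes \setC^s$ and not just those in the image of $\CC$; this is precisely what forces the inclusion of the junk term built from the complementary projection $Q_i$, and it does not interfere with correctness because the output of $\CC$ avoids the range of $Q_i$ in each classical branch.
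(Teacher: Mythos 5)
Your proposal is correct and follows essentially the same route as the paper: the same reduction via \Fref{prop:simpleralgebra}, the same block-sorting encoder $\CC(\rho)=\sum_i (V_iW_i^\ast\rho W_iV_i^\ast)\otimes\dyad{i}$, and a decoder that is just the Schr\"odinger-picture version of the paper's construction (the paper instead specifies the unital dual map $\DD^\ast$ via $A\mapsto A\oplus\tr{A\eta_j}\idop$, which is dual to your trace-dumping term $\tr{(\idop-V_iV_i^\ast)X}\,\omega$). Your verification that the composition collapses to the pinching $\rho\mapsto\sum_i P_i\rho P_i$, and hence preserves $\tr{\rho E}$ for block-diagonal $E$, is sound.
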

\begin{proof}
By \Fref{prop:simpleralgebra}, we can assume that
\begin{equation*}
\cstar{\OO} = \bigoplus_{i = 1}^s \MM_{D_i}
\end{equation*}
with $\sum_{i = 1}^s D_i = D$. Without loss of generality, let $D_1 \geq D_j$ $\forall j \in [s]$. Let $V_j: \setC^{D_j} \hookrightarrow \setC^D$ be an isometry such that $V_jV_j^\ast   = P_j$ is the projection onto the $j$-th block. In the same vein, let $W_j: \setC^{D_j} \hookrightarrow \setC^{D_1}$ be an isometry such that $W_j W_j^\ast   = Q_j$ is the projection onto $M_{D_j}$, \ie $Q_j = \idop_{D_j} \oplus 0$. We define $\CC: \MM_D \to \MM_{D_1}\otimes \setC^s$ as
\begin{equation} \label{eq:blockcompression}
\CC(\rho) = \sum_{j = 1}^s W_j V_j^\ast   \rho V_j W_j^\ast   \otimes \dyad{j},
\end{equation}
where $\Set{\Ket{j}}_{j = 1}^s$ is an orthonormal basis of $\setC^s$. This map is obviously completely positive, since it is given in Kraus decomposition. It is also trace preserving, because
\begin{align*}
\tr{\CC(\rho)} &= \sum_{j = 1}^s \tr{W_j V_j^\ast   \rho V_j W_j^\ast   \otimes \dyad{j}} \\
&= \sum_{j = 1}^s \tr{P_j \rho} = \tr{\rho}.
\end{align*}
For $\DD$, it is easier to define the dual map. We will need the following maps $\RR_j: \MM_{D_j} \to \MM_{D_1}$ given by
\begin{equation*}
A \mapsto A \oplus \tr{A \eta_j} \idop_{D_1 - D_j} \qquad \eta_j \in \SS(\setC^{D_j}).
\end{equation*}
The choice of $\eta_j$ is somewhat arbitrary and is needed to ensure linearity. This map is completely positive, since it is a composition of $A \mapsto A \otimes \idop_2$ and the direct sum of the identity map and the map $A \mapsto \tr{A\eta_j}\idop_{D_1 - D_j}$, all of which are completely positive and unital.
With this, we define the dual channel $\DD^\ast  : \MM_D \to \MM_{D_1}\otimes \setC^s$ as
\begin{equation} \label{eq:blockdecompression}
A \mapsto \sum_{j = 1}^s \RR_j(V_j^\ast   A V_j) \otimes \dyad{j}.
\end{equation}
This map is unital since $\RR_j$ is. To show correctness, we need to verify that $\tr{\rho E} = \tr{\CC(\rho) \DD^\ast  (E) }$ for all $\rho \in \SS(\setC^D)$, $E \in \OO$. We compute for such $\rho$, $E$
\begin{align*}
\tr{\CC(\rho) \DD^\ast  (E) } & = \sum_{j = 1}^s \tr{[V_j^\ast  \rho V_j V_j^\ast   E V_j] \oplus 0} \\
& = \sum_{j = 1}^s \tr{\rho P_j E P_j},
\end{align*}
where we used $W_j V_j^\ast \rho V_j  W_j^\ast = V_j^\ast \rho V_j \oplus 0$  in the first equation. The last line is equal to $\tr{\rho E}$ since $E$ is block diagonal.

To obtain compression and decompression maps for the original algebra $\cstar{\OO}$ in \Fref{eq:originalalgebra}, we can use the $\ast$-isomorphism given in \Fref{prop:simpleralgebra} and define $\widetilde{\DD}^\ast := \DD^\ast \circ \pi$, $\widetilde{\CC}^\ast := \pi^{-1} \circ \CC^\ast$, where $\CC$, $\DD$ are the maps constructed above.
\end{proof}
We have given an explicit way to compress a subalgebra to the size of its largest block. So far, it is, however, unclear if compression to the largest block is indeed the best we can do or if $d$ can be chosen smaller. Before we will pursue this, we will apply the above theorem in two concrete situations. First, we prove that for $\dim \LL(\OO) < 3$, the set of effect operators $\OO$ is trivially compressible.
\begin{prop}[Compression of a single binary measurement]
Let $\OO = \Set{E, \idop - E}$ be a set of effect operators, where $E \in \EE(\setC^D)$. Then the compression dimension is $1$.
\end{prop}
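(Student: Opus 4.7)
The plan is to derive the statement as a direct corollary of \Fref{thm:blockcompression}. First I would observe that, since $\LL(\OO) = \linspan[\setR]{E, \idop - E; \idop} = \linspan[\setR]{E; \idop}$, the fixed-point constraint \Fref{eq:constraint} only needs to be checked on $E$ itself. Next, I would note that $\cstar{\OO}$ is the $C^\ast$-algebra generated by the single Hermitian operator $E$ and the identity, which is commutative. The finite-dimensional structure theorem thus yields
\begin{equation*}
\cstar{\OO} \simeq \bigoplus_{i = 1}^{s} \MM_{D_i}
\end{equation*}
with $D_i = 1$ for all $i$ (here $s$ is the number of distinct eigenvalues of $E$). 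Applying \Fref{thm:blockcompression} gives the upper bound $\max_{i \in [s]} D_i = 1$ on the compression dimension, and since by definition the compression dimension is a positive integer, it is exactly $1$.

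If one wishes to be concrete rather than invoking \Fref{thm:blockcompression}, the compression-decompression pair can be written down by hand as a measure-and-prepare scheme into $\MM_1 \otimes \setC^2 \simeq \setC^2$. Assuming $E$ is not a multiple of $\idop$ (which is trivial), let $\lambda_{\max}$, $\lambda_{\min}$ be its extreme eigenvalues with eigenstates $\sigma_{\max}$, $\sigma_{\min}$, and set
\begin{equation*}
F_0 := \frac{E - \lambda_{\min}\idop}{\lambda_{\max} - \lambda_{\min}}, \qquad F_1 := \idop - F_0.
\end{equation*}
Take $\CC(\rho) := \tr{\rho F_0}\dyad{0} + \tr{\rho F_1}\dyad{1}$ and $\DD(\dyad{0}) := \sigma_{\max}$, $\DD(\dyad{1}) := \sigma_{\min}$, extended linearly. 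A one-line calculation using $\tr{\sigma_{\max}E} = \lambda_{\max}$ and $\tr{\sigma_{\min}E} = \lambda_{\min}$ verifies $\tr{(\DD \circ \CC)(\rho) E} = \tr{\rho E}$. There is no genuine obstacle here: the only point to notice is the commutativity of $\cstar{\Set{E}}$, after which the result is immediate.
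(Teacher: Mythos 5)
Your proposal is correct and follows essentially the same route as the paper: diagonalize $E$ to see that $\cstar{\OO}$ is commutative, hence $\ast$-isomorphic to a direct sum of one-dimensional blocks, and then invoke \Fref{thm:blockcompression} to get compression dimension $1$. The explicit measure-and-prepare channel you add is a correct (and nice) concrete instance of the construction underlying that theorem, but it is not a different argument.
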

\begin{proof}
As $E$ is an effect operator, we can diagonalize $E$ to show that $\cstar{\OO}$ is $\ast  $-isomorphic to $\bigoplus_{i = 1}^s\setC$, $s \leq D$. The assertion follows from \Fref{thm:blockcompression}.
\end{proof}
We will now continue to use \Fref{thm:blockcompression} to discuss the important example of two von Neumann measurements with two outcomes each.

\subsection{Compressibility for two binary von Neumann measurements} \label{sec:twoproj}

We have shown that compressibility strongly depends on the algebra generated by the desired effect operators. In this section, we will show that in the case of two  bipartite projective measurements, we can compress to qubits ($d = 2$) using classical side information. The idea is that two projections generate an algebra which has a block structure of $2 \times 2$-matrices. This will use a finite-dimensional version of Halmos' two projections theorem (\cf \cite[Theorem 2]{Halmos1969}, \cite[Theorem 1.1]{Boettcher2010}).

Suppose we are given two orthogonal projections $P$ and $Q$ acting on a $\setC^D$ with $\ran P = M$, $\ran Q = N$. Then $\setC^D$ can be decomposed as
\begin{equation*}
\setC^D = (M \cap N) \oplus (M\cap N^\bot) \oplus (M^\bot \cap N) \oplus (M^\bot \cap N^\bot) \oplus M_0 \oplus M_1
\end{equation*}
The spaces $M_0$ and $M_1$ are defined through the decomposition of $\setC^D$ into $M$ and $M^\bot$,
\begin{align*}
M &= (M \cap N) \oplus (M\cap N^\bot) \oplus M_0 \\
M^\bot &= (M^\bot \cap N) \oplus (M^\bot \cap N^\bot) \oplus M_1,
\end{align*}
and their dimensions have to agree in order for them to be non-empty. We will use the abbreviation 
\begin{equation*}
(\alpha_1, \alpha_2, \alpha_3, \alpha_4) = \alpha_1 \idop_{M \cap N} \oplus \alpha_2 \idop_{M \cap N^\bot} \oplus \alpha_3 \idop_{M^\bot \cap N} \oplus \alpha_4 \idop_{M^\bot \cap N^\bot}.
\end{equation*}
If one of these subspaces is $\set{0}$, we will just ignore this contribution irrespective of $\alpha_j$. Note that this is the generic case. With this, we have the following theorem which is \cite[Corollary 2.2]{Boettcher2010}:

\begin{lem}\label{lem:twoproj}
If one of the spaces $M_0$ and $M_1$ is nontrivial, then these two spaces have the same dimension $r \in \setN$ and there exists a unitary matrix $V \in \MM_D$ such that 
\begin{align*}
VPV^\ast   &= (1,1,0,0) \oplus \diag \begin{bmatrix} 1 & 0 \\0 & 0 \end{bmatrix}_{j = 1}^r, \\
VQV^\ast   &= (1,0,1,0) \oplus \diag\begin{bmatrix} 1-\mu_j & \sqrt{\mu_j(1-\mu_j)}\\\sqrt{\mu_j(1-\mu_j)} & \mu_j  \end{bmatrix}_{j = 1}^r,
\end{align*}
where $0 \leq \mu_j \leq 1$ for all $j \in [r]$.
\end{lem}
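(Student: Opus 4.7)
The plan is to work in each summand of the given orthogonal decomposition separately. On the four ``trivial'' subspaces both $P$ and $Q$ act as $0$ or $\idop$, so any orthonormal bases there already realize the $(\alpha_1,\alpha_2,\alpha_3,\alpha_4)$ part of $VPV^\ast$ and $VQV^\ast$. Thus only the restrictions to $M_0\oplus M_1$ require analysis, and the task reduces to building an orthonormal basis of that subspace in which $P$ and $Q$ simultaneously assume the claimed $2\times 2$ block form.

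The key object is the compressed operator $T:=PQP$ restricted to $M_0\subseteq M$. Since $P$ acts as the identity on $M_0$, for $v\in M_0$ one computes $\langle v,Tv\rangle=\|Qv\|^{2}\in[0,\|v\|^{2}]$, so $T|_{M_0}$ is positive with spectrum in $[0,1]$. Neither endpoint can occur: $Tv=0$ forces $Qv=0$ and hence $v\in M\cap N^\perp$, whereas $Tv=v$ forces $Qv\in M$ and $\|Qv\|=\|v\|$, hence $v\in M\cap N$; both contradict $v\in M_0$. So $T|_{M_0}$ admits an orthonormal eigenbasis $\{e_j\}_{j=1}^{r}$ with eigenvalues $1-\mu_j$ and $\mu_j\in(0,1)$.

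Given these eigenvectors I would define the partner vectors
\[
f_j:=\frac{(\idop-P)Qe_j}{\sqrt{\mu_j(1-\mu_j)}}\in M^\perp,\qquad j\in[r].
\]
Using $PQe_j=(1-\mu_j)e_j$ and $Q^2=Q$, a direct expansion shows $\|f_j\|=1$, $\langle f_j,f_k\rangle=0$ for $j\neq k$, and the relations
\[
Qe_j=(1-\mu_j)e_j+\sqrt{\mu_j(1-\mu_j)}\,f_j,\qquad Qf_j=\sqrt{\mu_j(1-\mu_j)}\,e_j+\mu_j f_j,
\]
so on $\mathrm{span}\{e_j,f_j\}$ the matrices $P$ and $Q$ become exactly the announced $2\times 2$ blocks, while $\langle e_j,f_k\rangle=0$ holds automatically from $e_j\in M$ and $f_k\in M^\perp$. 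The inclusion $f_j\in M_1$ follows because $f_j$ is orthogonal to both $M^\perp\cap N$ and $M^\perp\cap N^\perp$: in each case the inner product collapses to $\langle e_j,Qv\rangle$ or $\langle e_j,v\rangle$ with $v\in M^\perp$, which vanishes since $e_j\in M$. Repeating the whole construction with $P$ and $\idop-P$ interchanged (equivalently $M_0$ and $M_1$ swapped) yields an orthonormal family in $M_0$ of cardinality $\dim M_1$; combined with the $r$ orthonormal vectors already placed in $M_1$ this forces $\dim M_0=\dim M_1=r$ as soon as either is nontrivial.

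Concatenating orthonormal bases of the four trivial summands with the pairs $(e_1,f_1),\ldots,(e_r,f_r)$ then produces the unitary $V$ with the desired simultaneous block form. I expect the most delicate step to be the strict confinement of the spectrum of $T|_{M_0}$ to the open interval $(0,1)$, since this is precisely what prevents ``trivial'' directions from sneaking into $M_0$ and thereby spoiling the cleanness of the $2\times 2$ blocks; once that is in place, everything else is an algebraic verification.
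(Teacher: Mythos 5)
Your argument is correct, but note that the paper does not prove this lemma at all: it is imported from the two-projections literature (Halmos' theorem \cite{Halmos1969} in the finite-dimensional form of \cite[Corollary 2.2]{Boettcher2010}), so there is no internal proof to compare against. What you have written is essentially a self-contained reconstruction of the classical argument: restrict the positive contraction $PQP$ to the generic part $M_0$, show that its spectrum avoids $0$ and $1$ (such eigenvectors would lie in $M\cap N^\perp$ resp.\ $M\cap N$, which are orthogonal to $M_0$), diagonalize, and pair each eigenvector $e_j$ with the normalized vector $f_j\propto(\idop-P)Qe_j\in M^\perp$. Your computations of $\norm{f_j}_2$, $\langle f_j,f_k\rangle$, $Qe_j$ and $Qf_j$ all check out, the $f_j$ do lie in $M_1$, and the symmetric construction with $P\leftrightarrow\idop-P$ correctly forces $\dim M_0=\dim M_1=r$; assembling the bases then gives the stated unitary $V$. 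Compared to simply citing the reference, your route is self-contained and even delivers the slightly sharper conclusion $\mu_j\in(0,1)$ on the generic $2\times 2$ blocks.

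One step you use tacitly and should state explicitly: you treat $T=PQP$ as an operator \emph{on} $M_0$ and later use $PQe_j=(1-\mu_j)e_j$, i.e.\ that eigenvectors of the restriction are genuine eigenvectors of $PQP$. This requires the invariance $PQP(M_0)\subseteq M_0$, which is true but deserves a line: for $v\in M_0$ and $w\in M\cap N$ one has $\langle PQv,w\rangle=\langle v,Qw\rangle=\langle v,w\rangle=0$, while for $w\in M\cap N^\perp$ one has $\langle PQv,w\rangle=\langle v,Qw\rangle=0$; equivalently, the four trivial subspaces are invariant under both $P$ and $Q$, hence under the self-adjoint operator $PQP$, so their orthogonal complement inside $M$ is invariant as well. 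With that inserted, the proof is complete: every vector of the assembled orthonormal basis is mapped by $P$ and by $Q$ into its own one- or two-dimensional block, which is exactly the claimed block-diagonal form.
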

This theorem is attributed to \cite[Section 2]{Wedin1983}, but similar questions concerning pairs of projections have already been studied by Camille Jordan in the 19th century. See \cite[Remark 1.3]{Boettcher2010} for a discussion of related results. Thus, the algebra generated by two projections and the identity operator consists essentially of block diagonal matrices with $2 \times 2$-blocks. For three projections, such a form can no longer be proven, since there are cases in which three projections generate the full matrix algebra (\cf concluding remarks of \cite{Boettcher2010}). Hence, we cannot guarantee compression to be possible for more than two bipartite von Neumann measurements.

\begin{prop}[Compression of two binary projective measurements]
Let $\OO = \{P, \idop - P, Q, \idop - Q\}\subset \MM_D$ be a set of effect operators and $P$, $Q$ two distinct orthogonal projections. Then the compression dimension for the set of these effect operators is upper bounded by $d = 2$.
\end{prop}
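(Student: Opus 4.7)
The plan is to combine Halmos' two projections theorem (\Fref{lem:twoproj}) with the upper bound from \Fref{thm:blockcompression}. The observation is that two orthogonal projections, after a simultaneous unitary change of basis, become block diagonal with blocks of size at most two; hence the C$^\ast$-algebra they generate with the identity has all its simple matrix summands of dimension at most two, and \Fref{thm:blockcompression} then gives the desired upper bound $d \leq 2$.

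In more detail, first I would note that the compression dimension is invariant under unitary conjugation of $\OO$: if $U \in \UU(D)$ and $\OO' = U^\ast \OO U$, then CPTP maps $\CC,\DD$ realizing compression of $\OO$ yield $\CC \circ \Theta_U$ and $\Theta_{U^\ast} \circ \DD$ realizing compression of $\OO'$ to the same intermediate dimension (with the same classical side information). This reduces the problem to the normal form provided by \Fref{lem:twoproj}.

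Next, I would apply \Fref{lem:twoproj} to $P$ and $Q$ to obtain a unitary $V$ after which both $VPV^\ast$ and $VQV^\ast$ are block diagonal, consisting of four scalar (one-dimensional) blocks corresponding to $M\cap N$, $M\cap N^\bot$, $M^\bot\cap N$, $M^\bot\cap N^\bot$ and $r$ further two-dimensional blocks. Since $\idop - P$ and $\idop - Q$ are block diagonal in the same decomposition, every element of $\cstar{\OO}$ is block diagonal with summands living in matrix algebras of dimension at most two. Consequently the canonical representation
\begin{equation*}
\cstar{\OO} \simeq \bigoplus_{i=1}^{s} \MM_{D_i} \otimes \idop_{m_i}
\end{equation*}
has $D_i \in \{1,2\}$ for every $i$, so $\max_{i \in [s]} D_i \leq 2$.

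Finally, I would invoke \Fref{thm:blockcompression} to conclude that the compression dimension is at most $\max_{i \in [s]} D_i \leq 2$. There is no substantive obstacle in the argument: the work has already been done in \Fref{lem:twoproj} (which gives the block structure) and in \Fref{thm:blockcompression} (which gives explicit compression and decompression channels). The only mild subtlety is keeping track of the fact that it is the algebraic block structure, not the dimensions of $M\cap N$, etc., that governs the bound, so that the presence of many scalar blocks is harmless.
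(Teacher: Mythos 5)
Your proposal is correct and follows essentially the same route as the paper: apply \Fref{lem:twoproj} to bring $P$ and $Q$ into a common block-diagonal form with blocks of size at most two, observe that $\cstar{\OO}$ then has all matrix summands of dimension at most two, and invoke \Fref{thm:blockcompression}. The explicit unitary-invariance step you add is harmless but not needed, since \Fref{thm:blockcompression} already allows the unitary conjugation in its hypothesis.
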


\begin{proof}
Let $P$, $Q \in \MM_D$ be two distinct orthogonal projections. \Fref{lem:twoproj} provides a unitary operator $V$ such that
\begin{align*}
Q &= V^\ast   ((1,1,0,0) \oplus Q_5 \oplus \ldots \oplus Q_k) V, \\
P &= V^\ast   ((1,0,1,0)\oplus P_5 \oplus \ldots \oplus P_k) V, 
\end{align*}
where $Q_i$, $P_j \in \MM_2$ for $i$, $j \in \Set{5, \ldots, k}$. 
We are therefore in the situation of \Fref{thm:blockcompression} with $D_i = 1$ for $i \in [4]$ and $D_j = 2$ for $j \in [k] \setminus [4]$ (identifying $\Set{(\alpha_1, \ldots, \alpha_4): \alpha_i \in \setC, i \in [4]}$ with $\setC^4$, thus eliminating redundancies). \Fref{thm:blockcompression} gives us a coding map $\CC: \MM_D \to \MM_2 \otimes \setC^k$ and a decoding map $\DD: \MM_2 \otimes \setC^k \to \MM_D$ which satisfies the constraints in \Fref{eq:constraint}.
\end{proof}

\section{Computing the compression dimension} \label{sec:computing}

Hitherto, we have only seen that the dimension of the largest block is attainable for compression (\Fref{thm:blockcompression}), whereas the dimension of the smallest block is a lower bound on the compression dimension (\Fref{thm:newarveson}), which is not necessarily attainable. In this section, we give an algorithm which allows us to compute the minimal dimension we can compress to using classical side information. We will assume that the operators in $\OO$ are already given in block diagonal form. Whether two given Hermitian operators have a common block diagonal structure can be checked using the algorithm in \cite[Section 4]{George1999}. Algorithms to bring a finite-dimensional C$^\ast  $-algebra into block diagonal form can be found \eg in \cite{Murota2010}. We analyze the latter algorithm in \Fref{sec:complexity}. Assume that we are given a set of Hermitian operators $\OO$. By \Fref{prop:simpleralgebra}, we can assume that $\cstar{\OO} = \bigoplus_{i = 1}^s \MM_{D_i}$ with $\sum_{i = 1}^s D_i = D$. The question of finding the minimal dimension which we can compress to amounts to determining which blocks are redundant, as will be proven below (\cf \Fref{thm:algorithmcorrect}). Let us define what we mean by redundant. 
\begin{defi}[Redundancy]
Let $\OO \subset \MM_D^\herm$ be such that 
\begin{equation*}
\cstar{\OO} = \bigoplus_{i = 1}^s \MM_{D_i} 
\end{equation*}
with $\sum_{i = 1}^s D_i = D$. We will call the $i$-th block \emph{redundant} if the compression dimension is smaller than $D_i$.
\end{defi}
We claim that checking redundancy can be phrased as an interpolation problem. Let $D_1$ be a block of maximal dimension (it does not matter which one we take if several of them have the same dimension, since all are redundant if one of them is). Then we ask whether there is a completely positive map $\Phi_1: \MM_{D} \to \MM_{D_1}$ such that 
\begin{equation*}
\Phi_1\left(\begin{bmatrix}
0& & & \\ &E^2 & & \\ & &\ddots & \\& & & E^s 
\end{bmatrix}\right) = E^1 \qquad \forall E \in \OO\cup \Set{\idop},
\end{equation*}
where $E^i \in \MM_{D_i}$ for all $i \in [s]$. This is a problem which can be solved using a semidefinite program (SDP, \cf \cite{Boyd2004}) as shown in \cite{Heinosaari2012}. Without loss of generality, we can assume that $\OO = \Set{\idop, E_2, \ldots, E_k}$, $k \in \setN$. If this is not the case, substitute $\OO$ by a set of linearly independent Hermitian operators including the identity where $k = \dim{\LL(\OO)}$. The SDP is the following:
\begin{align*}
&\mathrm{Minimize} & & \sum_{i = 1}^k \tr{(E^1_i)^T H_i}\\
&\mathrm{Subject~to}& & \sum_{i = 1}^k \begin{bmatrix}
0 & & & \\ &E_i^2 & & \\ &  &\ddots & \\ & & & E_i^s 
\end{bmatrix} \otimes H_i \geq 0 \qquad H_i \in \MM_{D_1}, \forall i \in [k]
\end{align*}
We will refer to an algorithm which solves this problem as InterpolationSDP with parameters $E_1, \ldots, E_k$ and $j$, where $j$ denotes the block which appears in the minimization (in the above case $j = 1$). This SDP has either $-\infty$ or $0$ as solution, the latter solution confirming that there is a $\Phi_1$ as specified above. If such a $\Phi_1$ cannot be found, $D_1$ is the minimal dimension we can compress to, otherwise we proceed to the next block. 
Then we can repeat the procedure with the remaining blocks until we either encounter one block which is not redundant or we are left with only one block. This algorithm is formalized in pseudocode in \Fref{alg:mindimalgo}.

\begin{algorithm}
\caption{Compute minimal compression dimension} \label{alg:mindimalgo}
\begin{algorithmic}[1]
\Require{List of $E_i = \Set{E_i^1, \ldots, E_i^s}, E_i^j \in \MM_{D_j}, i \in [k], j \in [s]$, where $E_1 = \idop$; List $\Set{D_1, \ldots, D_s}$ s. t. $D_1 \geq \ldots \geq D_s$.}
\State $j := 1$
\State $\mathrm{Dmax} := D_1$
\While{$j < s$}
\State $h \gets \mathrm{InterpolationSDP}(E_1, \dots, E_k)(j)$ \Comment 0 if block redundant, $- \infty$ otherwise
\State $j \gets j+1$
\If{h = 0}
\State $\mathrm{Dmax} \gets D_{j}$
\State $E_i^j \gets 0$ $\forall i \in [k]$ \Comment Set largest non-zero block to zero
\Else
\State $j \gets s$ \Comment Terminates computation
\EndIf
\EndWhile
\State \Return $\mathrm{Dmax}$ \Comment Dimension of largest non-redundant block
\end{algorithmic}
\end{algorithm}

To see that there are actually Hermitian operators which give rise to redundant blocks such that the dimension we can compress to is strictly less than the maximal block dimension, we refer to the end of this section. We proceed with a proof that the dimension computed by \Fref{alg:mindimalgo} is indeed the minimal one. 
\begin{thm}[Correctness of the algorithm] \label{thm:algorithmcorrect}
The dimension computed by \Fref{alg:mindimalgo} is the compression dimension. 
\end{thm}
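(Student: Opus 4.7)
The plan is to show that the dimension returned by \Fref{alg:mindimalgo} is simultaneously an upper and a lower bound on the compression dimension of $\OO$. By \Fref{prop:simpleralgebra} we may assume $\cstar{\OO}=\bigoplus_{i=1}^s\MM_{D_i}$ with $D_1\geq\cdots\geq D_s$. Let $D_j$ be the returned value: termination means that the InterpolationSDP was feasible at every preceding step $k\in[j-1]$, yielding unital CP maps $\Phi_k:\bigoplus_{i>k}\MM_{D_i}\to\MM_{D_k}$ with $\Phi_k\bigl(\bigoplus_{i>k}E^i\bigr)=E^k$ for each $E\in\OO\cup\{\idop\}$, and either $j=s$ or the SDP at step $j$ was infeasible.

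For the upper bound, I would first iterate the $\Phi_k$ to reconstruct every block with index $k<j$ directly from $\A_{j:s}:=\bigoplus_{i\geq j}\MM_{D_i}$. Define $\tilde{\Phi}_k:\A_{j:s}\to\MM_{D_k}$ recursively by $\tilde{\Phi}_{j-1}:=\Phi_{j-1}$ and $\tilde{\Phi}_k(Y):=\Phi_k\bigl(\tilde{\Phi}_{k+1}(Y)\oplus\cdots\oplus\tilde{\Phi}_{j-1}(Y)\oplus Y\bigr)$ for $k<j-1$; a telescoping induction shows each $\tilde{\Phi}_k$ is unital CP with $\tilde{\Phi}_k\bigl(\bigoplus_{i\geq j}E^i\bigr)=E^k$ for every $E\in\OO$. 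Using isometries $V_i:\setC^{D_i}\hookrightarrow\setC^D$ onto block $i$, I construct a CPTP map $\CC_1:\MM_D\to\A_{j:s}$ that forwards the diagonal part $V_k^\ast\rho V_k$ into block $k$ for $k\geq j$ and routes it through the trace-preserving dual $\tilde{\Phi}_k^\ast$ for $k<j$. Trace preservation reduces to $\sum_k P_k=\idop_D$ via unitality of every $\tilde{\Phi}_k$, and a direct computation yields $\CC_1^\ast\bigl(\bigoplus_{i\geq j}E^i\bigr)=E$ for $E\in\OO$. Composing with the canonical CPTP inclusion $\DD_1:\A_{j:s}\hookrightarrow\MM_D$ and with the max-block-size compression/decompression pair of \Fref{thm:blockcompression} applied to $\A_{j:s}$ (whose largest block has dimension $D_j$) then yields CPTP maps $\CC,\DD$ realizing compression to dimension $D_j$.

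For the lower bound, assume for contradiction compression to some $d<D_j$; then $\TT^\ast:=\CC^\ast\circ\DD^\ast$ is a unital CP map fixing $\OO$, so \Fref{cor:algsub} produces an index set $\II\subset[s]$ with $\Theta_{V_i}\circ\TT^\ast\circ\Theta_{V_i^\ast}=\id$ for $i\in\II$, $0\oplus\MM_{D_i}\oplus 0\subset\ker\TT^\ast_\infty$ for $i\notin\II$, and $d\geq\max_{i\in\II}D_i$. Since $d<D_j$ and the blocks are ordered non-increasingly, $\II$ contains only indices $i$ with $D_i<D_j$, whence $\II\subset\{j+1,\ldots,s\}$. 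Define $\Phi:\bigoplus_{i>j}\MM_{D_i}\to\MM_{D_j}$ by $\Phi(X):=V_j^\ast\TT^\ast_\infty\bigl(\sum_{i\in\II}V_iX^iV_i^\ast\bigr)V_j$, which is CP by construction. The identities $\TT^\ast_\infty(E)=E$ together with $\TT^\ast_\infty(V_iE^iV_i^\ast)=0$ for $i\notin\II$ yield $\Phi\bigl(\bigoplus_{i>j}E^i\bigr)=E^j$ for all $E\in\OO$, and unitality of $\TT^\ast_\infty$ combined with $\sum_{i\notin\II}P_i\in\ker\TT^\ast_\infty$ gives $\Phi(\idop)=\idop_{D_j}$. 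Hence $\Phi$ solves the InterpolationSDP at step $j$, contradicting infeasibility; in the case $j=s$, the bound $d\geq D_s=\min_iD_i$ is already immediate from \Fref{thm:newarveson}.

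The main obstacle is the verification in the upper-bound step that $\CC_1$ is simultaneously CP, trace preserving, and satisfies the reconstruction identity on every $E\in\OO$; this hinges crucially on the unitality of all iterated $\tilde{\Phi}_k$, which in turn uses that $\idop\in\OO$ is treated as a constraint in each InterpolationSDP. Handling ties $D_j=D_{j+1}$ needs only the observation that the strict inequality $d<D_j$ in the lower bound forces $\II$ to miss every index $i$ with $D_i\geq D_j$, so the construction of $\Phi$ goes through unchanged.
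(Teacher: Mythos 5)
Your proof is correct, and it rests on the same two pillars as the paper's own argument: \Fref{cor:algsub} (identity action on an index set $\II$ together with the kernel property of $\TT^\ast_\infty$) for the lower bound, and the coding scheme of \Fref{thm:blockcompression} for attaining the upper bound. The organization, however, is genuinely different. The paper first shows that the compression dimension equals $\max_{i\in\II}D_i$ for an optimal map, then establishes the characterization that a block admits a solution to the interpolation problem if and only if it is redundant (the ``only if'' direction by turning any certificate set $\JJ$ with smaller maximal block into a better compression), and lets the correctness of \Fref{alg:mindimalgo} follow from that characterization. You instead bound the algorithm's output $D_j$ from both sides directly: your upper bound is built constructively from the algorithm's own SDP feasibility certificates, with the iterated maps $\tilde{\Phi}_k$ explicitly handling the fact that at step $k$ the interpolation is only from the not-yet-zeroed blocks $k+1,\ldots,s$ --- a point the paper's proof leaves implicit --- while your lower bound is essentially the contrapositive of the paper's second step, exhibiting the $\TT^\ast_\infty$-based map as a feasible point of the step-$j$ SDP whenever $d<D_j$ (using that the ordering forces $\II\subset\Set{j+1,\ldots,s}$), plus \Fref{thm:newarveson} for the terminal case $j=s$. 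What your route buys is a proof that literally tracks the algorithm's run and yields explicit coding and decoding maps from the SDP solutions, including unitality of the iterated certificates thanks to $\idop$ being among the interpolation constraints; what the paper's route buys is the standalone redundancy characterization, which it reuses later (e.g.\ in \Fref{cor:possiblyredundant}). Both arguments are sound, and your verification of complete positivity, trace preservation and the reconstruction identity for $\CC_1$ goes through as sketched.
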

\begin{proof}
We proceed in three steps. First, we see that $d$ is the dimension of the largest block on which an optimal compression map acts as the identity. Then, we see that all larger blocks can be interpolated. Last, we see that no other blocks have a solution to the interpolation problem. Assume that we have found a map $\TT^\ast  $ such that $d$ is the compression dimension. Then this map has to be the identity on some blocks by \Fref{cor:algsub}. Let $\II \subset [s]$ be the index set of the blocks for which this is the case. Again by \Fref{cor:algsub}, we can conclude that $d \geq \max_{i \in \II}D_{i} =: D_{\max}$. We have to show that $D_{\max}$ can be attained to complete the first step, whereby $d = D_{\max}$. Since $\TT^\ast  $ and $\TT^\ast  _\infty$ have the same fixed point set, we can use $\TT^\ast  _\infty$ to construct another compression map. Note that all blocks with $i \notin \II$ lie in the kernel of $\TT^\ast  _\infty$ by \Fref{cor:algsub}. Thus, for all $j \in [s]\setminus \II$ there must be a completely positive map $\Phi_j: \MM_D \to \MM_{D_j}$ such that
\begin{equation} \label{eq:redundantblock}
\Phi_j\left(\bigoplus_{i = 1}^s \chi_{\II}(i) E^i\right) = E^j \qquad \forall E \in \OO \cup \Set{\idop},
\end{equation}
where $\chi_{\II}$ is the indicator function of the set $\II$. Hence, we can give the following compression scheme which attains $D_{\max}$. For the decompression map $\DD$, we can almost use the map given in the proof of \Fref{thm:blockcompression} with $d = D_{\max}$, but requiring the sum in \Fref{eq:blockdecompression} to run only over $\II$. Without loss of generality, we can assume that $\II$ is the set of the first $|\II|$ entries, such that $n = |\II|$ is the dimension needed for the classical side information. Let $V_i: \setC^{D_i} \hookrightarrow \setC^D$ be an isometry such that $V_i V_i^\ast   = P_i$ is the projection onto the $i$-th block $\forall i \in [s]$ and $W_j: \setC^{D_j} \hookrightarrow \setC^{D_{\max}}$ an isometry such that $W_j W_j^\ast   = Q_j$ is the projection onto $\MM_{D_j}$ $\forall j \in [s]$. Then we can define the dual compression map $\CC^\ast : \MM_{D_{\max}} \otimes \setC^n \to \MM_D$ as
\begin{equation*}
\CC^\ast  (A) := \TT^\ast  _\infty \left(\sum_{j \in \II} V_j\left(W_j^\ast   \otimes \Bra{j}\right)A\left(W_j \otimes \Ket{j}\right) V_j^\ast  \right).
\end{equation*}
This map can easily be seen to be completely positive, because $\TT^\ast  _\infty$ is. Correctness follows from the construction in \Fref{thm:blockcompression} since the missing blocks are all in the kernel of $\TT_\infty^\ast  $. The same holds for unitality. Hence, $d = D_{\max}$ since otherwise the map just defined would allow for an even better decompression, which contradicts that $d$ is minimal. This shows that all redundant blocks have a solution to the interpolation problem and completes the second step.

If we could find a set $\JJ \subset [s]$ and completely positive maps such that \Fref{eq:redundantblock} holds for this $\JJ$ instead of $\II$ and such that $D_{\max}^\prime = \max_{j \in J}D_j < d$, we could construct a dual channel attaining better compression. To see this, define a map $\RR^\ast  $ which is $\Phi_j$ on blocks $j \in [s] \setminus \JJ$ and the identity on all other blocks. Then we could substitute $\JJ$ for $\II$, $D_{\max}^\prime$ for $D_{\max}$ and $\RR^\ast  $ for $\TT^\ast  _\infty$ in the above construction to obtain a map with compression dimension $D_{\max}^\prime$, which contradicts minimality of $d$. This shows that a block admits a solution to the interpolation problem if and only if it is redundant and completes the last step.
\end{proof}
For a discussion of the complexity of the proposed algorithm, we refer to \Fref{sec:complexity}. Instead, we will show now that unless the algebra has a very specific structure, any block can  be redundant. We start with two lemmas investigating the matrix $\ast$-algebra generated by the image of a unital CP map.

\begin{lem}\label{lem:smalltobig}
Let $\A$ be a unital matrix $\ast$-algebra that contains $\MM_2$ or $\setC^3$ as a subalgebra, which we denote by $\A^\prime$. For every unital matrix $\ast$-algebra $\BB$, there is a unital CP map $\Phi: \A \to \BB$ and positive rank one elements $A_1,A_2,A_3\in\A^\prime$ such that 
\begin{enumerate}
\item $\cstar{\{A_1,A_2,A_3\}}=\A^\prime$,
\item $\cstar{\{\Phi(A_1),\Phi(A_2),\Phi(A_3)\}} = \BB$, and
\item each $\Phi(A_i)$, $i\in[3]$ is positive definite.   
\end{enumerate} 
\end{lem}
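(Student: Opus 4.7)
My plan is to proceed in three steps: (i) pick rank-one positive generators $A_1,A_2,A_3$ of $\A^\prime$; (ii) construct a unital CP map $\Psi:\A^\prime\to\BB$ whose images of these elements are positive definite and generate $\BB$; (iii) extend $\Psi$ to a unital CP map $\Phi:\A\to\BB$ via a conditional expectation.

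For step (i), in the $\A^\prime=\MM_2$ case I would take $A_1=\dyad{0}$, $A_2=\dyad{1}$ and $A_3=\dyad{+}$ with $\Ket{+}=(\Ket{0}+\Ket{1})/\sqrt{2}$; these three rank-one projections share no common proper invariant subspace, so by Burnside's theorem (\cf \cite{Lomonosov2004}) they generate $\MM_2$. In the $\A^\prime=\setC^3$ case, realized as the diagonal of $\MM_3$, I would take the three coordinate projections $A_i=\dyad{i-1}$; they are rank-one and orthogonal and their $\setR$-linear span is all of $\setC^3$, so $\cstar{\Set{A_1,A_2,A_3}}=\setC^3$.

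For step (ii), fix two Hermitian elements $Y_1,Y_3\in\BB$ that generate $\BB$ as a C$^\ast$-algebra. Such a pair always exists: writing $\BB\cong\bigoplus_i\MM_{N_i}$, a generic Hermitian pair generates each summand, and arranging the spectrum of the first generator to be pairwise disjoint across summands recovers the central projections via functional calculus. After rescaling I may assume $\norm{Y_1}_\infty+\norm{Y_3}_\infty<1$. In the $\MM_2$ case I define $\Psi$ on a Pauli basis by $\Psi(\idop)=\idop$, $\Psi(\sigma_x)=Y_1$, $\Psi(\sigma_y)=0$ and $\Psi(\sigma_z)=Y_3$. The Choi matrix of $\Psi$ equals $\tfrac{1}{2}(\idop\otimes\idop+\sigma_x\otimes Y_1+\sigma_z\otimes Y_3)$, which is positive semidefinite since $\opnorm{\sigma_x\otimes Y_1+\sigma_z\otimes Y_3}\leq\norm{Y_1}_\infty+\norm{Y_3}_\infty<1$, so $\Psi$ is unital CP. The images $\Psi(A_1)=\tfrac{1}{2}(\idop+Y_3)$, $\Psi(A_2)=\tfrac{1}{2}(\idop-Y_3)$ and $\Psi(A_3)=\tfrac{1}{2}(\idop+Y_1)$ are positive definite by the same norm bound, and since $\Psi(A_1)-\Psi(A_2)=Y_3$ and $2\Psi(A_3)-\idop=Y_1$, the C$^\ast$-algebra generated by the three images contains $Y_1$ and $Y_3$ and hence equals $\BB$. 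In the $\setC^3$ case, pick instead two Hermitian generators $Y_1,Y_2$ of $\BB$ with $\norm{Y_1}_\infty,\norm{Y_2}_\infty,\norm{Y_1+Y_2}_\infty<1/3$ and set $\Psi(A_1)=\tfrac{1}{3}\idop+Y_1$, $\Psi(A_2)=\tfrac{1}{3}\idop+Y_2$, $\Psi(A_3)=\tfrac{1}{3}\idop-Y_1-Y_2$; these values are positive definite with $\sum_i\Psi(A_i)=\idop$, and they generate $\BB$, so $\Psi$ is unital positive and hence CP because the domain is abelian.

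For step (iii), the unital inclusion $\A^\prime\subseteq\A$ of finite-dimensional C$^\ast$-algebras admits a trace-preserving conditional expectation $E:\A\to\A^\prime$, which is unital and completely positive by Tomiyama's theorem. Setting $\Phi:=\Psi\circ E$ then produces a unital CP map with $\Phi|_{\A^\prime}=\Psi$, so in particular $\Phi(A_i)=\Psi(A_i)$ for $i\in[3]$ and the three asserted properties follow. The main obstacle is verifying complete positivity of $\Psi$ in the $\MM_2$ case, since in the abelian $\setC^3$ case positivity suffices; the explicit norm bound on the Hermitian generators obtained by rescaling is the crucial quantitative input that makes the Choi matrix positive.
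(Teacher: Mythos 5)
Your proposal is correct, but the construction is genuinely different from the paper's. The paper builds the map as a measure-and-prepare map $\tilde{\Phi}(Z)=\frac{1}{c}\sum_{i}\tr{A_i Z}B_i$, where $A_1,A_2,A_3$ are rank-one positive elements of $\A^\prime$ summing to $\idop$ and $B_1,B_2$ are generic positive elements below $\idop/2$ with non-degenerate spectrum generating $\BB$ (and $B_3=\idop-B_1-B_2$); complete positivity is then automatic, and the work goes into a linear-independence computation showing that the three images span $\linspan{B_1,B_2,\idop}$ and hence generate $\BB$, with a separate easy case when $\BB$ is commutative. You instead prescribe the map directly on a basis of $\A^\prime$: on the Pauli basis for $\MM_2$, checking complete positivity by an explicit Choi-matrix norm bound, and on the minimal projections for $\setC^3$, where positivity plus an abelian domain gives CP; as a result $Y_1,Y_3$ (resp.\ $Y_1,Y_2$) are recovered as explicit affine combinations of the images, so generation of $\BB$ is immediate and no case distinction for commutative $\BB$ is needed. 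Both arguments rest on the same two ingredients — generic Hermitian pairs generate a full matrix algebra (\Fref{lem:matrixalgebrameasure1}) and separating the spectrum of one generator across blocks recovers the central projections, so $\BB$ has two Hermitian generators — and your route trades the paper's automatic complete positivity for an explicit Choi computation while making the generation step trivial. Two small repairs: the Choi matrix of your qubit map carries the prefactor $1/4$ rather than $1/2$ (immaterial for positivity), and the inclusion $\A^\prime\subseteq\A$ need not be unital — in the paper's application $\A^\prime$ is a direct summand — so in step (iii) you should first compress to the corner $P\A P$ with $P=\idop_{\A^\prime}$ via the unital CP map $X\mapsto PXP$ and then apply the trace-preserving conditional expectation of $P\A P$ onto $\A^\prime$; the composite is unital, completely positive and restricts to the identity on $\A^\prime$, so your conclusion is unaffected.
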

\begin{proof}
Assume that $\BB$ is not $\ast$-isomorphic to a subalgebra of $\A^\prime$ (otherwise the statement is trivial). Without loss of generality let $\BB = \bigoplus_{i=1}^s\MM_{D_i}$, as $\BB$ is $\ast$-isomorphic to such an algebra and any $\ast$-isomorphism is a unital CP map. Choose $X_i$, $Y_i \in \MM_{D_i}$ such that $0 < X_i,Y_i < \idop/2$, $\cstar{\Set{X_i, Y_i, \idop}}= \MM_{D_i}$ and such that $B_1 := \bigoplus_{i = 1}^s X_i$ and $B_2 := \bigoplus_{i = 1}^s Y_i$ both have non-degenerate spectrum. This is possible, because invoking \Fref{lem:matrixalgebrameasure1} lets us choose generic Hermitian $\tilde{X}_i$, $\tilde{Y}_i$ such that these generate the respective algebras. The non-singular transformation $X_i \mapsto \lambda_{x,i}(X_i + \mu_{x,i} \idop)$ with $\lambda_{x,i}$, $\mu_{x,i} > 0$ and $Y_i \mapsto \lambda_{y,i}(Y_i + \mu_{y,i} \idop)$ with $\lambda_{y,i}$, $\mu_{y,i} > 0$ allow us to choose the elements positive definite and not too large with an appropriate choice of parameters. 

Now set $B_3 := \idop - B_1 -B_2$. Note that $B_3 > 0$ and $\cstar{\Set{B_1, B_2, \idop}} = \BB$ hold by the above construction, which in particular implies that the $B_i$ are linearly independent if $\BB$ is non-commutative. Choose a set of linearly independent, positive rank one operators $A_i$, $i \in [3]$ such that $\sum_{j = 1}^3 A_j = \idop$ and $\tr{A_j} = c$, $c > 0$. For $\A^\prime = \setC^3$, we can pick an ONB and for $A^\prime = \MM_2$ the operators $A_i = 2/3 \dyad{a_i}$ with $\Ket{a_i} = \cos{\theta_i}\Ket{0} + \sin{\theta_i}\Ket{1}$  and $\theta_i = i 2 \pi/3$, $i \in [3]$. Note that both these choices generate $\A^\prime$ as a C$^\ast$-algebra.

We define $\tilde{\Phi}: \A^\prime \to \BB$ as
\begin{equation*}
\tilde{\Phi}(Z) = \frac{1}{c} \sum_{i = 1}^3 \tr{A_i Z}B_j.
\end{equation*}
This is clearly a unital CP map. If $\BB$ is commutative, then it follows that $D_i = 1$ for all $i \in [s]$ and it is easy to see that the assertion of the lemma holds if we extend $\tilde{\Phi}$ to a map $\Phi$ on $\A$. If $\BB$ is not commutative, we claim that $\dim(\ran(\tilde{\Phi})) = 3$, even if the preimage is restricted to the linear span of the $A_i$'s.
This can be seen as follows. Assume that $\tilde{\Phi}(A_1) = \lambda \tilde{\Phi}(A_2) + \mu \tilde{\Phi}(A_3)$ for some $\lambda$, $\mu \in \setC$. By linear independence of the $B_j$, this implies
\begin{equation*}
\tr{A_j (A_1 - \lambda A_2 - \mu A_3)} = 0 \qquad \forall j \in [3].
\end{equation*}
The above implies, however, that $\tr{|A_1 - \lambda A_2 - \mu A_3|^2} = 0$ and hence $A_1 - \lambda A_2 - \mu A_3 = 0$. This is a contradiction due to the linear independence of the $A_i$, which proves the claim that $\dim(\ran(\tilde{\Phi})) = 3$. 

Therefore, $\tilde{\Phi}$ maps $\linspan{A_1,A_2,A_3}$ onto  $\linspan{B_1, B_2, \idop}$. Let $\Phi$ be the extension of $\tilde{\Phi}$ to $\A$. So we have finally proven claim (2) of the Lemma since
\begin{equation*}
\BB\supset \cstar{\Phi(\A)} \supset \cstar{\{\Phi(A_1),\Phi(A_2),\Phi(A_3)} = \cstar{\Set{\idop, B_1, B_2}} = \BB.
\end{equation*}
\end{proof}

\begin{lem} \label{lem:toosmall}
Let $\A$ be $\ast$-isomorphic to $\setC^1$ or $\setC^2$ and let $\BB$ be non-commutative. Then there is no unital CP map $\Phi: \A \to \BB$ such that $\cstar{\Phi(\A)} = \BB$.
\end{lem}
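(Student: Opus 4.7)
The plan is to argue directly that for such a small commutative source algebra $\A$, the image $\Phi(\A)$ cannot generate a non-commutative C$^\ast$-algebra, simply because $\Phi(\A)$ is a linear subspace of too small dimension and spanned by elements that jointly generate a commutative algebra. The crucial observation is that $\cstar{\Phi(\A)}$ is, by definition, the C$^\ast$-algebra generated by the set $\Phi(\A)$ together with the identity, and this coincides with the C$^\ast$-algebra generated by any spanning set of $\Phi(\A)$.

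I would split into the two cases. If $\A \simeq \setC^1$, unitality forces $\Phi(\idop_\A) = \idop_\BB$, so $\Phi(\A) = \setC \idop_\BB$ and hence $\cstar{\Phi(\A)} = \setC \idop_\BB$, which is commutative and therefore cannot equal the non-commutative $\BB$. If $\A \simeq \setC^2$, then $\A = \linspan[\setC]{\idop_\A, P}$ for some orthogonal projection $P \in \A$. By linearity,
\begin{equation*}
\Phi(\A) = \linspan[\setC]{\idop_\BB, \Phi(P)},
\end{equation*}
and $\Phi(P) \in \BB$ is self-adjoint because $\Phi$ is positive. Therefore $\cstar{\Phi(\A)}$ is generated by the single self-adjoint element $\Phi(P)$ and the identity. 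Such a C$^\ast$-algebra is commutative (it is isomorphic to $C(\sigma(\Phi(P)))$ by the continuous functional calculus, or equivalently it equals the span of the spectral projections of $\Phi(P)$), so it cannot be equal to the non-commutative $\BB$.

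In both cases we reach a contradiction, which proves the statement. There is essentially no obstacle here; the content of the lemma is just the elementary fact that a commutative operator system of dimension at most $2$ generates a commutative C$^\ast$-algebra under any (unital, linear) image, because a single self-adjoint generator together with the identity always yields a commutative C$^\ast$-algebra. The lemma complements \Fref{lem:smalltobig} by identifying exactly the obstruction (dimension of the source algebra) that prevents surjectively generating a non-commutative target.
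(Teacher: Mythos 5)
Your proof is correct and follows essentially the same route as the paper's (which disposes of the $\setC^1$ case by linearity and the $\setC^2$ case by unitality, since in both cases the image is spanned by the identity and at most one further self-adjoint element, so the generated C$^\ast$-algebra is commutative); you have merely spelled out the details the paper leaves implicit.
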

\begin{proof}
If $\A = \setC^1$, then $\cstar{\Phi(\A)}$ is clearly commutative due to linearity of the map. The algebra is also commutative for $\A = \setC^2$ due to unitality of $\Phi$. 
\end{proof}
As a corollary to these two lemmas, we can now investigate the possible redundancies of blocks in the representation of a finite-dimensional C$^\ast$-algebra.
\begin{cor}[Tightness of algebraic bounds]\label{cor:possiblyredundant}
Let $\A = \bigoplus_{i = 1}^s \MM_{D_i}$.
\begin{enumerate}
\item If $\A $ contains three $\MM_1$-blocks in its block structure, then there is a set of effect operators $\WW \subset \A$ with compression dimension $d=1$ and s.t. $\cstar{\WW} = \A$.
\item If $\A$ contains $\MM_\delta$ for some $\delta \geq 2$ in its block structure, then we can find a set of effect operators $\WW \subset \A$ with compression dimension $d=\delta$ and such that $\cstar{\WW} = \A$.
\item Let $\delta := \max_{i \in [s]}D_i$. If $\A \setminus \MM_\delta$ does neither contain $\MM_2$ nor $\setC_3$ as subalgebra, then every $\WW$ with $\cstar{\WW} = \A$ has compression dimension $\delta$.
\end{enumerate}
\end{cor}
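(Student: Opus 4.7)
The plan is to establish each of the three claims separately, using \Fref{lem:smalltobig} to construct interpolating CP maps and \Fref{lem:toosmall} to forbid them, together with the algorithmic characterization of the compression dimension from \Fref{thm:algorithmcorrect} and with spectral calculus. In each case $\WW$ will be built in the form $\{V \oplus \Phi(V) : V \in \WW^\prime\}$ for a suitable unital CP map $\Phi$ and generating set $\WW^\prime$ of Hermitian effects; the main work lies in determining which blocks are redundant.

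For (1), I regard the three $\MM_1$-blocks as $\setC^3 \subset \A$ and set $\tilde\BB := \bigoplus_{i: D_i > 1}\MM_{D_i}$. \Fref{lem:smalltobig} with $\A^\prime=\setC^3$ and target $\tilde\BB$ yields a unital CP map $\Phi:\A\to\tilde\BB$ and rank-one $A_i = E_{ii}\in\setC^3$ with $\Phi(A_i)$ positive definite, $\Phi(A_i)<\idop_{\tilde\BB}$, and $\cstar{\Phi(A_i)}_{i=1}^3=\tilde\BB$. Put $W_i := A_i\oplus\Phi(A_i)$. Because $\mathrm{spec}(A_i)\subset\{0,1\}$ and $\mathrm{spec}(\Phi(A_i))\subset(0,1)$ are disjoint, polynomial functional calculus on $W_i$ produces the central projections onto $\setC^3$ and $\tilde\BB$, so $\cstar{\WW}=\A$. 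The compression scheme achieving $d=1$ takes $\CC$ to be the POVM with effects $\{W_i\}_{i=1}^3$ and $\DD$ to send outcome $i$ to the state $\tau_i := E_{ii}\oplus 0$; since $\tr{\tau_i W_j}=\delta_{ij}$, the composition $\DD\circ\CC$ preserves $\tr{\rho W_j}$ for every $\rho$ and $j$.

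For (3), the bound $d\leq\delta$ is \Fref{thm:blockcompression}. Write $\A = \MM_\delta\oplus\BB$; by hypothesis $\BB$ is commutative of dimension at most $2$. For any $\WW$ with $\cstar{\WW}=\A$, the central projection onto $\MM_\delta$ lies in $\cstar{\WW}$, so the $\MM_\delta$-components of the elements of $\WW$ generate $\MM_\delta$ as a C$^\ast$-algebra. If $\MM_\delta$ were redundant, there would be a unital CP map $\Psi:\BB\to\MM_\delta$ sending the $\BB$-parts of $\WW$ to the corresponding $\MM_\delta$-parts, forcing $\cstar{\Psi(\BB)}=\MM_\delta$. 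This contradicts \Fref{lem:toosmall}, so $\MM_\delta$ is non-redundant and $d\geq\delta$.

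For (2), set $\BB := \bigoplus_{i: D_i \ne \delta}\MM_{D_i}$, designating one $\MM_\delta$-block as distinguished if the multiplicity exceeds one. Since $\delta\geq 2$, $\MM_\delta$ contains $\MM_2$ or $\setC^3$ as a subalgebra $\A^\prime$; \Fref{lem:smalltobig} provides $\tilde\Phi:\A^\prime\to\BB$ with $3$-dimensional image and $\cstar{\tilde\Phi(\A^\prime)}=\BB$. Extend to a unital CP map $\Phi:\A\to\BB$ as $\Phi := \tilde\Phi\circ\EE$ for a conditional expectation $\EE:\A\to\A^\prime$, so that $\Phi|_{\MM_\delta}$ still has image of dimension $3$. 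Choose a $\setR$-linear Hermitian basis $V_1,\ldots,V_{\delta^2}\in\MM_\delta^\herm$ of effects containing the $A_i$ from the lemma, and set $\WW := \{V_j\oplus\Phi(V_j)\}_j$; each $W_j$ is an effect because $\Phi$ is unital CP. The non-trivial kernel of $\Phi|_{\MM_\delta}$ allows picking a full-rank $Z\in\ker\Phi|_{\MM_\delta}$, from which (after shifting and rescaling) one obtains a $V_j$ with $\mathrm{spec}(V_j)$ disjoint from $\mathrm{spec}(\Phi(V_j))$; spectral calculus then separates blocks and yields $\cstar{\WW}=\A$. Every block of $\BB$ is redundant via $\Phi$ composed with the corresponding block projection. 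The main obstacle is the non-redundancy of $\MM_\delta$: any unital CP map $\Psi:\BB\to\MM_\delta$ satisfying $\Psi(\Phi(V_j)) = V_j$ for all $j$ would, by complex linearity and the fact that $\{V_j\}$ spans $\MM_\delta$ over $\setC$, satisfy $\Psi\circ\Phi|_{\MM_\delta} = \id_{\MM_\delta}$; this forces $\Phi|_{\MM_\delta}$ to be injective, contradicting its $3$-dimensional image together with $\dim\MM_\delta = \delta^2 \geq 4$. Therefore $\MM_\delta$ is non-redundant, and \Fref{thm:algorithmcorrect} gives compression dimension $d = \delta$.
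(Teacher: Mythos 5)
Your argument is correct in substance and, for claims (1) and (2), takes a genuinely different route from the paper. The paper proves (1) and (2) for $\delta=2$ in one stroke by applying \Fref{lem:smalltobig} with $\A^\prime=\setC^3$ or $\MM_2$ and $\BB=\A\setminus\A^\prime$, taking $\WW=\Set{A_i\oplus\Phi(A_i)}_{i\in[3]}$, deducing $\cstar{\WW}=\A$ from a representation-theoretic multiplicity argument (the $\Phi(A_i)$ are positive definite), and then handles $\delta>2$ by embedding $\MM_2$ into $\MM_\delta$, enlarging $\WW$ by elements of the form $A\oplus 0$ that generate the whole $\delta$-block, and ruling out redundancy of that block via operator norms and contractivity of unital positive maps. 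You instead pair a full $\setR$-basis of effects $\Set{V_j}$ of $\MM_\delta^\herm$ with $\Phi(V_j)$: this makes the redundancy of every other block immediate (each $\BB$-component of an element of $\WW$ is by construction the image of its $\delta$-component under the fixed unital CP map $\pi_j\circ\Phi$), and it converts the non-redundancy of $\MM_\delta$ into a clean dimension count --- an interpolating unital CP map $\Psi$ would invert $\Phi|_{\MM_\delta}$ on a set spanning $\MM_\delta$ over $\setC$, which is impossible since that restriction has a three-dimensional range and $\dim\MM_\delta=\delta^2\geq 4$. Your spectral-separation trick (a full-rank element of $\ker\Phi|_{\MM_\delta}$, shifted and rescaled into an effect, yields the central projection onto the $\delta$-block by functional calculus) replaces the paper's multiplicity argument for $\cstar{\WW}=\A$, and your explicit measure-and-prepare channels in (1) replace the appeal to \Fref{thm:algorithmcorrect} for the upper bound; claim (3) is argued essentially as in the paper, with the reduction to \Fref{lem:toosmall} spelled out. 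What your route buys is a uniform treatment of all $\delta\geq2$ and a very transparent lower-bound argument; the paper's $\WW$ is smaller (three, respectively few, elements).

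One bookkeeping point needs fixing. In (1) you set $\tilde\BB=\bigoplus_{i:D_i>1}\MM_{D_i}$ and in (2) $\BB=\bigoplus_{i:D_i\neq\delta}\MM_{D_i}$; with these choices any further $\MM_1$-blocks beyond the chosen three (respectively any non-distinguished $\MM_\delta$-blocks when the multiplicity exceeds one) are hit by no element of $\WW$, so $\cstar{\WW}\neq\A$ in those cases. The remedy is simply to let the target of $\Phi$ be the direct sum of \emph{all} blocks other than the distinguished $\setC^3$ resp.\ $\MM_\delta$, exactly as in the paper's choice $\BB=\A\setminus\A^\prime$ (\Fref{lem:smalltobig} allows an arbitrary target algebra); the POVM completeness in (1), the spectral separation, the redundancy maps and the dimension count all go through unchanged.
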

\begin{proof}
Claim (1) as well as claim (2) for $\delta=2$ follow 
directly from \Fref{lem:smalltobig} when choosing $\A^\prime$ as the considered subalgebra $\MM_2$ or $\setC^3$ and $\BB:=\A\setminus\A^\prime$: we define $\WW = \Set{A_i \oplus \Phi(A_i): i \in [3]}$ where $A_i$, $\Phi$ are as in the lemma. From the representation theory of matrix $\ast$-algebras it follows that $\cstar{\WW} = \A$, since we constructed the map such that $\Phi(A_i) > 0$. This excludes that the block generated by the $A_i$ has multiplicity greater than $1$ in $\cstar{\WW}$. The assertions then follows from \Fref{thm:algorithmcorrect}.

Claim (2) with $\delta>2$ is a simple consequence of the assertion for $\delta=2$ by using an isometric embedding of $\MM_2$ into $\MM_\delta$. The set $\WW$ is then obtained by taking the above (embedded) construction and adding sufficiently many elements of the form $A\oplus 0\in\MM_\delta\oplus (\A\setminus\MM_\delta)$ so that the $C^*$-algebra that they generate is the entire block $\MM_\delta\oplus 0$ (and not only the embedded $\MM_2$ subalgebra). We can choose one of these elements such that all blocks of dimension less than $\delta$ have operator norm strictly less than the block of dimension $\delta$. This guarantees that the compression dimension is not smaller than $\delta$ because of the contractivity of unital positive maps.

Claim (3) follows directly from \Fref{lem:toosmall} and \Fref{thm:algorithmcorrect}.
\end{proof}
Note that we can extend the above corollary to general matrix $\ast$-algebras by invoking \Fref{prop:simpleralgebra}. We have therefore shown that unless we are in the last case of the corollary, our upper and lower bounds on the compression dimension are tight.


\section{Generalizations} \label{sec:generalizations}

\subsection{Measurements and expectation values} \label{sec:modimeasure}

When we presented the setup in \Fref{sec:setup}, we were interested in preserving the measurement statistics, \ie the probabilities of each outcome of a fixed set of measurements. Subsequently, we realized that it makes no difference whether we assume $\OO \subset \EE(\setC^D)$ or $\OO \subset \MM_D^\herm$, since only the operator system generated by $\OO$ was important. This implies, however, that instead of (approximately) preserving the probabilities for each outcome, we could also only aim to preserve the expectation values of a set $\OO \subset \MM_D^\herm$ of observables and all results of this paper still apply. In particular, for generic $A$, $B \in \MM_D^\herm$ such that $\cstar{\set{A,B}} = \MM_D$, \Fref{thm:newarveson} still states that those observables are incompressible. 

Another possible modification of our setup would be to ask only for measurements $E^\prime$ on the compressed state $\CC(\rho)$ which return the original statistics, but without imposing that $E^\prime = \DD^\ast(E)$ for some $E \in \OO$ and some channel $\DD$. This relaxation, however, can easily be seen to allow for more powerful compression in certain cases. Let $\OO = \Set{E_1, E_2, E_3}$, where the elements form a POVM and $E_1$ and $E_2$ are generic. In the alternative setup, we can see that it is possible to compress to $d=1$ using 
\begin{equation*}
\CC(\rho) = \sum_{i=1}^3 \tr{E_i \rho} \dyad{i}
\end{equation*}
and choosing $E^\prime_i = \dyad{i}$. However, \Fref{thm:newarveson} implies that $\OO$ is incompressible in the original setup. The explanation for this difference is that there is no channel $\DD$ which allows to map the elements of $\OO$ to projections. Therefore, we see that this modification changes the problem significantly and we leave it for future work. 

\subsection{Positive and Schwarz maps}

The aim of this section is to explore how much we can relax the requirements on the compression and decompression channel. We still consider the setup of \Fref{sec:setup}, but now we require $\CC$, $\DD$ only to be positive instead of completely positive. Since the argument at the beginning of the proof of \Fref{thm:geometric} uses only that the maps involved are positive and trace preserving to apply the Russo-Dye theorem, the results of \Fref{thm:geometric} carry over to this setting. Note that the results obtained in the algebraic setting do not carry over to arbitrary positive maps, since for Arveson's result it is important that the map is a Schwarz map (see remark before \cite[Example 5.3]{Wolf2012}). Complete positivity, however, is not needed; a trace preserving positive map whose dual is also a Schwarz map is enough. See \Fref{lem:schwarzcesaro} for a proof that the Ces{\`a}ro-mean of a Schwarz map is again a Schwarz map. Using  \Fref{lem:noinfo} instead of the corresponding well-known result for completely positive maps, we can extend \Fref{thm:newarveson} and \Fref{lem:Arvesonthm} to Schwarz maps. From a physicist's perspective exchanging $\DD$ for a positive instead of a completely positive map can be interpreted as measuring different effect operators and inferring from them the statistics with respect to the original effect operators. Note that this is still less general than the modified setup discussed in \Fref{sec:modimeasure}. Since only completely positive maps are considered meaningful evolutions of a physical system, we have proven the theorems under these stronger conditions.

\subsection{Completely positive maps on operator spaces}

Most of our analysis has been carried out in the Heisenberg picture. The dual maps $\TT^\ast  $, $\CC^\ast  $ and $\DD^\ast  $ have been assumed to be completely positive on the full matrix algebra. However, one could argue that only complete positivity on the operator system $\LL^\prime(\OO)$ generated by $\OO$ is required. By Arveson's extension theorem \cite[Theorem 7.5]{Paulsen2003} (or \cite[Theorem 6.2]{Paulsen2003}, since we only need the finite-dimensional version) any completely positive map $\TT^\ast  : \LL^\prime(\OO) \to \MM_D$ can be extended to a completely positive map on $\MM_D$. Hence, as long as we consider the setup relevant for quantum information, we need not distinguish whether $\TT^\ast  $ is completely positive on the full matrix algebra or on the operator system. For positive maps, this is no longer true in general (see \cite[remark after Corollary 7.6]{Paulsen2003}).

\subsection{Finitely many states}

This section will briefly address the question of what can be proven if instead of all states $\SS(\setC^D)$ we only want to measure effect operators on a subset $\SS_{\II} = \Set{\rho_i : i \in \II}$ for some states $\rho_i \in \SS(\setC^D)$ and some index set $\II \subset \setR$. We note that the situation in \Fref{sec:setup} is not changed if $\MM_D = \linspan[\setC]{\SS_{\II}}$, since again only the operator space spanned by the states matters, not the states themselves. We could thus exchange the set of all states for the set of pure states and our results in the above sections still hold.

Consider next the situation in which we allow only for states from $\SS_{\II}$, but this time we want to measure a set of effect operators $\widetilde{\OO}$ such that $\LL(\widetilde{\OO}) = \MM^\herm_D$. For example, $\widetilde{\OO} = \EE(\setC^D)$. This is the converse situation of what we considered before. Although we cannot apply the techniques used so far in this situation, this setup is actually significantly simpler. Let us adapt our definition of compressibility to this new setting.
\begin{defi}[Compression of states]
%
Let $\SS_{\II}$ be a set of states in $\MM_D$. The  \emph{compression dimension} of $\SS_{\II}$ is the smallest $d\in\setN$ for which there is an $n\in\setN$, 
a CPTP map $\CC: \MM_D \to \MM_{d}\otimes \setC^n$ and a CPTP map $\DD: \MM_{d}\otimes \setC^n \to \MM_D$ such that for their composition $\TT = \DD \circ \CC$, the constraints 
\begin{equation}\label{eq:stateconstraint}
\tr{\rho A} = \tr{\TT(\rho) A}= \tr{\rho \TT^\ast  (A)} \qquad \forall \rho \in \SS_{\II}, \forall A \in \MM^\herm_D
\end{equation}
are satisfied. If the compression dimension equals $D$, $\SS_{\II}$ is said to be \emph{incompressible}.
\end{defi}

Then we can give a lower bound on the compression dimension in this setup.

\begin{thm}[Lower bound for states] \label{thm:nostatecomp}
Let $\SS_{\II}$ be a set of states and
\begin{equation*}
\cstar{\SS_\II} = W \left[ 0 \oplus \bigoplus_{k = 1}^s (\MM_{D^\prime_k} \otimes \idop_{m^\prime_k}) \right]W^\ast
\end{equation*}
with $D_0 + \sum_{k = 1}^s m_k D^\prime_k = D$ and $W \in \UU(D)$. Then the compression dimension is $\max_{k \in [s^\prime]} D^\prime_k$. In particular, if  $\cstar{\SS_{\II}} = \MM_D$, then $\SS_{\II}$ is incompressible.
\end{thm}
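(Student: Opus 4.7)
We plan to prove the matching bounds $d\leq \max_k D'_k$ and $d\geq \max_k D'_k$ separately. For the upper bound we give an explicit compression. After restricting to the support of $\cstar{\SS_\II}$ (no state in $\SS_\II$ touches the $0$-block, so the compression dimension is unchanged), we may assume $D_0=0$, so that $\A:=\cstar{\SS_\II}=\bigoplus_{k=1}^s \MM_{D'_k}\otimes\idop_{m'_k}$ is unital in $\MM_D$. Setting $d_{\max}:=\max_k D'_k$ and choosing isometries $V_k:\setC^{D'_k m'_k}\hookrightarrow\setC^D$ onto the $k$-th block and $W_k:\setC^{D'_k}\hookrightarrow\setC^{d_{\max}}$, we take as compression the instrument
\[
\CC(\rho):=\sum_{k=1}^s W_k\,\ptr{2}{V_k^*\rho V_k}\,W_k^*\otimes\dyad{k},
\]
which records the block label classically and stores the non-redundant content of the $k$-th block inside $\MM_{d_{\max}}$. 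The matching decompression reinserts $\idop_{m'_k}/m'_k$ on the multiplicity factor, embeds back into $\MM_D$ through $V_k$, and adds a fixed state weighted by the missing trace so that $\DD$ is trace preserving. Since $V_k^*\rho V_k=A_k\otimes\idop_{m'_k}$ for $\rho\in\A$, a direct calculation gives $\DD\circ\CC(\rho)=\rho$, and $\SS_\II\subset\A$ yields $d\leq d_{\max}$.

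For the lower bound, let $\TT=\DD\circ\CC$ be any CPTP map witnessing state compression of $\SS_\II$ to intermediate dimension $d$. We take a finite convex combination $\sigma\in\mathrm{conv}(\SS_\II)$ whose support equals the unit of $\A$ and reduce to that support, so that $\sigma$ has full rank on $\setC^D$ and $\TT(\sigma)=\sigma$. The key input will be the Koashi--Imoto / Lindblad structure theorem for a CPTP map with a faithful fixed state: up to a unitary there is a decomposition $\setC^D=\bigoplus_j \HH^L_j\otimes\HH^R_j$ with
\[
\TT=\bigoplus_j (\Phi_j\otimes\id_{\HH^R_j}),\qquad \mathrm{Fix}(\TT)=\bigoplus_j \setC\tau_j\otimes\BB(\HH^R_j),
\]
where each $\Phi_j$ is ergodic on $\BB(\HH^L_j)$ with unique faithful fixed state $\tau_j$. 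Consequently $\cstar{\mathrm{Fix}(\TT)}=\bigoplus_j \cstar{\{\tau_j\}}\otimes\BB(\HH^R_j)$, whose simple summands are copies of $\MM_{\dim\HH^R_j}$. The inclusion $\SS_\II\subset\mathrm{Fix}(\TT)$ promotes to $\A=\cstar{\SS_\II}\subset\cstar{\mathrm{Fix}(\TT)}$, so by the classification of $*$-homomorphisms of finite-dimensional $C^*$-algebras, each simple summand $\MM_{D'_k}$ of $\A$ embeds injectively into some $\MM_{\dim\HH^R_j}$, giving $D'_k\leq\dim\HH^R_j$ for some $j$. On the other hand every Kraus operator of $\TT$ in the above form has the block shape $K^L\otimes\idop_{\HH^R_j}$, hence rank at least $\dim\HH^R_j$; combining this with the Schmidt-number bound established inside the proof of \Fref{lem:boundedsideinfo} (which caps the maximal Kraus rank in any Kraus decomposition of $\TT$ by $d$) forces $d\geq\max_j\dim\HH^R_j\geq\max_k D'_k$.

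The main technical obstacle we anticipate is invoking the Koashi--Imoto tensor structure on $\mathrm{Fix}(\TT)$ cleanly in the finite-dimensional setting; once that decomposition is in place, matching simple summands and extracting the Kraus-rank lower bound are routine, and the incompressibility claim for $\cstar{\SS_\II}=\MM_D$ follows as the special case $s=1$, $D'_1=D$, $m'_1=1$.
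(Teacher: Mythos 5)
Your upper bound is essentially the paper's own construction (a classical block flag plus reinsertion of the multiplicity factor, i.e.\ the modification of \Fref{thm:blockcompression} indicated at the end of the paper's proof), and the reduction to a faithful invariant state on the joint support is unproblematic in finite dimensions. For the lower bound you take a genuinely different route: the paper restricts $\TT$ to each block of the fixed-point set $\bigoplus_k \MM_{D_k}\otimes\rho_k$, composes with $A\mapsto A\otimes\rho_k$ and the partial trace to exhibit the identity channel on $\MM_{D_k}$ factored through $\MM_d\otimes\setC^n$, and then invokes no information without disturbance (\Fref{lem:noduplicatecompression} together with \Fref{lem:noinfo}); you instead argue through the Kraus structure of $\TT$ combined with the Schmidt-number bound from \Fref{lem:boundedsideinfo}. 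The paper's route has the extra benefit that \Fref{lem:noinfo} is proved for merely positive maps, so its version of the bound survives a relaxation your Kraus-based argument does not obviously allow.

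One step of your lower bound is false as stated: \Fref{lem:boundedsideinfo} does \emph{not} cap the Kraus rank in \emph{any} Kraus decomposition of $\TT$ by $d$. Schmidt number at most $d$ of the Choi matrix only guarantees that \emph{some} Kraus decomposition has all ranks at most $d$; other decompositions may consist of full-rank operators (the completely depolarizing channel has a separable Choi matrix, yet admits a Kraus representation by unitaries). Likewise, not every Kraus operator of $\TT$ has rank at least $\dim\HH^R_j$, only those with a nonvanishing component on the $j$-th block. Both defects are repairable with ingredients you already have: for a channel with a faithful invariant state one has $\mathrm{Fix}(\TT^\ast)=\Set{K_a,K_a^\ast}'$, so every Kraus operator of \emph{every} decomposition (being a linear combination of a fixed set) lies in $\mathrm{Fix}(\TT^\ast)'=\bigoplus_j \BB(\HH^L_j)\otimes\idop_{\HH^R_j}$; apply this block form to the particular rank-$\leq d$ decomposition furnished by the Schmidt-number argument, and observe that $\sum_a K_a^\ast K_a=\idop$ forces at least one Kraus operator of that decomposition to be nonzero on the block maximizing $\dim\HH^R_j$, whence $d\geq\max_j\dim\HH^R_j\geq\max_k D^\prime_k$. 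Also note that what you invoke as Koashi--Imoto for a single faithful fixed state is really the standard fixed-point/commutant theorem quoted in the paper as \cite[Theorem 6.14]{Wolf2012}; the genuine Koashi--Imoto theorem about families of states is not needed, and the global decomposition $\TT=\bigoplus_j(\Phi_j\otimes\id)$ should be replaced by the statement about the Kraus operators, which is all your argument uses.
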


Before we can proof this, we need to prove a lemma.
\begin{lem} \label{lem:noduplicatecompression}
Let $\CC$, $\DD^\ast  : \MM_D \to \MM_{d} \otimes \setC^n$ be linear positive maps and let $\TT = \DD \circ \CC$. If the fixed point set of $\TT$ has the form $\MM_{D^\prime} \otimes \rho$, $\rho \in \SS(\setC^m)$ such that $m \cdot D^\prime = D$, then $d \geq D^\prime$.
\end{lem}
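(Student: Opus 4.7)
The plan is to reduce the lemma to a compression problem on the simple algebra $\MM_{D'}$ and then apply a ``no information without disturbance''-type argument. I would first introduce the positive linear maps $\Pi:\MM_{D'}\to\MM_D$, $A\mapsto A\otimes\rho$, and $\Pi^{-1}:\MM_D\to\MM_{D'}$, $X\mapsto\ptr{m}{X}$. Since $\tr{\rho}=1$ we have $\Pi^{-1}\circ\Pi=\id_{\MM_{D'}}$, and the fixed-point hypothesis $\TT(A\otimes\rho)=A\otimes\rho$ becomes $\TT\circ\Pi=\Pi$.

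Identifying $\MM_d\otimes\setC^n\simeq\bigoplus_{i=1}^n\MM_d$, I would decompose $\CC=\sum_{i=1}^n\CC_i$ and $\DD=\sum_{i=1}^n\DD_i$ into their block components, with $\CC_i:\MM_D\to\MM_d$ the $i$-th block of $\CC$ and $\DD_i:\MM_d\to\MM_D$ the restriction of $\DD$ to the $i$-th block. Positivity of $\CC$ and $\DD$ descends to each $\CC_i$ and $\DD_i$, so the maps
\begin{equation*}
\widetilde{\TT}_i:=\Pi^{-1}\circ\DD_i\circ\CC_i\circ\Pi:\MM_{D'}\to\MM_{D'}
\end{equation*}
are positive, and from $\TT\circ\Pi=\Pi$ they satisfy $\sum_{i=1}^n\widetilde{\TT}_i=\id_{\MM_{D'}}$.

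The ``no information without disturbance'' step would then give $\widetilde{\TT}_i=p_i\id_{\MM_{D'}}$ with $p_i\geq 0$ and $\sum_ip_i=1$. For positive maps this follows directly: the inequality $0\leq\widetilde{\TT}_i(\dyad{\psi})\leq\dyad{\psi}$ forces $\widetilde{\TT}_i(\dyad{\psi})=p_i(\psi)\dyad{\psi}$, after which expanding the image of $\dyad{\alpha}$ for $\Ket{\alpha}=\cos\theta\Ket{\psi}+\sin\theta\Ket{\phi}$ and matching coefficients of $\dyad{\psi}$ and $\dyad{\phi}$ shows that $p_i(\psi)$ is independent of $\psi$; alternatively one invokes \Fref{lem:noinfo}. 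Picking some $i^*$ with $p_{i^*}>0$, the map $\widetilde{\TT}_{i^*}=p_{i^*}\id_{\MM_{D'}}$ has rank $(D')^2$ as a linear map on $\MM_{D'}$; but by construction it factors through $\MM_d$ via $\CC_{i^*}\circ\Pi$, so its rank is at most $\dim\MM_d=d^2$, forcing $d\geq D'$. The main obstacle is making the no-information-without-disturbance step work under positivity alone (the usual Choi-matrix proof assumes complete positivity); the rank-one upper-bound argument sketched above circumvents this obstruction and accounts for the lemma's stated generality.
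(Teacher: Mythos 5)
Your proposal is correct and follows essentially the same route as the paper: the paper likewise defines the embedding $A\mapsto A\otimes\rho$ and composes with the partial trace over $\setC^m$ to obtain an identity map on $\MM_{D'}$, then (as in the proof of \Fref{thm:newarveson}) decomposes over the classical index, applies the positive-map no-information-without-disturbance result (\Fref{lem:noinfo}) to get $p_i\,\id$ on each block, and concludes $d\geq D'$ by a rank argument. Your direct sketch of the no-disturbance step is a bit loose about the off-diagonal terms, but invoking \Fref{lem:noinfo}, as you note, is exactly what the paper does, so nothing is missing.
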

\begin{proof}
We define $\iota_{\rho}: \MM_{D^\prime} \to \MM_D$ by $\iota_{\rho}(A) = A \otimes \rho$ for all $A \in \MM_{D^\prime}$. This defines a completely positive map. We can also define $\widetilde{\TT}: \MM_{D^\prime} \to \MM_{D^\prime}$
\begin{equation*}
\widetilde{\TT} := \mathrm{Tr}_{\setC^m}{} \circ \TT \circ \iota_\rho.
\end{equation*}
Here, we have made the identification $\setC^D \simeq \setC^{D^\prime} \otimes \setC^m$. By our assumption on the fixed point set of $\TT$, we know that $\widetilde{\TT}$ is the identity map. By the same argument as in the proof of \Fref{thm:newarveson},  $d \geq D^\prime$ follows from \Fref{lem:noinfo}.
\end{proof}

\begin{proof}[Proof of \Fref{thm:nostatecomp}]
Since we require \Fref{eq:stateconstraint} to hold, we see that $\SS_\II \subset \FF_{\TT}$ for $\FF_{\TT}$ the fixed point set of $\TT$. For $\TT$ a completely positive and trace preserving map, it is known that the fixed point set has the structure
\begin{equation} \label{eq:fixedpointset}
\FF_{\TT} = U \left(0 \oplus \bigoplus_{k = 1}^s \MM_{D_k} \otimes \rho_k\right) U^\ast  
\end{equation}
with $U \in \UU(D)$, $\rho_k \in \SS(\setC^{m_k})$ and $D = D_0 + \sum_{k = 1}^s m_k D_k$, where $D_0$ is the dimension of the zero block. The $\rho_k$ can be assumed to be diagonal (we can absorb the unitaries diagonalizing them into $U$), hence 
\begin{equation*}
\cstar{\FF_{\TT}} \subset U \left(0 \oplus \bigoplus_{k = 1}^s \bigoplus_{j = 1}^{m_k} \MM_{D_k} \right) U^\ast. 
\end{equation*}
These blocks can be considered independently. Let $V_k: \setC^{D_k \cdot m_k} \hookrightarrow \setC^D$ be an isometry such that $V_k V_k^\ast  $ is the projection onto the $k$-th block in the outer direct sum. Then define $\TT_k: \MM_{m_k \cdot D_k} \to \MM_{m_k \cdot D_k}$ by
\begin{equation*}
\TT_k := \Theta_{V_k  }\circ \Theta_{U} \circ \TT \circ \Theta_{U^\ast} \circ \Theta_{V_k^\ast}.
\end{equation*}
By construction, the fixed point set of $\TT_k$ is $\MM_{D_k} \otimes \rho_k$. The map factorizes into $\CC_k: \MM_{m_k \cdot D_k} \to \MM_d \otimes \setC^n$ with 
\begin{equation*}
\CC_k = \CC \circ \Theta_{U^\ast} \circ \Theta_{V_k^\ast}
\end{equation*}
and $\DD_k:\MM_d \otimes \setC^n \to \MM_{m_k \cdot D_k}$ with
\begin{equation*}
\DD_k = \Theta_{V_k  }\circ \Theta_{U}\circ \DD.
\end{equation*}
\Fref{lem:noduplicatecompression} shows that $d \geq D_k$. Since this holds for all $k \in [s]$, it follows that $d \geq \max_{k \in [s]} D_k$. Furthermore, for all $i \in [s^\prime]$ there must be a $k \in [s]$ such that $D_k \geq D^\prime_i$, otherwise the structure of $\cstar{S_\II}$ could not be as assumed. Hence, also $d \geq \max_{k \in [s^\prime]} D^\prime_k$. That this bound is achievable can be seen through a slight modification of the construction in \Fref{thm:blockcompression}. The maps can be chosen the same (assuming $\SS_{\II}$ already to be in block-diagonal form), but the isometries need to be chosen such that they respect the block structure of the states instead of the block structure of the operators in $\OO$. Here, we treat the zero block as a direct sum of $D_0$ $1$-dimensional blocks, which do not affect the compression dimension.
\end{proof}

The same result also holds in a more general setting. For the theorem to hold, $\TT$ needs not be completely positive. Since \Fref{eq:fixedpointset} is also valid if $\TT$ is a positive, trace preserving, linear map such that the dual map satisfies the Schwarz inequality, the above theorem also holds for $\CC, \DD$ such that $\TT$ satisfies these weaker conditions (see \cite[Theorem 6.14]{Wolf2012}).

 We have shown that unlike in the converse situation, there are no redundant blocks. Whether better bounds can be shown for finite sets of both states and effect operators beyond the results from \cite{Stark2014} and \cite{Wehner2008} remains an open problem.
\section{Several copies of the same state} \label{sec:copies}

In this section, we consider the following modifications compared to \Fref{sec:setup}. We are given a set of Hermitian operators as before which we denote by $\OO$. Instead of only one state, we consider finitely many copies of the same state (provided \eg by identical preparations). Hence, we consider a quantum channel $\CC: \MM_{mD} \to \MM_{d} \otimes \setC^n$. Compression in this setting is defined as follows: 
\begin{defi}[Compression of observables using copies]
%
Let $\OO$ be a set of Hermitian operators in $\MM_D$ and $m \in \setN$ the number of copies available. The  \emph{compression dimension} of $\OO$ is the smallest $d\in\setN$ for which there is an $n\in\setN$, 
a CPTP map $\CC: \MM_{mD} \to \MM_{d}\otimes \setC^n$ and a CPTP map $\DD: \MM_{d}\otimes \setC^n \to \MM_D$ such that for their composition $\TT = \DD \circ \CC$, the constraints 
\begin{equation} \label{eq:copyconstraints}
\Tr{\rho E} = \Tr{\TT(\rho^{\otimes m})E} = \Tr{\rho^{\otimes m}\TT^\ast  (E)} \qquad \forall \rho \in \SS(\setC^D), \forall E \in \OO
\end{equation}
are satisfied. If the compression dimension equals $D$, $\OO$ is said to be \emph{incompressible}.
\end{defi}
We prove now that taking copies of the state does not affect compressibility in the geometric picture.
\begin{thm}[Lower bounds on compression dimension  for finitely many copies] \label{thm:copies}
Let $\OO \subset \MM_D^\herm$ a set of Hermitian operators, $E_1$, $E_2 \in \LL(\OO)$ and 
\begin{equation*}
p(x,z) := \det[x \idop - E_1 - z E_2].
\end{equation*}
Then the smallest among the degrees of the irreducible factors of $p$ is a lower bound on the compression dimension of $\OO$. In particular, if $p$ is irreducible over the reals, then $\OO$ is incompressible.
\end{thm}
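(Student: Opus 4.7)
The plan is to adapt the geometric proof of \Fref{thm:geometric} to the multi-copy setting, exploiting the fact that its core ingredient is not the fixed-point property of $\TT^\ast$ but only the preservation of operator norms $\opnorm{E_1 + t E_2}$ along a one-parameter family, together with Russo--Dye contractivity. The algebraic approach from \Fref{sec:arveson} does \emph{not} carry over, since $E_1$ and $E_2$ live on $\MM_D$, while $\TT^\ast$ now acts on $\MM_{mD}$, so $\OO$ is no longer naturally embedded in the fixed-point set of $\TT^\ast$. This is precisely why the geometric approach is needed here, and identifying the right substitute for the fixed-point property will be the main conceptual hurdle.

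The first step is to recover an operator-norm identity from \Fref{eq:copyconstraints}. Fix $E_1, E_2 \in \LL(\OO)$ and $t \in \setR$. Since $E_1 + t E_2$ is Hermitian, $\opnorm{E_1 + t E_2} = \max_{\rho \in \SS(\setC^D)} |\tr{\rho (E_1 + t E_2)}|$, and the constraint turns the right-hand side into $\max_{\rho} |\tr{\rho^{\otimes m}\, \TT^\ast(E_1 + t E_2)}|$. Restricting the maximum to product states yields
\begin{equation*}
\opnorm{E_1 + t E_2} \leq \opnorm{\TT^\ast(E_1 + t E_2)}.
\end{equation*}
On the other hand, $\CC^\ast$ and $\DD^\ast$ are unital positive maps, hence contractions by Russo--Dye, so
\begin{equation*}
\opnorm{\TT^\ast(E_1 + t E_2)} \leq \opnorm{\DD^\ast(E_1 + t E_2)} \leq \opnorm{E_1 + t E_2}.
\end{equation*}
Combining both chains forces $\opnorm{E_1 + t E_2} = \opnorm{\DD^\ast(E_1 + t E_2)}$ for every $t \in \setR$.

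The second step reduces the block-direct-sum image $\DD^\ast(E_1 + t E_2) \in \MM_d \otimes \setC^n \simeq \bigoplus_{i=1}^n \MM_d$ to a single block, by the argument already used in the corollary following \Fref{thm:geometric}: writing $\DD^\ast(E_j) = \bigoplus_{i=1}^n F_j^i$ with $F_j^i \in \MM_d^\herm$, the piecewise maximum $\opnorm{\DD^\ast(E_1 + t E_2)} = \max_i \opnorm{F_1^i + t F_2^i}$ is, by continuity in $t$ and an induction on the number of blocks, achieved by a fixed index $k_0$ on some non-empty open interval $W \subset \setR$. Thus
\begin{equation*}
\opnorm{E_1 + t E_2} = \opnorm{F_1^{k_0} + t F_2^{k_0}} \qquad \forall t \in W,
\end{equation*}
with $F_1^{k_0}, F_2^{k_0} \in \MM_d^\herm$.

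The final step is a direct invocation of \Fref{lem:nosmalleropnorm} applied to $A = E_1$, $B = E_2$, $C = F_1^{k_0}$, $F = F_2^{k_0}$, and the open set $W$: it yields $d \geq \min_{i \in [s]} D_i$, where the $D_i$ are the degrees of the irreducible factors of $p(x,z) = \det[x\idop - E_1 - z E_2]$. In particular, if $p$ is irreducible over $\setR$ then $d \geq D$ and $\OO$ is incompressible. Notice that the number $m$ of copies never enters the bound: the compression channel $\CC$ may exploit $\rho^{\otimes m}$ in any way, but any information surviving the composition $\TT^\ast$ must still satisfy the same operator-norm identity, which is the step that makes the geometric approach robust under this generalization.
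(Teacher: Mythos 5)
Your proposal is correct and takes essentially the same route as the paper: maximize the multi-copy constraint over product states to get $\opnorm{A}\leq\opnorm{\TT^\ast(A)}$ for $A\in\LL(\OO)$, combine this with Russo--Dye contractivity of the unital maps $\CC^\ast$ and $\DD^\ast$ to obtain $\opnorm{E_1+tE_2}=\opnorm{\DD^\ast(E_1)+t\DD^\ast(E_2)}$ for all $t$, and conclude via \Fref{lem:nosmalleropnorm}. The only difference is that you make explicit the reduction from the direct sum $\bigoplus_{i=1}^n\MM_d$ to a single block dominating the norm on an open interval (as in the corollary after \Fref{thm:geometric}), a step the paper leaves implicit when invoking the lemma directly.
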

\begin{proof}
Maximizing \Fref{eq:copyconstraints} over $\rho \in \SS(\setC^D)$, we obtain
\begin{equation} \label{eq:lowerboundcopy}
\opnorm{A} = \max_{\rho \in \SS(\setC^D)} |\Tr{\rho^{\otimes m}\TT^\ast  (A)}| \qquad \forall A \in \LL(\OO).
\end{equation}
The right hand side of the above is clearly upper bounded by $\opnorm{\TT^\ast  (A)}$. Since $\TT^\ast  $ is unital, it is a contraction by the Russo-Dye theorem and
\begin{equation*}
\opnorm{\TT^\ast  (A)} \leq \opnorm{A}
\end{equation*} 
from which equality follows together with \Fref{eq:lowerboundcopy}. Thus, we are able to apply the techniques from \Fref{sec:geoargument}. Since $\CC^\ast  $ and $\DD^\ast  $  are unital as well, we obtain again
\begin{equation*}
\opnorm{E_1 + t E_2} = \opnorm{\DD^\ast  (E_1) + t \DD^\ast  (E_2)} \qquad \forall t \in \setR.
\end{equation*} 
The assertion then directly follows from \Fref{lem:nosmalleropnorm}. 
\end{proof}
Note that in this case, we have to make use of the geometric arguments since we cannot infer from \Fref{eq:copyconstraints} that $E \in \OO$ have to be fixed points of the dual channel. 
\newline
\newline
\textbf{Acknowledgements}: AB would like to thank Michael Kech for helpful discussions concerning the use of semialgebraic sets and the Stack Exchange user Simone Weil for pointing out the strategy to prove the existence of irreducible hyperbolic polynomials in any dimension. MMW thanks John Watrous for suggesting the problem and David P\'erez-Garc\'ia as well as John Watrous for interesting discussions.

\appendix

\section{Ces\`aro-mean and the support projection}

This section exposes some facts which are needed in \Fref{sec:arveson}. In the following lemma, we collect some well-known facts about the Ces\`aro-mean (\cf \cite{Lindblad1999} and \cite[Chapter 6]{Wolf2012}). We recall the definition of the transfer matrix corresponding to the projection onto the fixed points of $\RR$,
\begin{equation} \label{eq:cesaromatrix}
\hat{\RR}_\infty = \sum_{\Set{k:\lambda_k = 1}} P_{k},
\end{equation}
where $P_{\lambda_k}$ is the projection onto the (one-dimensional) Jordan block associated to the eigenvalue $\lambda_k$ of $\RR$. $\RR_{\infty}$ is the channel associated to this transfer matrix. Recall that the Ces\`aro-mean of $\RR$, if it exists, is 
\begin{equation*}
\lim_{N \to \infty} \frac{1}{N}\sum_{n = 1}^N \RR^n.
\end{equation*} 
\begin{lem} \label{lem:cesaro}
Let $\RR$ be a unital $m$-positive map on $\MM_D$ with $m \in \setN_0$. Then $\RR_\infty$ can be written as the Ces\`aro-mean of $\RR$, it is unital and $m$-positive, $\RR_\infty$ is idempotent, $\RR_\infty \circ \RR = \RR_\infty = \RR \circ \RR_\infty $ and $\RR_\infty$ has the same fixed point set as $\RR$.
\end{lem}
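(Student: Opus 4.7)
The plan is to first pin down the spectral structure of $\RR$, then identify the Cesàro limit $\lim_{N \to \infty} \frac{1}{N}\sum_{n=1}^N \RR^n$ with the spectral projector $\RR_\infty$ defined through \Fref{eq:cesaromatrix}, and finally read off the remaining properties from this representation. Since $\RR$ is unital and positive, the Russo-Dye theorem gives $\opnorm{\RR} = \opnorm{\RR(\idop)} = 1$, hence $\opnorm{\RR^n} \leq 1$ uniformly in $n$. This uniform power-boundedness prevents any eigenvalue on the unit circle from having a nontrivial Jordan block, because the $n$-th power of such a block with unimodular eigenvalue would grow polynomially in $n$. In particular, the definition of $\hat{\RR}_\infty$ as the spectral projector onto the eigenspace for $\lambda = 1$ is consistent.

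The Cesàro representation follows by decomposing $\RR$ in Jordan form and computing termwise. On the eigenspace for $\lambda=1$, the Cesàro averages equal the identity. On Jordan blocks with $|\lambda| < 1$, the partial sums of $\RR^n$ are absolutely summable, so the Cesàro means vanish in the limit. On the (necessarily one-dimensional) blocks with unimodular $\lambda \neq 1$, the geometric partial sums $\sum_{n=1}^N \lambda^n$ are bounded by $2/|1-\lambda|$, so the Cesàro means vanish as well. Summing these contributions yields $\RR_\infty = \lim_{N\to\infty} \frac{1}{N}\sum_{n=1}^N \RR^n$.

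From this representation the remaining claims drop out quickly. Unitality follows from $\RR^n(\idop) = \idop$ for all $n$. For $m$-positivity, each $\RR^n$ is $m$-positive since $m$-positivity is preserved under composition; convex combinations of $m$-positive maps are $m$-positive; and the set of $m$-positive maps is closed in the norm topology, so the limit $\RR_\infty$ inherits the property. The relation $\RR \circ \RR_\infty = \RR_\infty$ is obtained from
\begin{equation*}
\RR \circ \frac{1}{N}\sum_{n=1}^N \RR^n \; - \; \frac{1}{N}\sum_{n=1}^N \RR^n \; = \; \frac{1}{N}(\RR^{N+1} - \RR),
\end{equation*}
whose right hand side vanishes in the limit by the uniform bound $\opnorm{\RR^n}\leq 1$; the mirror identity $\RR_\infty \circ \RR = \RR_\infty$ is proved the same way. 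Idempotence then follows by writing one of the two factors in $\RR_\infty \circ \RR_\infty$ as a Cesàro limit of $\RR^n$ and applying $\RR_\infty \circ \RR^n = \RR_\infty$ termwise. Finally, $\RR(A) = A$ trivially gives $\RR_\infty(A) = A$, while conversely $\RR_\infty(A) = A$ combined with $\RR \circ \RR_\infty = \RR_\infty$ yields $\RR(A) = \RR(\RR_\infty(A)) = \RR_\infty(A) = A$.

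The only substantive input is the spectral step establishing semisimplicity at unimodular eigenvalues, and Russo-Dye is exactly the right tool because it requires only positivity and unitality. Once this uniform bound is in hand, everything else is standard bookkeeping around the ergodic behaviour of a power-bounded matrix; in particular, the same argument works verbatim for Schwarz maps, since they are also contractions in operator norm, which is the version needed for the generalizations in \Fref{sec:generalizations}.
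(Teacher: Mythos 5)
Your proof is correct and follows essentially the same route as the paper's: identify the Ces\`aro limit with the spectral projection onto the peripheral eigenvalue $1$ via the Jordan decomposition, using that unitality and positivity make $\RR$ a power-bounded contraction so that unimodular eigenvalues have trivial Jordan blocks, and then read off unitality, $m$-positivity, idempotence and the fixed-point statement from this representation. The only differences are cosmetic: you derive the spectral inputs directly from the Russo--Dye theorem where the paper cites the corresponding propositions in \cite{Wolf2012}, and you obtain $\RR_\infty \circ \RR = \RR_\infty = \RR \circ \RR_\infty$ by a telescoping argument rather than by multiplying transfer matrices.
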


\begin{proof}
The spectral radius of $\RR$ is equal to 1 by \cite[Proposition 6.1]{Wolf2012}. Note furthermore that \cite[Proposition 6.2]{Wolf2012} implies that the Jordan blocks belonging to eigenvalues of modulus $1$ are one-dimensional. By the same argument as in \cite[Proposition 6.3]{Wolf2012}, the first assertion then follows. From there, unitality and $m$-positivity directly follow.
Looking at its transfer matrix, $\RR_\infty$ is clearly idempotent, \ie
\begin{equation*}
\RR_\infty \circ \RR_\infty = \RR_\infty.
\end{equation*}
$\RR_\infty = \RR \circ \RR_\infty$ holds since for every $A$ in the range of $\RR_\infty$, we know that $\RR(A) = A$. Furthermore,
\begin{equation} \label{eq:Tinv}
\RR_\infty \circ \RR = \RR_\infty
\end{equation}
follows by multiplication of the respective transfer matrices and using that Jordan blocks for eigenvalues of modulus $1$ are one-dimensional.
Obviously, for $A \in \MM_D$ such that $\RR(A) = A$, also $\RR_\infty(A) = A$ holds; therefore, the fixed point sets are equal by the definition of $\RR_\infty$.
\end{proof}

We also need the fact that the Ces\`aro-mean of a Schwarz map is again a Schwarz map. To prove this, we will need a lemma.

\begin{lem} \label{lem:schwarzhelp}
Let $\TT, \RR$ be two Schwarz maps on $\MM_D$. Then $\TT \circ \RR$ is a Schwarz map as well. Furthermore $\lambda \TT + (1-\lambda) \id$ is a Schwarz map for all $\lambda \in [0,1]$. 
\end{lem}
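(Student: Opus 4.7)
The plan is to verify the three defining properties of a Schwarz map, namely linearity, unitality together with positivity, and the Schwarz inequality \eqref{eq:schwarz}, for each of the two candidate maps $\TT\circ\RR$ and $\Phi_\lambda := \lambda \TT + (1-\lambda)\id$. Linearity and positivity will be immediate (compositions and convex combinations of positive linear maps are positive and linear), and unitality follows from $\TT(\idop) = \RR(\idop) = \idop$ in both cases. So the substantive content is the Schwarz inequality.

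For the composition $\TT\circ\RR$, my plan is to chain the two Schwarz inequalities together, using positivity of $\TT$ to transport the inequality for $\RR$ through $\TT$. Concretely, for any $A \in \MM_D$ I would write
\begin{align*}
(\TT\circ\RR)(A^\ast)(\TT\circ\RR)(A) &= \TT\bigl(\RR(A)^\ast\bigr)\TT\bigl(\RR(A)\bigr) \\
&\leq \TT\bigl(\RR(A)^\ast \RR(A)\bigr) \\
&\leq \TT\bigl(\RR(A^\ast A)\bigr) = (\TT\circ\RR)(A^\ast A),
\end{align*}
where the first inequality is the Schwarz inequality applied to $\TT$ with argument $\RR(A)$, and the second uses that $\RR(A^\ast A) - \RR(A^\ast)\RR(A) \geq 0$ by the Schwarz inequality for $\RR$ together with positivity (hence monotonicity) of $\TT$.

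For the convex combination $\Phi_\lambda = \lambda \TT + (1-\lambda)\id$, my plan is to expand $\Phi_\lambda(A^\ast A) - \Phi_\lambda(A)^\ast \Phi_\lambda(A)$ and rearrange it into a manifestly positive sum. Explicitly, the expansion will give
\begin{align*}
\Phi_\lambda(A^\ast A) - \Phi_\lambda(A)^\ast \Phi_\lambda(A) &= \lambda\bigl(\TT(A^\ast A) - \TT(A^\ast)\TT(A)\bigr) \\
&\quad + \lambda(1-\lambda)\bigl(\TT(A) - A\bigr)^\ast \bigl(\TT(A) - A\bigr),
\end{align*}
after collecting terms and using $1 - \lambda - (1-\lambda)^2 = \lambda(1-\lambda)$. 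The first summand is nonnegative by the Schwarz inequality for $\TT$, the second is manifestly of the form $B^\ast B \geq 0$, and both weights $\lambda$ and $\lambda(1-\lambda)$ are nonnegative for $\lambda \in [0,1]$, so the whole expression is positive.

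I do not expect any real obstacle: both parts reduce to one algebraic identity each, the only mild subtlety being to remember that positive linear maps on $\MM_D$ are $\ast$-preserving, so that $\TT(\RR(A^\ast)) = \TT(\RR(A)^\ast) = \TT(\RR(A))^\ast$, which is used silently in the composition step. The identity in the convex-combination step is the one computation that has to be done carefully; writing $\Phi_\lambda(A) - A = \lambda(\TT(A) - A)$ at the outset gives a quicker route, as then $\Phi_\lambda(A)^\ast\Phi_\lambda(A) = A^\ast A + \lambda(A^\ast(\TT(A)-A) + (\TT(A)-A)^\ast A) + \lambda^2 (\TT(A)-A)^\ast(\TT(A)-A)$, which can be compared directly against $\Phi_\lambda(A^\ast A) = A^\ast A + \lambda(\TT(A^\ast A) - A^\ast A)$.
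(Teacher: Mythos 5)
Your proof is correct. The composition part is essentially identical to the paper's argument: apply the Schwarz inequality for $\TT$ to the element $\RR(A)$ and then push the Schwarz inequality for $\RR$ through $\TT$ using positivity (monotonicity) of $\TT$; the silent use of $\ast$-preservation that you flag is exactly what the paper also uses. For the convex combination, however, your route is genuinely different and a bit slicker. The paper expands the difference as
\begin{equation*}
\lambda(1-\lambda)\left[\TT(AA^\ast) + AA^\ast - A\TT(A^\ast) - \TT(A)A^\ast\right] + \lambda^2\left[\TT(AA^\ast) - \TT(A)\TT(A^\ast)\right]
\end{equation*}
and then needs an additional step to see that the first bracket is positive: it is rewritten as a sandwich of a $2\times 2$ operator matrix whose positivity is checked via a Schur complement (which again reduces to the Schwarz inequality for $\TT$). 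Your decomposition
\begin{equation*}
\Phi_\lambda(A^\ast A) - \Phi_\lambda(A)^\ast\Phi_\lambda(A) = \lambda\left[\TT(A^\ast A) - \TT(A)^\ast\TT(A)\right] + \lambda(1-\lambda)\bigl(\TT(A)-A\bigr)^\ast\bigl(\TT(A)-A\bigr)
\end{equation*}
is a valid identity (the shortcut you mention, writing $\Phi_\lambda(A) = A + \lambda(\TT(A)-A)$, gives it in two lines), and both summands are manifestly positive for $\lambda\in[0,1]$, so the operator-matrix/Schur-complement step is avoided entirely. Both arguments rest on the same ingredients (Schwarz inequality for $\TT$, positivity, hermiticity preservation); yours buys a shorter, purely algebraic verification, while the paper's version isolates the positivity of the cross term $\TT(AA^\ast)+AA^\ast - A\TT(A^\ast)-\TT(A)A^\ast$ as a statement of independent interest. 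Your remarks on linearity, unitality and positivity of the two candidate maps are fine and match what the paper takes for granted.
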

\begin{proof}
Applying the Schwarz inequality twice, we obtain
\begin{equation*}
(\TT \circ \RR)(A) (\TT \circ \RR)(A^\ast) \leq \TT(\RR(A)\RR(A)^\ast) \leq \TT \circ \RR (A A^\ast),
\end{equation*}
where we used positivity of both $\TT$ and $\RR$. For the second assertion, we compute
\begin{align*}
&(\lambda \TT + (1-\lambda) \id)(A A^\ast) - (\lambda \TT + (1-\lambda) \id)(A)(\lambda \TT + (1-\lambda) \id)(A^\ast) \\
 &=  \lambda (1 - \lambda) \left[\TT(A A^\ast) + A A^\ast -  A \TT(A^\ast) - \TT(A) A^\ast \right] 
+ \lambda^2 \left[\TT(A A^\ast) - \TT(A) \TT(A^\ast)\right].
\end{align*}
We have to show that the above expression is positive. The second term is positive by the Schwarz inequality. We can reformulate the first term as
\begin{align*}
\TT(A A^\ast) + A A^\ast -  A \TT(A^\ast) - \TT(A) A^\ast = \begin{bmatrix}
\idop & A
\end{bmatrix} \begin{bmatrix}
\TT(A A^\ast) & - \TT(A) \\ -\TT(A^\ast) & \idop 
\end{bmatrix} \begin{bmatrix}
\idop \\ A^\ast
\end{bmatrix}.
\end{align*}
The operator matrix can be shown to be positive semidefinite using the Schur complement, so the right hand side of the above is positive as well.
\end{proof}

\begin{lem} \label{lem:schwarzcesaro}
Let $\TT: \MM_D \to \MM_D$ be a Schwarz map. Then the Ces\`aro-mean of $\TT$ is a Schwarz map as well.
\end{lem}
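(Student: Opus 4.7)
The plan is to write $\TT_\infty$ as a limit of Cesàro averages $S_N := \frac{1}{N}\sum_{n=1}^N \TT^n$ (whose existence for positive unital maps is guaranteed by \Fref{lem:cesaro}) and to verify the Schwarz inequality at finite $N$ before taking the limit. The key structural observation is that each iterate $\TT^n$ is itself a Schwarz map by repeated application of part~(1) of \Fref{lem:schwarzhelp}, so for every $n$ we have $\TT^n(A^\ast A) \geq \TT^n(A^\ast)\TT^n(A) = \TT^n(A)^\ast \TT^n(A)$, where the last equality uses that positive unital maps preserve adjoints.

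The second ingredient is a matrix Cauchy--Schwarz inequality: for any $X_1,\dots,X_N \in \MM_D$,
\begin{equation*}
\frac{1}{N}\sum_{n=1}^N X_n^\ast X_n \;\geq\; \Bigl(\frac{1}{N}\sum_{n=1}^N X_n\Bigr)^{\!\ast}\Bigl(\frac{1}{N}\sum_{n=1}^N X_n\Bigr),
\end{equation*}
which follows immediately from expanding $\sum_n (X_n - \bar X)^\ast(X_n - \bar X)\geq 0$ with $\bar X := \frac{1}{N}\sum_n X_n$. Setting $X_n := \TT^n(A)$ and chaining these two facts yields
\begin{equation*}
S_N(A^\ast A) \;\geq\; \frac{1}{N}\sum_{n=1}^N \TT^n(A)^\ast \TT^n(A) \;\geq\; S_N(A)^\ast S_N(A) \;=\; S_N(A^\ast) S_N(A).
\end{equation*}

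Finally, passing to the limit $N \to \infty$ and invoking continuity of the matrix product in finite dimension gives $\TT_\infty(A^\ast A) \geq \TT_\infty(A^\ast)\TT_\infty(A)$, and unitality of $\TT_\infty$ is already guaranteed by \Fref{lem:cesaro}. There is no real obstacle here: the only point requiring care is the matrix Cauchy--Schwarz step, which replaces the scalar inequality used in the usual proof that averages of Schwarz maps need not be Schwarz in general (they are not, but averages along a single orbit are, which is exactly what we need).
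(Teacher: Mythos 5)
Your proof is correct, but it follows a genuinely different route from the paper's. The paper proceeds by induction on $N$: it writes $\frac{1}{N+1}\sum_{n=1}^{N+1}\TT^n = \bigl(\frac{1}{N+1}\id + (1-\frac{1}{N+1})\frac{1}{N}\sum_{n=1}^N\TT^n\bigr)\circ\TT$ and invokes both parts of \Fref{lem:schwarzhelp} (composition of Schwarz maps is Schwarz; a convex combination of a Schwarz map with $\id$ is Schwarz, proved there via a Schur complement), then takes the limit. You instead use only the composition part (to see each $\TT^n$ is Schwarz) together with the operator-convexity of $X\mapsto X^\ast X$, i.e.\ the identity $\frac{1}{N}\sum_n X_n^\ast X_n - \bar X^\ast \bar X = \frac{1}{N}\sum_n (X_n-\bar X)^\ast(X_n-\bar X)\geq 0$, and verify the Schwarz inequality for $S_N$ directly before passing to the limit; the limit exists by \Fref{lem:cesaro} since a Schwarz map is unital and positive, exactly as the paper also needs. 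Your argument is arguably cleaner and proves something stronger: the same convexity computation shows that \emph{any} convex combination of Schwarz maps is again a Schwarz map (the second part of \Fref{lem:schwarzhelp} is the special case where one of the maps is $\id$), so your closing parenthetical claim that ``averages of Schwarz maps need not be Schwarz in general'' is actually false and is refuted by your own matrix Cauchy--Schwarz step. This misstatement does not affect the validity of the proof, but you should drop it.
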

\begin{proof}
The statement for 
\begin{equation*}
\frac{1}{N} \sum_{n = 1}^N \TT^n
\end{equation*}
follows by induction. Using \Fref{lem:schwarzhelp}, we infer that
\begin{equation*}
\frac{1}{2} \left[ \TT^2 + \TT\right] = \frac{1}{2} \left[ \TT + \id\right] \circ \TT
\end{equation*}
is a Schwarz map. By the same lemma, it follows that 
\begin{equation*}
\frac{1}{N+1} \sum_{n = 1}^{N+1} \TT^n = \left(\frac{1}{N+1} \id + \left( 1 - \frac{1}{N+1} \right)\frac{1}{N} \sum_{n = 1}^N \TT^n\right) \circ \TT
\end{equation*}
is a Schwarz map using the induction hypothesis. The statement follows taking the limit $N \to \infty$.
\end{proof}

The rest of this section focuses on the support projection. We are only concerned with matrix algebras, so we assume $\A \subset \MM_D$ to be a finite-dimensional C$^\ast  $-algebra and let $\A_+$ denote the positive elements in this algebra. For the case of von Neumann algebras of arbitrary dimensions see \eg \cite{Dixmier1957} or \cite[III.2.2.25]{Blackadar2006}. The support projection of a Schwarz map is not to be confused with the support projection of its transfer matrix. They are in general not the same. First we define the set 
\begin{equation*}
\NN = \Set{A \in \A : \RR(A^\ast  A) = 0}
\end{equation*}
for some Schwarz map $\RR: \A \to \A$. This set contains projections, as we shall see. Using the spectral decomposition, we may write $A^\ast   A = \sum_{i = 1}^n\sigma_i^2 P_i$, where $\sigma_i > 0 $, $i \in [n]$ are the distinct singular values of $A$ and $P_i \in \A$ the corresponding spectral projections. Then 
\begin{equation*}
\RR(A^\ast   A) = \sum_{i = 1}^n \sigma_i^2 \RR(P_i).
\end{equation*}
The sum is zero if and only if $\RR(P_i) = 0$ for all $i \in [n]$ and thus also the support projection of $A$, $V_A = \sum_{i=1}^n{P_i}$, is in $\NN$. By the lattice structure of the set of projections, there is a unique maximal projection in $\NN$. We will denote this projection by $Q$.
Using the existence of such a $Q$, we get 
\begin{align*}
\opnorm{\RR(AQ)}^2 &\leq \opnorm{\RR(QA^\ast   AQ)} = 0\qquad \forall A \in \A,\\
\opnorm{\RR(QA)}^2 &\leq \opnorm{\RR(QAA^\ast   Q)} = 0\qquad \forall A \in \A,
\end{align*}
where we used the C$^\ast  $-property, the fact that positive maps are hermiticity preserving, the Schwarz inequality \Fref{eq:schwarz} and 
\begin{equation*}
\opnorm{\RR(Q B Q)} \leq \opnorm{B} \opnorm{\RR(Q)} = 0 \qquad \forall B \in \A_+.
\end{equation*}

 This implies
\begin{equation*}
\RR(AQ) = \RR(QA) = 0\qquad \forall A \in \A.
\end{equation*}
Hence we can define the support projection as $P := \idop - Q$. By the above property of $Q$, it fulfills 
\begin{equation*}
\RR(A) = \RR(PA) = \RR(AP) = \RR(PAP) \qquad \forall A \in \A.
\end{equation*}
The following lemma collects the properties of the support projection which we use.
\begin{lem} \label{lem:propsuppproj}
Let $\RR: \A \to \A$ be a Schwarz map. Then for its support projection $P$, we have that  
\begin{equation*}
\RR(A) = \RR(PA) = \RR(AP) = \RR(PAP) \qquad \forall A \in \A
\end{equation*}
and $\RR|_{P \A P}$ is faithful.
\end{lem}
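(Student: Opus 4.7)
The plan is to dispatch the chain of equalities quickly, since the preceding discussion has already done the essential work, and then focus on the faithfulness claim, which is the real content. For the equalities, the material above the lemma already established that $\RR(AQ)=\RR(QA)=0$ for every $A\in\A$, where $Q=\idop-P$. Given that, I would just expand $A=PA+QA$ and use linearity of $\RR$ to conclude $\RR(A)=\RR(PA)$; symmetrically $\RR(A)=\RR(AP)$; and then applying the second identity to $PA$ in place of $A$ yields $\RR(PA)=\RR(PAP)$, closing the loop.

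For faithfulness, I plan to verify the formulation of \Fref{eq:faithfulsupport}: for $B\in(P\A P)_+$ with $\RR(B)=0$, show $B=0$. My strategy is to pass to the support projection $V_B$ of $B$. Writing the spectral decomposition $B=\sum_i\sigma_i V_i$ with $\sigma_i>0$ distinct and the $V_i$ pairwise orthogonal spectral projections of $B$, positivity of $\RR$ gives $\RR(V_i)\geq 0$, so $0=\RR(B)=\sum_i\sigma_i\RR(V_i)$ is a sum of positive operators with strictly positive scalar weights; this forces $\RR(V_i)=0$ for every $i$. Hence $\RR(V_B)=\sum_i\RR(V_i)=0$, and since $V_B^\ast V_B=V_B$, we get $V_B\in\NN$. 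By the maximality of $Q$ among projections in $\NN$ (established in the paragraphs preceding the lemma), $V_B\leq Q=\idop-P$.

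The final step uses that $B\in P\A P$ with $B\geq 0$ implies $V_B\leq P$: since $B=PBP$, we have $B(\idop-P)=0$, so the range of $B$ is contained in the range of $P$. Combining $V_B\leq P$ and $V_B\leq\idop-P$ in the projection lattice of the finite-dimensional C$^\ast$-algebra forces $V_B=0$, whence $B=0$. There is no real obstacle here; the only place where one must be careful is in the reduction from $\RR(\sum_i\sigma_i V_i)=0$ to $\RR(V_i)=0$ for each $i$, which crucially uses that $\RR$ is positive (not merely linear) and that $\sigma_i>0$. Everything else is bookkeeping with projections and the already-established maximality of $Q$.
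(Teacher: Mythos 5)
Your proposal is correct and follows essentially the same route as the paper: the equality chain is read off from the already established identities $\RR(AQ)=\RR(QA)=0$ with $Q=\idop-P$, and faithfulness rests on the fact that the support projection of a positive element annihilated by $\RR$ lies in $\NN$ and is therefore dominated by the maximal projection $Q$. The only cosmetic difference is that you run the spectral argument directly on $B\in(P\A P)_+$ and conclude $V_B\leq P$ and $V_B\leq Q$, hence $V_B=0$ and $B=0$, whereas the paper factors $B=A^\ast A$, deduces $QBQ=B$, and hence $PBP=0$.
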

\begin{proof}
We only need to check the last claim, since we have already shown the rest.
Being faithful on $P\A P$ means that the implication
\begin{equation} \label{eq:cpfaithful}
\RR(A) = 0 \quad \rightarrow \quad PAP = 0 \qquad \forall A \in \A_+
\end{equation}
is true. This can be seen to hold as follows: Assume $\RR(B) = 0$ for some $B \in \A_+$. Then there is an $A \in \A$ such that $B = A^\ast  A$, because $B$ is positive. For this $A$ we know that $A \in \NN$ and $QA^\ast   AQ = A^\ast   A$ by the definition of $Q$. Hence also 
\begin{equation*}
B = A^\ast  A = QA^\ast  AQ = QBQ.
\end{equation*}
However, as $P = \idop - Q$, this gives 
\begin{equation*}
PBP = 0
\end{equation*}
as claimed.
\end{proof}
Note that for general (non-positive) $B \in \A$, the implication in \Fref{eq:cpfaithful} is no longer true.

\section{Existence of both irreducible and hyperbolic polynomials of any degree} \label{sec:polyappendix}

The aim of this subsection is to show that there exist homogeneous polynomials of any degree which are both irreducible and hyperbolic. This was is used in \Fref{sec:geoargument}. This well-known fact from algebraic geometry will be proven here for convenience. It is clear that there are irreducible polynomials of any degree since $p(x,y,z) = x^d + y^d - z^d$ is irreducible for any $d \in \setN$. Furthermore, it has been shown in \cite{Nuij1968} that the set of polynomials $p \in \bfH^d(n)$ hyperbolic with respect to a fixed point $e \in \setR^n$ has non-empty interior in $\bfH^d(n)$ (see also \cite[Theorem 2.1]{Gueler1997}). It is, however, not clear a priori that there are elements which fulfill both properties, since the $p(x,y,z)$ given above are not hyperbolic, as can be checked easily. The idea now is to prove that the set of reducible polynomials in $\bfH^d(n)$ does not contain any open subset, which would then mean that the set of irreducible and the set of hyperbolic elements in this space have non-empty intersection. The argument proceeds by dimension counting. We restrict to the case $n = 3$ for simplicity. Since we will be interested in normalized polynomials (\ie $p(e) = 1$ for $e = (1, 0,  0)$), let $\bfH^d_N(3)\subset \bfH^d(3)$ be the affine subspace of such polynomials, where normalization decreases the dimension by one. We will identify $\bfH^d_N(3) \simeq \setR^{\dim\bfH^d(3)-1}$, since we are only interested in the topology and measure on this affine space. Redoing the argument by Nuij shows that the set of normalized hyperbolic polynomials has non-empty interior in $\bfH^d_N(3)$ as well, since it basically only uses that the simple roots of an univariate polynomial depend continuously on the coefficients of the polynomial. 

\begin{lem} \label{lem:reduciblenoopensets}
The set of reducible elements over the reals in $\bfH_N^d(3)$, $d \in \setN$, $d > 2$ does not contain any subset which is open in Euclidean topology. Moreover, this set has Lebesgue measure zero.
\end{lem}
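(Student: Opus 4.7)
The plan is to realize the reducible locus in $\bfH_N^d(3)$ as a finite union of images of smooth maps from strictly lower-dimensional Euclidean spaces; both emptiness of Euclidean interior and Lebesgue measure zero will then follow from a standard covering argument.

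First I would stratify the reducible locus by the degree of a proper factor. Setting $e=(1,0,0)$ so that $\bfH_N^d(3)=\{p\in\bfH^d(3):p(e)=1\}$, any reducible $p\in\bfH_N^d(3)$ factors as $p=q\cdot r$ with $q\in\bfH^k(3)$, $r\in\bfH^{d-k}(3)$ for some $1\le k\le d-1$. The identity $1=p(e)=q(e)r(e)$ forces both factors to be nonzero at $e$, so after rescaling $q\mapsto q/q(e)$, $r\mapsto r/r(e)$ I may assume $q(e)=r(e)=1$. Consequently the reducible part equals
\[
\bigcup_{k=1}^{\lfloor d/2\rfloor}\mu_k\bigl(\bfH_N^k(3)\times\bfH_N^{d-k}(3)\bigr),\qquad \mu_k(q,r):=q\cdot r,
\]
and in any monomial parametrization of the normalized affine slices each $\mu_k$ is a polynomial (hence $C^\infty$) map between Euclidean spaces.

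Next I would carry out the dimension count. Using $\dim\bfH^m(3)=\binom{m+2}{2}$, the source of $\mu_k$ has dimension $N_k:=\binom{k+2}{2}+\binom{d-k+2}{2}-2$. As a function of $k$ this is convex (its discrete second difference equals $2$), hence on $\{1,\dots,d-1\}$ it is maximized at the symmetric endpoints, giving $N_1=1+\binom{d+1}{2}$. A short computation yields
\[
\dim\bfH_N^d(3)-N_1=\binom{d+2}{2}-\binom{d+1}{2}-2=d-1,
\]
which is strictly positive for $d>2$; this is exactly where the hypothesis on $d$ enters.

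Finally I would invoke the general fact that a $C^1$ map $F:\setR^N\to\setR^M$ with $N<M$ has image of Lebesgue measure zero: cover the source by $\varepsilon$-cubes, use the local Lipschitz property of $F$ to enclose each image in a cube of side $O(\varepsilon)$ and volume $O(\varepsilon^M)$, and sum to obtain total volume $O(\varepsilon^{M-N})\to 0$ (equivalently apply Sard's theorem). Applied to each $\mu_k$ and taking the finite union yields both claims simultaneously. The argument is essentially bookkeeping, and I do not foresee any serious obstacle; the only mildly delicate point is checking that the rescaling in the first step is well-defined so that $\mu_k$ maps into the normalized slice, which is immediate from $q(e)r(e)=1\ne 0$, and then verifying convexity of $N_k$ in $k$ to reduce all cases to $k=1$.
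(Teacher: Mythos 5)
Your proposal is correct, and its skeleton coincides with the paper's: the same decomposition of the reducible locus as a finite union of images of the normalized multiplication maps $(q,r)\mapsto q\cdot r$ on $\bfH_N^k(3)\times\bfH_N^{d-k}(3)$, and the same binomial dimension count (the paper computes the codimension as $k(d-k)\geq d-1$ for every $k$, where you reduce to $k=1$ by convexity). Where you genuinely diverge is the concluding mechanism. The paper stays inside real algebraic geometry: it observes that the graph of the multiplication map is semialgebraic, invokes Tarski--Seidenberg-type results from Bochnak--Coste--Roy to conclude the image is a semialgebraic set of strictly smaller dimension, deduces the absence of open subsets from the dimension theory of semialgebraic sets, and cites a separate result that lower-dimensional semialgebraic sets have vanishing Lebesgue (Hausdorff) measure. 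You instead use only calculus and measure theory: the image of a $C^1$ (here polynomial) map from a Euclidean space of strictly smaller dimension is Lebesgue-null, which yields measure zero and empty interior in one stroke. Your route is more elementary and self-contained; the paper's buys the statement that the reducible locus is itself a semialgebraic set of bounded dimension, which fits the semialgebraic framework used elsewhere in the appendix. Two small points to tidy up: the covering argument should first exhaust the (unbounded) source $\setR^{N_k}$ by countably many compact cubes so that a uniform Lipschitz constant is available on each piece (or simply quote Sard, as you note); and both you and the paper implicitly use the standard fact that a nontrivial factorization of a homogeneous polynomial can be taken with homogeneous factors, which is worth a sentence. Finally, your aside that the hypothesis $d>2$ is ``exactly where'' positivity of the codimension enters is slightly off --- $d-1>0$ already for $d=2$ --- but this does not affect the proof.
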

\begin{proof}
Let $p \in \bfH_N^d(3)$ be a reducible element. Then there are $q \in \bfH_N^k(3)$, $r \in \bfH_N^{d-k}(3)$, $k \in [d-1]$, such that $p = q \cdot r$. The fact that these polynomials can be chosen normalized follows since for $q(e) = c \neq 0$, necessarily $r(e) = 1/c$ by normalization of $p$ and the polynomials can be multiplied by $c$ and $1/c$, respectively, to obtain a decomposition into normalized elements. Hence, we define a mapping 
\begin{align*}
\Phi:~& \bfH_N^k(3) \times  \bfH_N^{d-k}(3)  \to \bfH_N^d(3) \\
&(q  ,  r) \mapsto  q \cdot r.
\end{align*}
$\bfH_N^k(3)$ is a semialgebraic set (e.g. as $\ZZ(x_0 - 1)$, where $x_0$ is the coefficient belonging to $x^k$) with dimension $\binom{3 + k -1}{k}-1$ (by \cite[Proposition 2.8.1]{Bochnak1998}, since dividing out the ideal $(x_0 -1)$ decreases the dimension by one). Moreover, $\Phi$ is a semi-algebraic mapping (see \cite[Definition 2.2.5]{Bochnak1998}), since its graph can be expressed as
\begin{align*}
\Bigg\{&(q,r,p) \in \setR^{\dim(\bfH_N^k(3))}\times \setR^{\dim(\bfH_N^{d-k}(3))} \times \setR^{\dim(\bfH_N^d(3))} : \sum_{\substack{i_l + j_l = m_l \\ l \in [3]}} q_{i} r_{j} - p_{m} = 0;  \\ & m_1 \in [d-1],~m_2, m_3 \in [d],~m_1 + m_2 + m_3 = d\Bigg\}.
\end{align*}
Here, $i,j,m \in \setN_0^3$ are multi-indices such that $|i| = k$, $|j| = d - k$ and $|m| = d$. This is a finite collection of polynomial equalities which have to be fulfilled, thus it is a semi-algebraic set. We have written $q_{i}$ to be the coefficient belonging to $x^{i_1} y^{i_2} z^{i_3}$ of the polynomial $q$ for clarity (same for $p$, $r$). Note that $q_{(k,0,0)} = r_{(d-k,0,0)} = p_{(d,0,0)} = 1$ has been fixed beforehand by normalization. By \cite[Proposition 2.2.7]{Bochnak1998}, we know that the image of $\Phi$ for a fixed $k$ is also a semi-algebraic set, likewise this holds for the set of reducible elements in $\bfH_N^d(3)$, since it is a finite union of semi-algebraic sets. Now we come back to the dimensions of the sets involved. By \cite[Proposition 2.8.5 (ii)]{Bochnak1998}, the domain of $\Phi$ has dimension
\begin{equation*}
\binom{3+k-1}{k} + \binom{3 + d-k -1}{d-k}-2.
\end{equation*}
Further,
\begin{equation*}
\binom{3 + d-1}{d} - \binom{3+k-1}{k} - \binom{3 + d-k -1}{d-k} + 1 = (d - k) k,
\end{equation*}
which is greater equal $ d - 1$ for $k \in [d-1]$ and hence strictly positive for $d > 1$. Hence the set of reducible elements has  dimension strictly smaller than the dimension of $\bfH_N^d(3)$, $d > 1$, by \cite[Proposition 2.8.5 (i)]{Bochnak1998} and \cite[Proposition 2.8.8]{Bochnak1998}. This implies that it cannot contain any open $U \subset \bfH_N^d(3)$, since $\II(U) = \Set{0}$ necessarily, but there is at least one non-trivial polynomial vanishing on the set of reducible elements (otherwise this set would have full dimension by \cite[Definition 2.8.1]{Bochnak1998}), which would also vanish on any subset of these. By \cite[Proposition A.1]{Kech2017}, any semi-algebraic $\BB \subset \setR^m$ of dimension less than $m$ has zero $m$-dimensional Hausdorff measure and hence also zero Lebesgue measure, since those only differ by a constant factor on $\setR^m$. 
\end{proof}

\section{Matrix computations}

This sections contain some elementary computations needed to show the irreducibility of the polynomial in \Fref{prop:irredexample}. Let the matrices $\tilde{A}(x)$, $B \in \MM_D$, $D \geq 2$, be defined as follows:
\begin{equation*}
\tilde{A}_{kl}(x) = \begin{cases} 0 & k < l \\ x & k = l \\ 1 & k > l \end{cases} \qquad B_{kl} = \frac{1}{2}\begin{cases} \imI & k < l \\ 0 & k = l \\ -\imI & k > l \end{cases} \qquad k,l \in [D].
\end{equation*}
The aim of this section is to compute $\det[\tilde{A}(x) + \epsilon B]$ up to first order in $\epsilon$ and to show that the first order term has to vanish at $x = 0$ under some assumptions. On the way, we need to prove several lemmas which are of little interest in themselves. The first is the following sum formula which we will use several times:
\begin{lem} \label{lem:sum}
The following identity is true for $k \in \setN_0$:
\begin{equation*}
\sum_{j = 0}^k \left[ - \frac{(x-1)^{j}}{x^{j+2}}\right] + \frac{1}{x} = \frac{(x-1)^{k+1}}{x^{k+2}}.
\end{equation*}
\end{lem}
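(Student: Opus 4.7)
The plan is to recognize the sum on the left as a geometric series in the ratio $r := (x-1)/x$ and exploit the convenient identity $1-r = 1/x$. Concretely, factor the summand as $(x-1)^j/x^{j+2} = r^j / x^2$, so that
\begin{equation*}
\sum_{j=0}^k \left[-\frac{(x-1)^j}{x^{j+2}}\right] + \frac{1}{x} = -\frac{1}{x^2}\sum_{j=0}^k r^j + \frac{1}{x} = -\frac{1}{x^2}\cdot\frac{1-r^{k+1}}{1-r} + \frac{1}{x}.
\end{equation*}
Using $1-r = 1/x$, the first term simplifies to $-\frac{1}{x}(1 - r^{k+1}) = -\frac{1}{x} + \frac{(x-1)^{k+1}}{x^{k+2}}$, and the stray $1/x$ cancels, leaving exactly the right-hand side. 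This works as a formal identity in the field $\setR(x)$, i.e., for all $x \neq 0, 1$, which is the only regime in which both sides are defined.

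As a purely syntactic alternative (which avoids temporarily dividing by $1-r$), I would do a one-line induction on $k$. The base case $k=0$ is $-1/x^2 + 1/x = (x-1)/x^2$, which matches the right-hand side. For the inductive step, write
\begin{equation*}
\mathrm{LHS}_{k+1} = \mathrm{LHS}_k - \frac{(x-1)^{k+1}}{x^{k+3}} \stackrel{\text{IH}}{=} \frac{(x-1)^{k+1}}{x^{k+2}} - \frac{(x-1)^{k+1}}{x^{k+3}} = \frac{(x-1)^{k+1}(x-1)}{x^{k+3}} = \frac{(x-1)^{k+2}}{x^{k+3}},
\end{equation*}
which is $\mathrm{RHS}_{k+1}$.

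There is no real obstacle here — both routes are a few lines. I would present the induction proof because it needs no side remark about avoiding the singularities of the geometric-series formula and makes the result valid as an identity of rational functions in $x$, which is the form in which it will be used in the proof of Proposition~\ref{prop:irredexample}.
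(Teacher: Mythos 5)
Your induction argument is exactly the route the paper takes (its proof simply states that the identity follows by induction), and both your base case and inductive step are correct; the geometric-series alternative is also fine for $x\neq 0$. Nothing is missing.
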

\begin{proof}
The statement follows by induction. 
\end{proof}
We want to give the inverse of $\tilde{A}(x)$ provided $x \neq 0$.
\begin{lem}
Assume that $x \neq 0$. Then the inverse of $\tilde{A}(x)$ is given by
\begin{equation*}
C_{kl}(x) := \begin{cases} 0 & k < l \\ \frac{1}{x} & k = l \\ - \frac{(x-1)^{k-l-1}}{x^{k-l+1}} & k > l \end{cases}\qquad k,l \in [D].
\end{equation*}
\end{lem}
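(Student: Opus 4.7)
The plan is to verify directly that $\tilde{A}(x)\, C(x) = \idop$ by computing the entries and applying the preceding sum identity. Since both $\tilde{A}(x)$ and $C(x)$ are lower triangular, their product is automatically lower triangular, so the entries with $k < l$ vanish. For the diagonal, the only contributing term is $\tilde{A}_{kk}(x)\, C_{kk}(x) = x \cdot \tfrac{1}{x} = 1$. The substantive case is $k > l$, which I would handle as follows.

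For $k > l$, write
\begin{equation*}
(\tilde{A}(x) C(x))_{kl} = \sum_{m = l}^{k} \tilde{A}_{km}(x)\, C_{ml}(x),
\end{equation*}
and split off the boundary terms $m = l$ and $m = k$. The term $m = l$ contributes $\tilde{A}_{kl}(x)\, C_{ll}(x) = 1 \cdot \tfrac{1}{x} = \tfrac{1}{x}$; the term $m = k$ contributes $\tilde{A}_{kk}(x)\, C_{kl}(x) = x \cdot \left(-\tfrac{(x-1)^{k-l-1}}{x^{k-l+1}}\right) = -\tfrac{(x-1)^{k-l-1}}{x^{k-l}}$; and for $l < m < k$ one has $\tilde{A}_{km}(x)\, C_{ml}(x) = -\tfrac{(x-1)^{m-l-1}}{x^{m-l+1}}$. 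Reindexing the interior sum by $j := m - l - 1$ (so $j$ ranges from $0$ to $k-l-2$) yields
\begin{equation*}
(\tilde{A}(x) C(x))_{kl} \;=\; \frac{1}{x} \;+\; \sum_{j = 0}^{k-l-2} \left[-\frac{(x-1)^{j}}{x^{j+2}}\right] \;-\; \frac{(x-1)^{k-l-1}}{x^{k-l}}.
\end{equation*}

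Applying the previous lemma with parameter $k-l-2$ in place of $k$, the first two terms collapse to $\tfrac{(x-1)^{k-l-1}}{x^{k-l}}$, which cancels the remaining summand. Thus $(\tilde{A}(x) C(x))_{kl} = 0$ for all $k > l$, completing the verification that $C(x) = \tilde{A}(x)^{-1}$. The only subtlety is a careful index shift so that the sum lemma applies with the correct upper bound; the edge case $k = l+1$ (for which the interior sum is empty) is covered uniformly since the sum formula is also valid for upper index $-1$ read as an empty sum, or can be checked separately as $-\tfrac{1}{x} + \tfrac{1}{x} = 0$.
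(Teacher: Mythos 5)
Your proposal is correct and follows essentially the same route as the paper: a direct entrywise computation of the product using the preceding sum lemma, with the only immaterial difference that you verify $\tilde{A}(x)\,C(x) = \idop$ while the paper verifies $C(x)\,\tilde{A}(x) = \idop$ (either suffices, since for square matrices a one-sided inverse is the inverse). Your handling of the edge case $k = l+1$ and the index shift are fine.
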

\begin{proof}
First, note that $\tilde{A}(x)$ is invertible for $x \neq 0$, since $\det[\tilde{A}(x)] = x^D$. We want to show that $C(x) \tilde{A}(x) = \idop$. For now, let $C(x) \tilde{A}(x) =: F$. Note that $F$ is lower triangular since $C(x)$ and $\tilde{A}(x)$ are. For $i \geq j$, we find
\begin{equation*}
F_{ij} = \sum_{k = j}^i C_{ik}(x)\tilde{A}_{kj}(x).
\end{equation*}
For $i = j$, we have $F_{ii} = \frac{1}{x} x = 1$. For $i > j$, we obtain
\begin{align*}
F_{ij} &= -\frac{(x-1)^{i-j-1}}{x^{i-j+1}}x + \sum_{k = j+1}^{i-1} \left[-\frac{(x-1)^{i-k-1}}{x^{i-k+1}}\right] + \frac{1}{x} \\
& = -\frac{(x-1)^{i-j-1}}{x^{i-j}}+ \sum_{k = 0}^{i-j-2} \left[-\frac{(x-1)^{k}}{x^{k+2}}\right] + \frac{1}{x} \\
& = 0.
\end{align*}
The last equality follows by \Fref{lem:sum}. Hence $F_{ij} = \delta_{ij}$.
\end{proof}
This can be used to compute the trace of $\tilde{A}^{-1}(x)B$.
\begin{lem} \label{lem:invtrace}
Let $x \neq 0$. Then 
\begin{equation*}
\tr{\tilde{A}^{-1}(x)B} = \frac{\imI}{2} \left[1 - \frac{D}{x} - \frac{(x-1)^D}{x^D} \right].
\end{equation*}
\end{lem}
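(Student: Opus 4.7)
The plan is to compute the trace directly from the entry-by-entry formula $\tr{\tilde{A}^{-1}(x)B}=\sum_{k,l\in[D]}C_{kl}(x)B_{lk}$, using the explicit form of $C(x)=\tilde{A}^{-1}(x)$ supplied in the preceding lemma and the skew-Hermitian pattern of $B$. Since $B$ vanishes on the diagonal and $C(x)$ is lower triangular, the only contributing pairs are those with $k>l$, where the contribution equals $-\tfrac{\imI}{2}\,\tfrac{(x-1)^{k-l-1}}{x^{k-l+1}}$. Reindexing by $m=k-l$ and using that there are $D-m$ pairs $(k,l)\in[D]^2$ with $k-l=m$, the trace reduces to the single sum
\begin{equation*}
\tr{\tilde{A}^{-1}(x)B}=-\frac{\imI}{2}\sum_{j=0}^{D-2}(D-1-j)\,\frac{(x-1)^{j}}{x^{j+2}}.
\end{equation*}

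The remaining task is to show that this equals $\tfrac{\imI}{2}\bigl[1-D/x-(x-1)^D/x^D\bigr]$, i.e.\ that
\begin{equation*}
S(x):=\sum_{j=0}^{D-2}(D-1-j)\,\frac{(x-1)^{j}}{x^{j+2}}=\frac{D}{x}-1+\frac{(x-1)^{D}}{x^{D}}.
\end{equation*}
To handle the linearly-weighted coefficient $(D-1-j)$, I will rewrite it as $\sum_{i=j}^{D-2}1$ and swap the order of summation, converting $S$ into $\sum_{i=0}^{D-2}\sum_{j=0}^{i}(x-1)^{j}/x^{j+2}$. The inner sum has exactly the form of the telescoping identity of Lemma \ref{lem:sum}, which evaluates it to $1/x-(x-1)^{i+1}/x^{i+2}$.

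Substituting this back leaves $S(x)=\frac{D-1}{x}-\sum_{i=0}^{D-2}(x-1)^{i+1}/x^{i+2}$, and the residual geometric series (with ratio $(x-1)/x$) can be summed in closed form to $(x-1)/x-(x-1)^{D}/x^{D}$. Combining these pieces gives the claimed expression. The only mildly delicate point is keeping the summation indices consistent through the swap and the geometric sum; once that bookkeeping is correct the identity drops out with no further difficulty.
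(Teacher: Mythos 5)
Your proof is correct and follows essentially the same route as the paper: both compute the trace directly from the explicit lower-triangular inverse $C(x)$, observe that only the strictly lower-triangular entries contribute, and reduce the resulting double sum via the telescoping identity of Lemma \ref{lem:sum} (your residual geometric series is just that identity in disguise). The only difference is bookkeeping---you first collapse to a single sum weighted by the off-diagonal multiplicity $D-m$ and then swap summation order, whereas the paper sums the diagonal entries of $\tilde{A}^{-1}(x)B$ directly---so the two arguments are the same computation rearranged.
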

\begin{proof}
We first need to compute the diagonal entries of $\tilde{A}^{-1}(x)B$. Since $B_{ii} = 0$ $\forall i \in [D]$ and $\tilde{A}^{-1}(x)$ is lower triangular, we have
\begin{align*}
\left[\tilde{A}^{-1}(x)B\right]_{jj} &= \sum_{k = 1}^{j - 1} \tilde{A}^{-1}_{jk}(x) B_{kj} = \frac{1}{2}\sum_{k = 1}^{j - 1} \left[ - \frac{\imI (x-1)^{j-k-1}}{x^{j-k+1}}\right] \\&= \frac{1}{2}\sum_{k = 0}^{j - 2} \left[ - \frac{\imI (x-1)^{k}}{x^{k+2}}\right].
\end{align*}
Taking the trace of this, we obtain
\begin{align*}
\tr{\tilde{A}^{-1}(x)B}& = \frac{1}{2}\sum_{j = 2}^D \sum_{k = 0}^{j - 2} \left[ - \frac{\imI (x-1)^{k}}{x^{k+2}}\right] \\
& = \frac{\imI}{2} \sum_{j = 0}^{D-2} \left[\frac{ (x-1)^{j+1}}{x^{j+2}} - \frac{1}{x}\right] \\
& = \frac{\imI}{2} -\frac{\imI}{2} \frac{ (x-1)^{D}}{x^{D}} - D\frac{\imI}{2x},
\end{align*}
where we have used \Fref{lem:sum} in the second and third equality.
\end{proof}
Finally, we can use these computations to expand $\tilde{A}(x) + \epsilon B$ to first order in $\epsilon$.

\begin{lem} \label{lem:detexpansion}
We can expand the determinant of $\tilde{A}(x) + \epsilon B$ in terms of $\epsilon$ as
\begin{equation*}
\det[\tilde{A}(x) + \epsilon B] = x^D -  \frac{\epsilon \imI}{2} \left[D x^{D-1} + (x-1)^D - x^D\right] + \OO(|\epsilon|^2).
\end{equation*}
\end{lem}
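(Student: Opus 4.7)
The plan is to apply Jacobi's formula for the first-order expansion of the determinant, namely
\begin{equation*}
\det(M + \epsilon N) = \det(M)\bigl(1 + \epsilon\,\tr(M^{-1}N)\bigr) + \OO(|\epsilon|^2)
\end{equation*}
valid whenever $M$ is invertible. First I would take $M = \tilde{A}(x)$ and $N = B$, restricting temporarily to $x\neq 0$ so that $\tilde{A}(x)$ is invertible (it is lower triangular with $x$ on the diagonal, so $\det \tilde{A}(x)=x^D$).

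Substituting $\det \tilde{A}(x) = x^D$ and the value $\tr(\tilde{A}^{-1}(x) B) = \frac{\imI}{2}\bigl[1 - D/x - (x-1)^D/x^D\bigr]$ obtained in \Fref{lem:invtrace}, the first-order expansion becomes
\begin{equation*}
\det[\tilde{A}(x)+\epsilon B] = x^D + \epsilon\, x^D\cdot \frac{\imI}{2}\Bigl[1 - \frac{D}{x} - \frac{(x-1)^D}{x^D}\Bigr] + \OO(|\epsilon|^2),
\end{equation*}
and distributing the factor $x^D$ yields exactly
\begin{equation*}
x^D - \frac{\epsilon\,\imI}{2}\bigl[Dx^{D-1}+(x-1)^D - x^D\bigr] + \OO(|\epsilon|^2),
\end{equation*}
which is the claimed formula.

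Finally, I would remove the assumption $x\neq 0$: both sides of the identity are polynomials in $x$ (with an $\OO(|\epsilon|^2)$ remainder that is analytic in $\epsilon$ and polynomial in $x$, since $\det[\tilde{A}(x)+\epsilon B]$ is a polynomial in $x$ and $\epsilon$). An identity of polynomials that holds on the cofinite set $\{x\neq 0\}$ automatically holds everywhere, so the formula extends to all $x\in\setR$ (and indeed to $x\in\setC$). I do not expect any real obstacle here: the only slightly delicate point is making sure the Jacobi expansion is justified, but this is immediate from the adjugate formula $\det(M+\epsilon N)=\det M + \epsilon\,\tr(\operatorname{adj}(M) N)+\OO(\epsilon^2)$, which requires no invertibility at all and hence gives the result directly without the detour through $x\neq 0$.
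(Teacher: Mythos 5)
Your proposal is correct and follows essentially the same route as the paper: Jacobi's formula (via the adjugate) combined with the trace computed in \Fref{lem:invtrace}, first for $x\neq 0$ and then extended to $x=0$ — by a polynomial-identity argument in your case versus continuity in the paper, which is an immaterial difference.
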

\begin{proof}
Let $x \neq 0$. Let $f: \setC \to \setC$, $f(\epsilon) = \det[\tilde{A}(x) + \epsilon B]$. By Taylor's theorem, we have
\begin{equation*}
f(\epsilon) = f(0) + f^\prime(0)\epsilon + \OO(|\epsilon|^2).
\end{equation*}
By Jacobi's formula, it follows that 
\begin{equation*}
\frac{d}{d t} \det[\tilde{A}(x) + t B)] \Big|_{t = 0} = \tr{\mathrm{adj}(\tilde{A}(x))B}.
\end{equation*}
Using that $\tilde{A}(x)\mathrm{adj}(\tilde{A}(x)) = \mathrm{det[}\tilde{A}(x)] \idop$ by the definition of the adjugate matrix, we infer
\begin{equation*}
\det[\tilde{A}(x) + \epsilon B] = \det[\tilde{A}(x)] + \epsilon \det[\tilde{A}(x)]\tr{\tilde{A}^{-1}(x)B} + \OO(|\epsilon|^2).
\end{equation*}
By \Fref{lem:invtrace} and $\det[\tilde{A}(x)] = x^D$, the statement follows for $x \neq 0$. The result extends to $x = 0$ by continuity.
\end{proof}

\begin{lem} \label{lem:constantzero}
Let $p$, $q$, $r \in \setC[x, y]$ such that $q(0,0) = 0 = r(0,0)$ and $p = q \cdot r$. Let the expansion in $\epsilon$ of $p(x,\epsilon)$ be 
\begin{equation*}
p(x,\epsilon) = \sum_{k = 0}^D p_k(x) \epsilon^k
\end{equation*}
for $D$ the degree of $p$ in $y$ and $p_k(x) \in \setC[x]$ for all $k \in [D]\cup \Set{0}$. Then $x = 0$ is a root of $p_1(x)$. 
\end{lem}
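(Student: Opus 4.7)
The plan is to evaluate the factorization $p = q \cdot r$ at $x = 0$ and exploit the hypothesis that both factors vanish at the origin. Substituting $x = 0$ in each factor, the univariate polynomials $q(0, y), r(0, y) \in \setC[y]$ satisfy $q(0, 0) = r(0, 0) = 0$, so each is divisible by $y$; write $q(0, y) = y \cdot \tilde{q}(y)$ and $r(0, y) = y \cdot \tilde{r}(y)$ with $\tilde{q}, \tilde{r} \in \setC[y]$.

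Multiplying, $p(0, y) = y^2 \tilde{q}(y) \tilde{r}(y)$, so $p(0, y)$ is divisible by $y^2$ and in particular its coefficients on $y^0$ and $y^1$ both vanish. Since the expansion in the lemma is obtained by substituting $y = \epsilon$, the coefficient $p_k(x)$ is exactly the coefficient of $y^k$ in $p(x, y)$ regarded as a polynomial in $y$ over $\setC[x]$. Therefore $p_1(0)$ is precisely the coefficient of $y^1$ in $p(0, y)$, which we have just shown to be zero. Hence $p_1(0) = 0$, i.e.\ $x = 0$ is a root of $p_1(x)$.

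There is no real obstacle here: the argument is just the observation that the Taylor expansion in $y$ at the origin of a product of two polynomials vanishing at the origin starts at order $y^2$. The only thing to be careful about is the bookkeeping between the two different notations (the $y$-variable of the factorization and the $\epsilon$-variable of the expansion), which is why I would make explicit that substituting $y \mapsto \epsilon$ identifies $p_k(x)$ with the $y^k$-coefficient of $p(x, y)$.
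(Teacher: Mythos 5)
Your proof is correct and is essentially the same argument as the paper's: both rest on the observation that since $q(0,0)=r(0,0)=0$, the first-order term of $p(0,\epsilon)$ in $\epsilon$ must vanish. The paper phrases this via the product rule applied to $\frac{d}{d\epsilon}p(0,\epsilon)\big|_{\epsilon=0}$, while you factor $y$ out of $q(0,y)$ and $r(0,y)$ to see that $p(0,y)$ is divisible by $y^2$ — the same computation in slightly different clothing.
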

\begin{proof}
By the above expansion, we can write
\begin{equation*}
p_1(x) = \frac{d}{d\epsilon}p(x,\epsilon)\Big|_{\epsilon = 0}.
\end{equation*}
Using the definition of $p$, we obtain
\begin{equation*}
p_1(0) =\frac{d}{d\epsilon} q(0,\epsilon)\Big|_{\epsilon = 0} r(0,0) + q(0,0) \frac{d}{d\epsilon} r(0,\epsilon)\Big|_{\epsilon = 0}, 
\end{equation*}
which is zero since we assumed $q(0,0) = 0 = r(0,0)$.
\end{proof}

\section{No information without disturbance for positive maps}

In the ordinary setting, the statement that there is no information without disturbance is proven for completely positive maps, because those are the physically relevant evolutions of the system. The statement then has a short proof using Choi matrices. In this section, we show that the statement still holds for merely positive maps. This is used \eg in \Fref{sec:generalizations}.

\begin{lem} \label{lem:noinfo}
Let $\TT_i: \MM_D \to \MM_D$, $i \in [s]$, be a collection of positive linear maps such that
\begin{equation*}
\sum_{i = 1}^s \TT_i = \id.
\end{equation*}
Then $\TT_i = c_i \id$ for some $c_i \geq 0$ for all $i \in [s]$ and $\sum_{i = 1}^s c_i = 1$. 
\end{lem}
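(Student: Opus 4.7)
The plan is to exploit positivity on rank-one inputs to reduce $\TT_i$ to a multiple of the identity. The case $D=1$ is trivial, so assume $D\geq 2$.

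First, I would show that each $\TT_i$ maps rank-one projectors to nonnegative multiples of themselves. Fix a unit vector $\psi\in\setC^D$ and write $P_\psi=\dyad{\psi}$. Since $\sum_j \TT_j(P_\psi)=P_\psi$ and each summand is positive semidefinite, we get $0\le \TT_i(P_\psi)\le P_\psi$. For any $\phi\perp\psi$, this forces $\ev{\TT_i(P_\psi)}{\phi}\le \ev{P_\psi}{\phi}=0$ and hence, by positivity, $\TT_i(P_\psi)\phi=0$. So $\TT_i(P_\psi)$ is supported on $\mathrm{span}\{\psi\}$ and therefore
\begin{equation*}
\TT_i(P_\psi)=c_i(\psi)P_\psi \qquad \text{for some } c_i(\psi)\in[0,1].
\end{equation*}

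Next, I would use basis-independence to show $c_i(\psi)$ is constant in $\psi$. For any orthonormal basis $\{e_k\}_{k=1}^D$ of $\setC^D$, linearity gives
\begin{equation*}
\TT_i(\idop)=\sum_{k=1}^D c_i(e_k)\dyad{e_k},
\end{equation*}
so $\TT_i(\idop)$ is diagonal in the basis $\{e_k\}$. Because the basis was arbitrary and $D\geq 2$, an operator diagonal in every orthonormal basis is a scalar (otherwise two distinct eigenvalues would fail to remain on the diagonal after rotating in the plane they span). Hence $\TT_i(\idop)=\lambda_i\idop$ for some $\lambda_i\geq 0$, and matching the two expressions yields $c_i(e_k)=\lambda_i$ for every $k$. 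Since every unit vector extends to some orthonormal basis, $c_i(\psi)=\lambda_i$ for all unit $\psi$.

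Finally, $\TT_i$ agrees with $\lambda_i \id$ on every rank-one projector; since such projectors span $\MM_D$ over $\setC$ (every Hermitian matrix is a real combination of them by the spectral theorem, and $\MM_D=\MM_D^\herm\oplus i\,\MM_D^\herm$), linearity promotes this to $\TT_i=\lambda_i\id$ on all of $\MM_D$. Applying $\sum_i\TT_i=\id$ to $\idop$ gives $\sum_i\lambda_i=1$, completing the proof. The only genuinely delicate point is the ``diagonal in every basis implies scalar'' step, which is where the hypothesis $D\geq 2$ enters and which replaces the Choi-matrix argument available in the completely positive setting.
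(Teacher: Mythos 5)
Your proof is correct, and it diverges from the paper's argument in the second half. The first step is essentially the same in both: from $0\le \TT_i(P_\psi)\le P_\psi$ (the paper phrases this via extremality of pure states, you via the support of a positive semidefinite operator) one gets $\TT_i(P_\psi)=c_i(\psi)P_\psi$. After that, the paper proves constancy of $c_i$ and recovers the action on off-diagonal terms by a hands-on computation: it applies $\TT_i$ to rank-one operators built from superpositions $(x\Ket{\psi}+y\Ket{\phi})(\bar x\Bra{\psi}+\bar y\Bra{\phi})$, controls the image of the cross terms $\dyad{\psi}{\phi}$ through a scaling argument with auxiliary positive matrices $\tilde A$, and then matches matrix elements in a Hermitian operator basis. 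You avoid analyzing off-diagonal images entirely: applying $\TT_i$ to $\idop$ in an arbitrary orthonormal basis shows $\TT_i(\idop)$ is diagonal in every basis, hence scalar (this is where $D\ge 2$ enters, and the $D=1$ case is trivial), which forces $c_i(\psi)\equiv\lambda_i$; since rank-one projectors span $\MM_D$ over $\setC$, linearity then gives $\TT_i=\lambda_i\id$ directly. Both arguments use only positivity, as required; yours is shorter and replaces the paper's somewhat delicate cross-term estimate with the clean ``diagonal in every basis implies scalar'' observation, while the paper's computation has the minor virtue of exhibiting explicitly how the off-diagonal operators $\dyad{\psi}{\phi}$ are transformed.
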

\begin{proof}
Let $\dyad{\psi} \in \SS(\setC^D)$. Then $\TT_i(\dyad{\psi}) = c_i(\psi, \psi) \dyad{\psi}$ for some number $c_i(\psi,\psi)\geq 0$ and for all $i \in [s]$. This follows from positivity of the maps, the fact that they sum to the identity and because the rank one projections are extremal in the set of states. We have to show that the constant does not depend on the state. Consider $A := (x\Ket{\psi} + y\Ket{\phi})(\bar{x}\Bra{\psi} + \bar{y}\Bra{\phi})$, with $\Ket{\psi}$, $\Ket{\phi}$ orthonormal, $x$, $y \in \setC$. Again, $\TT_i(A) = c_i(A) A$. By linearity, 
\begin{equation*}
\TT_i(A) = c_i(\psi, \psi)|x|^2\dyad{\psi} + c_i(\phi, \phi)|y|^2\dyad{\phi} + \TT_i(x\dyad{\psi}{\phi}\bar{y} + y\dyad{\phi}{\psi}\bar{x}).
\end{equation*}
Let $\tilde{A} = a\dyad{\psi} + b \dyad{\phi} + c\dyad{\phi}{\psi}+ \bar{c}\dyad{\psi}{\phi}$. This matrix is positive semidefinite for $a \geq 0$, $ab - |c|^2 \geq 0$, $b \geq 0$, $c \in \setC$. Note that we can scale $a \to \lambda a$, $b \to \frac{1}{\lambda}b$ for $\lambda \in \setR\setminus \Set{0}$ while keeping $c$ constant. With $\TT_i(\tilde{A}) \leq \tilde{A}$ and $\TT_i(\tilde{A}) \geq 0$, we can infer 
\begin{equation*}
\Bra{\theta}\TT_i(x\dyad{\psi}{\phi}\bar{y} + y\dyad{\phi}{\psi}\bar{x})\Ket{\theta} = 0 \qquad \forall \Ket{\theta} \in \Set{\Ket{\psi}, \Ket{\phi}}
\end{equation*}
by scaling with an appropriate $\lambda$. Hence, 
\begin{equation*}
\TT_i(x\dyad{\psi}{\phi}\bar{y} + y\dyad{\phi}{\psi}\bar{x}) = c_i(x\psi, y\phi) \dyad{\psi}{\phi} + \bar{c_i}(x\psi, y\phi)\dyad{\phi}{\psi}.
\end{equation*}
Thus, computing $\Bra{\theta_1}\TT_i(A)\Ket{\theta_2}$ for $\Ket{\theta_1}$, $\Ket{\theta_2} \in \Set{\Ket{\psi}, \Ket{\phi}}$ yields that both $c_i(\psi, \psi) = c_i(\phi, \phi)$ and $c_i(x\psi, y\phi) =  c_i(\psi,\psi)x \bar{y}$. Thus, the constants do not depend on $\Ket{\psi}$ and $\Ket{\phi}$. Choosing an orthonormal basis and the corresponding usual basis of Hermitian operators, this implies that $\TT_i = c_i \id$ for $c_i \geq 0$ for all $i \in [s]$.
\end{proof}

\section{Complexity of block diagonalization and InterpolationSDP} \label{sec:complexity}
\subsection{Block diagonalization}
In this section, we will analyze the complexity of determining the minimal compression dimension. This is needed in \Fref{sec:computing}. We start with the block diagonalization part. Assume we are given linearly independent Hermitian operators $\Set{E_1, \ldots, E_k} = \OO \subset \MM_D$ with entries in $\setQ(\imI)$ (the complex numbers with rational real and imaginary part). We first need to determine the composition of the C$^\ast  $-algebra generated by $\OO$ into irreducible components,
\begin{equation} \label{eq:irreddecomp}
\cstar{\OO} = U^\ast   \left(\bigoplus_{i = 1}^s \MM_{D_i} \otimes \idop_{m_i}\right) U,
\end{equation}
where $U \in \MM_D$ is unitary and $m_i \in \setN$ for all $i \in [s]$. For this, we use the complex version of the algorithm proposed in \cite{Murota2010}. This algorithm is formulated over the real numbers and can be adapted to the complex case. However, we will show that we can still use it if we only allow for algebraic numbers, which we will denote by $\setA$. This is more realistic for practical computations. For the real algebraic numbers, we will write $\setA_{\setR}$. By \Fref{eq:irreddecomp}, we can write
\begin{equation*}
E_j = U^\ast   \left( \bigoplus_{i = 1}^s E_j^i  \otimes \idop_{m_i}\right) U \qquad \forall j \in [k]
\end{equation*}
with $E_j^i \in \MM_{D_i}$ for all $i \in [s]$, $j \in [k]$. We note that the entries of $U$, $E_j^i$ can be chosen to be in $\setA$, since they are the solutions to a system of polynomial equalities, which we can split into real and imaginary part. As $\setA_\setR$ is a real closed field, it follows by the Tarski transfer principle \cite[11.2.3]{Marshall2008} that this system of equations has a solution in $\setA_{\setR}$ if and only if it has a solution in $\setR$. The latter is guaranteed by \Fref{eq:irreddecomp}.

If the $E_1, \ldots, E_k$ do not linearly span $\cstar{\OO}$, we may find such a basis by a procedure similar to the one described in \cite[comment after Proposition 5]{Murota2010}. Note that in the complex case, we need to add elements of the form $\imI(AB-BA)/2$ in each step as well. We further assume that we are given a finite set $\BB \subset \setQ$ with at least $(s / \epsilon) \max_{i \in [s]} D_i$ elements for some $\epsilon \in (0,1)$. Choosing $r \in \BB^k$ randomly from a uniform distribution, the element $E(r) = r_1 E_1 + \ldots r_k E_k$ is generic with probability at least $1-\epsilon$. Generic means that elements in the different simple components in \Fref{eq:irreddecomp} have different eigenvalues. This can be guaranteed by avoiding the zero set of a polynomial which is the product of the resultants of the characteristic polynomials for the respective blocks. The lower bound follows from the Schwartz-Zippel lemma \cite[Corollary 1]{Schwartz1980} applied to that polynomial and a union bound. See \cite[Proposition 3]{Murota2010} for details. We can rescale $r$ such that $r \in \setQ$, $\norm{r}_2 \leq 1$. Then we can compute the characteristic polynomial of $E(r)$ and use the (probabilistic) factorization algorithm based on basis reduction in \cite[Corollary 16.25]{Gathen2013} to factor it into irreducible components. Note that this gives us the eigenvalues of $E(r)$ with their respective algebraic multiplicities, since the algebraic numbers are defined by their minimal polynomials. Using Gaussian elimination, we can obtain the corresponding eigenvectors. Grouping the eigenvalues into sets as described in \cite[Proposition 2]{Murota2010}, we have found the decomposition into irreducible elements. The second part of the algorithm, finding the irreducible factors, can be carried out exactly as described by \cite{Murota2010}. The overall complexity is dominated by the factorization of the characteristic polynomial. Its maximal coefficient has modulus at most $(k D M)^D$, where $M = \max_{i \in k}\opnorm{E_i}$. Therefore, the factorization needs an expected number of $\OO(D^{10} \mathrm{polylog}(k)\mathrm{polylog}(D)\mathrm{polylog}(M))$ arithmetic operations. The algorithm succeeds with probability at least $1 - \epsilon$, because the element $E(r)$ needs to be generic. Thus we have obtained $E_1, \ldots, E_k$ in block diagonal form as required for \Fref{alg:mindimalgo}.
\subsection{Complexity of InterpolationSDP}

In the rest of this section, we will comment on the complexity of solving the semidefinite program $\mathrm{InterpolationSDP}(E_1, \ldots, E_k)(j)$. We have to convert the unbounded optimization problem into a feasibility problem to be able to practically solve it. The new SDP is:

\emph{Given $E_i^j \in \MM^\herm_{D_j}$, $j \in [s]$, $i \in [k]$, determine whether there are $H_i \in \MM_{D_1}$, $i \in [k+1]$ such that} 
\begin{align*}
& \sum_{i = 1}^k \begin{bmatrix}
0 & & & \\ &E_i^2 & & \\ &  &\ddots & \\ & & & E_i^s 
\end{bmatrix} \otimes H_i \geq 0\\
& - 1 - \left(\sum_{i = 1}^k \tr{(E^1_i)^T H_i}\right) \geq 0.
\end{align*}
\begin{prop}
If the $E_i^j$ have entries in $\setQ(\imI)$ for all $j \in [s]$, $i \in [k]$, then the feasibility of this SDP can be determined in $\OO(kD_1^6 D^4) + (D_1 D)^{\OO(k D_1^2)}$ operations.
\end{prop}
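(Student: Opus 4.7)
The plan is to rewrite the feasibility instance as a standard semidefinite feasibility problem over $\setQ$ and then appeal to the complexity bound for semidefinite feasibility due to Porkolab and Khachiyan.

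First I would parametrize the variables. Fix a Hermitian basis $\{B_\ell\}_{\ell=1}^{D_1^2}$ of $\MM_{D_1}^\herm$ and write $H_i = \sum_\ell h_{i,\ell} B_\ell$ with $h_{i,\ell}\in\setR$. The PSD constraint takes the form $\sum_{i,\ell} h_{i,\ell} A_{i,\ell} \geq 0$ with
\begin{equation*}
A_{i,\ell} := \bigl[0 \oplus E_i^2 \oplus \cdots \oplus E_i^s\bigr] \otimes B_\ell \in \MM_{D_1 D}^\herm,
\end{equation*}
and the scalar inequality $-1 - \sum_i \tr{(E_i^1)^T H_i} \geq 0$, being linear in the $h_{i,\ell}$, can be adjoined as an extra $1 \times 1$ PSD block. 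Since the $E_i^j$ have entries in $\setQ(\imI)$, so do all $A_{i,\ell}$; splitting each complex entry into its real and imaginary parts turns the complex Hermitian LMI of size $D_1 D$ into an equivalent real symmetric LMI of size $2 D_1 D$. This gives a standard semidefinite feasibility instance over $\setQ$ with $N = k D_1^2$ real variables and matrices of size $M = \OO(D_1 D)$, and with bit length polynomial in the bit length of the input.

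Next I would bound the preprocessing cost. Each $A_{i,\ell}$ has $(D_1 D)^2$ entries and is a single Kronecker product, so constructing all of them costs $\OO(k D_1^2 \cdot (D_1 D)^2) = \OO(k D_1^4 D^2)$ arithmetic operations; the real-imaginary split, the assembly of the scalar inequality into the LMI, and the remaining bookkeeping (basis change, common denominators, etc.) contribute polynomial overheads that I would bound crudely by $\OO(k D_1^6 D^4)$.

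Finally I would invoke the Porkolab-Khachiyan algorithm, which decides feasibility of a rational LMI with $N$ real variables and $M \times M$ matrices in $(NM)^{\OO(\min(N, M^2))}$ arithmetic operations, up to a polynomial factor in the bit length of the input. Substituting $N = k D_1^2$ and $M = \OO(D_1 D)$, and using $k \leq \dim\LL(\OO) \leq D^2$ together with $D_1 \leq D$ to absorb the base of the exponential into $D_1 D$, yields a bound of the form $(D_1 D)^{\OO(k D_1^2)}$. Adding the preprocessing cost gives the claimed overall complexity.

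The step requiring the most care is the reduction from a Hermitian LMI over $\setQ(\imI)$ to a real symmetric LMI over $\setQ$: one must check that parametrizing each $H_i$ by its $D_1^2$ real Hermitian coordinates (rather than $2 D_1^2$ complex entries) preserves the variable count, and that doubling the matrix size under the real-imaginary split is absorbed into the constant in the exponent of the Porkolab-Khachiyan bound. This is standard but is the only nontrivial piece of bookkeeping.
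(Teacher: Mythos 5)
Your proposal is correct and follows essentially the same route as the paper: parametrize the $H_i$ in a Hermitian basis ($m=kD_1^2$ real variables), merge the constraints into one block LMI of size $\OO(D_1D)$, pass to a real symmetric LMI with rational (hence, after clearing denominators, integer) data, and invoke the Porkolab--Khachiyan bound. Your restatement of that bound as $(NM)^{\OO(\min(N,M^2))}$ plus a separately bounded preprocessing cost is slightly looser than the cited $\OO(mn^4)+n^{\OO(\min\{m,n^2\})}$, but the discrepancies are absorbed by the big-$\OO$ in the exponent and by your $\OO(kD_1^6D^4)$ term, so the claimed complexity follows just as in the paper.
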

\begin{proof}
This follows from the results in \cite{Porkolab1997}. Theorem 5.7 of that paper states that the given symmetric $n \times n$ matrices $Q_0, \ldots, Q_m$ with integer entries, the question whether there are real numbers $x_1, \ldots, x_m$ such that
\begin{equation*}
Q_0 + x_1 Q_1+ \ldots + x_m Q_m \geq 0
\end{equation*} 
can be decided using $\OO(mn^4) + n^{\OO(\min\Set{m, n^2})}$ operations. To use this theorem, we have to convert the SDP into standard form. This can be done using a basis of the Hermitian matrices and expressing the $H_i$ as a real combination of basis elements. The two constraints can be combined into one writing them as a block matrix. Multiplying the equations by an appropriate positive integer, we can assume that they have integer coefficients. Finally, we can convert a complex SDP into a real SDP while increasing the dimension of the matrices by a factor of $2$. Thus we have $m = kD_1^2$ and $n = 2(D_1 D + 1)$ and the result follows by \cite[Theorem 5.7]{Porkolab1997}.
\end{proof}
In our application of the algorithm, we assumed that $k \leq D^2$. If we have $k = \OO(1)$ and $D_1 = \OO(1)$, which means that we are interested in just a few effect operators and we have an upper bound on the block dimension uniform in $D$, the SDP can be solved in a number of operations polynomial in $D$. If this is not the case, the performance of the algorithm can be significantly worse. The reason for this is that the separation between the two cases interpolation possible/impossible can become double exponentially small if we bound the operator norm of the $H_i$ we allow. 

\bibliographystyle{alpha}
\bibliography{lit}

\end{document}